\theoremstyle:=definition,remark,plain\do{%
        \expandafter\g@addto@macro\csname th@\theoremstyle\endcsname{%
            \addtolength\thm@preskip\parskip
            }%
        }
  \newcommand{\R}{\ensuremath{\mathbb{R}}}
  \newcommand{\E}{\ensuremath{\mathbb{E}}}
  \newcommand{\C}{\ensuremath{\mathbb{C}}}
  \newcommand{\I}{\ensuremath{\mathbb{I}}}
  \newcommand{\Z}{\ensuremath{\mathbb{Z}}}
  \newcommand{\Pc}{\mathcal{P}}
  \newcommand{\Hc}{\mathcal{H}}
  \newcommand{\Wc}{\mathcal{W}}
  \newcommand{\Vc}{\mathcal{V}}
  \newcommand{\Ec}{\mathcal{E}}
\newcommand{\tr}{\text{tr }}
\newcommand{\V}[1]{\ensuremath{\mathbf{#1}}}
\newcommand{\norm}[1]{\left|\left| #1 \right|\right|}
\newcommand{\abs}[1]{\left \vert #1 \right \vert}
\newcommand{\TODO}[1]{ 
  \ifx\NOTES\undefined\else
	{\tt \color{red} [TODO:#1] } 
  \fi
}
\newcommand{\one}{\ensuremath{\mathbbm{1}}}
\newcommand{\NOTE}[1]{ 

\ifx\NOTES\undefined\else
  \footnote{ {\color{blue} NOTE: #1}}  
\fi
}
\newcommand{\ecomment}[1]{ 
\ifx\NOTES\undefined\else 
{\color{blue}[E]}\footnote{ {\color{blue} Eran: #1}}
\fi
}
\newcommand{\mcomment}[1]{ 
\ifx\NOTES\undefined\else
  {\color{green} [M]}\footnote{ {\color{green} Matan: #1}}  
\fi
}
\newcommand{\aslim}{\stackrel{a.s.}{\rightarrow}}
\newcommand{\aseq}{\stackrel{a.s.}{=\joinrel =}}
\newcommand{\iid}{\stackrel{\text{iid}}{\sim}}
\newcommand{\Pd}[3]{\ifthenelse{\equal{#3}{1}}{\frac{\partial #1}{\partial #2}}{\frac{\partial^{#3} #1}{\partial #2^{#3}}}}
\theoremstyle{plain}
\newtheorem{theorem}{Theorem}
\newtheorem{proposition}{Proposition}
\newtheorem{lemma}{Lemma}
\theoremstyle{definition}
\newtheorem{definition}{Definition}
\newtheorem{assumption}{Assumption}
\theoremstyle{remark}
\newtheorem{remark}{Remark}
\newtheorem{example}{Example}
\newcommand{\Gr}{\ensuremath{\mathbb{G}}}
\title{The Noise-Sensitivity Phase Transition in Spectral Group
	Synchronization  Over Compact Groups}
\date{}
\author[1]{Elad Romanov \thanks{E-mail: elad.romanov@mail.huji.ac.il}}
\author[1]{Matan Gavish \thanks{E-mail: gavish@cs.huji.ac.il}} 
\affil[1]{School of Computer Science and Engineering, The Hebrew University, 
Jerusalem, Israel}
\def \NOTES{}
\begin{document}
	
	\maketitle

	\begin{abstract}

	In Group Synchronization, one attempts to find a collection of unknown group
	elements from noisy measurements of their pairwise differences.  Several
	important problems in vision and data analysis reduce to group
	synchronization over various compact groups.  Spectral Group Synchronization
	is a commonly used, robust algorithm for solving group synchronization
	problems, which relies on diagonalization of a block matrix whose blocks are
	matrix representations of the measured pairwise differences.  Assuming
	uniformly distributed measurement errors, we present a rigorous analysis of
	the accuracy and noise sensitivity of spectral group synchronization
	algorithms over any compact group, up to the rounding error.
	We identify a Baik-Ben Arous-P\'ech\'e
	type phase transition in the noise level, beyond which spectral group
	synchronization necessarily fails.  Below the phase transition, spectral
	group synchronization succeeds in recovering the unknown group elements, but
	its performance deteriorates with the noise level. We provide asymptotically
	exact formulas for the accuracy of spectral group synchronization 
	below the phase transition, up to the rounding error.
	We also provide a consistent risk estimate, allowing practitioners to
	estimate the method's accuracy from available measurements.

%
	\end{abstract}

	{\bf Acknowledgments.}
	We thank Yoel Shkolnisky and Amit Singer for helpful discussions.
	We thank Nati Linial for random help on random graphs.
	We are indebted to the anonymous reviewers for their many valuable comments and suggestions.
	ER was partially supported by 
	Israeli Science Foundation grant no. 1523/16 and the HUJI Leibniz center.
	ER and MG were partially supported by 
	United States – Israel Binational Science Foundation grant no. 2016201.

	{\bf Reproducibility advisory.}
	All the experimental results cited in this paper are fully reproducible.
	Code to generate all the figures included in this paper and their underlying
	data can be found in
	the code supplement
	\cite{CODE}.

	\newpage
	\tableofcontents

	\section{Introduction} \label{sec:intro}


	In group synchronization, one attempts to recover unknown group elements
	$g_1,\ldots,g_n\in \Gr$ from incomplete, noisy measurements 
	of their group differences $\{g_i g_j^{-1}\}_{(i,j)\in \Lambda}$, with 
	$\Lambda\subset \left\{ (i,j) \,|\,1\leq i<j\leq n \right\}$ the set of available
	pairwise difference measurements.
	%
	Numerous problems in signal processing, computer vision and machine learning
	can be cast as group synchronization problems for an appropriate choice of
	group $\Gr$. Examples include molecular structure determination by
	Cryo-electron microscopy \cite{singer2011three} 
	(synchronization over the special orthogonal group $SO(3)$); determination of
	Structure From Motion  in computer vision \cite{tron2016survey,
	arrigoni2015spectral,bernard2015solution}
	(synchronization over $SO(3)$);
	pose graph estimation \cite{carlone2015initialization} (synchronization over 
	$SO(3)$); sensor
	localization \cite{cucuringu2012sensor,peters2015sensor,tron2009distributed} (synchronization over 
	the special Euclidean group $SE(d)$);
	community detection in graphs
	\cite{cucuringu2015synchronization} (synchronization over 
	$\mathbb{Z}_2$); ranking \cite{cucuringu2016sync} (synchronization over $SO(2)$); multireference
	alignment in signal processing \cite{bandeira2014multireference}
	(synchronization over the finite
	cyclic group $\mathbb{Z}_p$); global alignment in dynamical systems
	\cite{sonday2013noisy} and network clock synchronization
	\cite{giridhar2006distributed} (synchronization over $\R$
	or a large finite cyclic group). 

Group synchronization is fundamentally an {\em integration} problem, in which one
	attempts to recover a global structure from measured local
	differences. However, unlike calculus integration,  in which 
	the local differences are
	measured on a grid in $\R^d$, in group synchronization
	the differences are measured along the edges of a graph,
	whose nodes are the group elements $g_1,\ldots,g_n$ and whose edges  
	correspond to the set of available difference measurements 
	$\Lambda\subset \left\{ i,j \,|\,1\leq i< j\leq n \right\}$.
	It is instructive to think about group synchronization as a generalization 
	of the classical  problem of 
	{\em global positioning
	from local distances} problem \cite{young1938discussion}, in which
	one attempts to recover as unknown point cloud
	$\V{x}_1,\ldots,\V{x}_n\in\R^d$ from their pairwise
	Euclidean distances 
	$d_{i,j}=\norm{\V{x}_i-\V{x}_j}^2$. Indeed, if instead of the distances $d_{i,j}$ we were given the actual displacements 
	$\V{x}_i-\V{x}_j$, then the problem would correspond exactly to synchronization over the (non-compact) group $\R^d$ (see, e.g, \cite{barooah2007estimation}). 
	
	In the absence of noise, namely, when the set of available measurements is
	exactly $\{g_i g_j^{-1}\}_{(i,j)\in \Lambda}$,  the group synchronization problem
	can be solved if and only if the graph on ${g_1,\ldots,g_n}$ with edge set
	$\Lambda$ is connected. Indeed, if the graph is connected, the group elements
	$g_1,\ldots,g_n$  are recovered by traversing any spanning tree of edges in
	the graph. Note that the problem is only defined up to a global
	``phase'', or global group element, in the sense that if $g_1,\ldots,g_n$ is
	a solution to the problem, then so is $a g_1,\ldots, a g_n$, for any
	$a\in\Gr$. Conversely, if the graph is not connected, the solution is
	determined up to such a phase in each connected component separately,
	and is therefore ill-posed. 
	
	However, when the difference measurements $g_i g_j^{-1}$ are contaminated by
	measurement noise, it is not immediately clear how to proceed. The
	problem then becomes important from a practical perspective and interesting from
	a theoretical perspective. Indeed, it has received considerable attention
	recently: various algorithms have been proposed,
	some based on semidefinite programming \cite{singer2011angular,wang2013exact}, 
	maximum likelihood \cite{boumal2013robust,bandeira2017tightness}, non-unique games
	\cite{bandeira2015non} and message passing in graphs
	\cite{perry2016message}.
	
	In this paper, we focus on the spectral method for group synchronization
	\cite{singer2008remark,singer2011angular}, or
	{\em Spectral Group Synchronization} for short (the work \cite{yu2012angular} is also related).  This method can be used when
	the group $\Gr$  admits a
	finite-dimensional, faithful, unitary representation $\pi:\Gr\to U(d)$. As described in more detail below, 
	this method proceeds by extracting the desired group elements $g_1,\ldots,g_n$ 
	from the top eigenvectors of the Hermitian 
	$nd$-by-$nd$ block matrix $Y$, whose $i,j$-th
	block is $\pi(g_i g_j^{-1})$. 
	It is easy to implement, and basically reduces to power iterations, 
	making it quite feasible and scalable in practice. 
	We note, however, that the SDP instances arising from group synchronization are known to be
	efficiently solvable using Burer-Monteiro-type relaxations \cite{boumal2016burermonteiro}, see also \cite{bandeira2017tightness,rosen2019se}. 
	Alternatively, in some cases where the power method is used to compute the top eigenvectors, 
	generalized power methods (where in each iteration, one also performs a projection step onto the group) may also be applicable, see for example \cite{zhong2018near}. Another interesting variation is \cite{carmona2013analytical}, which essentially considers the spectral method for $SO(3)$ synchronization using quaternions for the representation. 
	That being said, the spectral method still has the upper-hand in terms of simplicity, making it popular among practitioners.
	 We remark that spectral methods for integration of a global structure
	 from noisy, local difference measurements have a time-honored tradition.
	 Indeed, almost all Euclidean data embedding techniques, from
	 Multidimensional Scaling \cite{Torgerson1952a,gower1966some} to 
	  manifold learning methods such as  
	  \cite{tenenbaum2000Isomap,Belkin2001LaplacianEigMaps,coifman2006diffusion},
	 basically reconstruct a global Euclidean structure using the eigenvectors of a local
	 difference matrix. 

	 Spectral group synchronization can
	 always be used when $\Gr$ is compact; recently,
	 \cite{ozyesil2016synchronization} proposed a compactification scheme that
	 allows it to be used for non-compact group such as the special Euclidean
	 (rigid motion) group. For the remainder of this paper, we will assume that
	 synchronization takes place over an arbitrary compact group $\Gr$.

	 Clearly, as the measurement noise level rises, any method for group
	 synchronization over $\Gr$ should suffer a performance loss, and, possibly, even break down completely
	 once the noise level exceeds some threshold. Practitioners have commented that spectral group
	 synchronization is quite robust to noise; still, and despite its widespread
	 use, the literature currently does not offer a systematic treatment of its
	 noise sensitivity.
	 
	 There are several possible ways to model measurement noise. Here, we adopt
	 the noise model proposed by  
	 \cite{singer2011angular}. In this model, each of the measurements 
	 $\{g_i g_j^{-1}\}_{(i,j)\in \Lambda}$ is exact with equal probability $p$, 
	 and is corrupted with probability $1-p$. Corrupt measurements are assumed 
	 to be distributed uniformly on $\Gr$, namely, sampled from its 
	 Haar measure. Under this model,
	 \cite{singer2011angular} has proposed a strategy
	 for formal analysis of spectral group synchronization over
	 $SO(2)$, based on ideas from random matrix theory.
	 This paper implements that strategy and presents a systematic and fully 
	 rigorous analysis of the noise sensitivity of spectral group
	 synchronization over an arbitrary compact group $\Gr$. We show that, as
	 predicted by 
 \cite{singer2011angular,tzeneva2011global,boumal2014optimization},
 spectral group synchronization
	 exhibits an asymptotically sharp phase transition similar to the 
	  Baik-Ben Arous-P\'ech\'e phase transition \cite{baik2005phase}. This phase
	  transition coincides with the breakdown point of  spectral
	  group synchronization, namely, with the critical 
	  noise level, beyond which it necessarily
	  fails to recover the group elements $g_1,\ldots,g_n$. We further 
	  combine 
	recent techniques from random
	matrix theory and elementary facts on group representations to derive
	rigorous and asymptotically exact results on the noise sensitivity 
	of spectral group
	synchronization (up to the rounding error\footnote{	As discussed in more detail below, spectral group synchronization concludes
	with a {\em rounding} step, in which group elements are identified by
	rounding each $d$-by-$d$ block, obtained from the $d$ top 
	eigenvectors of $Y$, to the
	nearest matrix in the representation $\pi(\Gr)$. Error analysis of this step
is necessarily group-specific and is beyond our present scope.}) when the noise level is below the critical threshold.

	\section{Notation and setup} \label{sec:setup}

	\paragraph{Observations.}
	Let $\Gr$ be a compact group, equipped with its normalized Haar measure, and some
	$d$-dimensional, non-trivial, faithful (that is, no two different group elements are mapped into the same matrix),
	unitary representation $\pi : \Gr \to U(d)$.
	Assume that  $g_1,\ldots,g_n$ are 
	$n$ unknown group elements to be recovered. As $\pi$ is faithful\footnote{Our results hold just as well when the representation $\pi$ is not faithful. In that case, since we only interact (in terms of the measurements and the fidelity metric) with the group $\Gr$ via the representation $\pi$, the problem could actually be thought of as synchronization over $\Gr/ Ker(\pi)$, where $Ker(\pi)=\left\{g \,:\,\pi(g)=Id \right\}$.}, the
	problem reduces to recovery of 
		their images $\pi(g_1),\ldots,\pi(g_n) \in U(d)$.
%
	%
		Write $[n]=\left\{ 1,\ldots,n \right\}$ and let
	$\Lambda$ be an \emph{undirected} graph whose vertex set is $[n]$ (we don't allow self-loops). With a slight abuse of notation, by saying $(i,j)\in\Lambda$ we mean that there is an edge connecting $i$ and $j$, and this is exactly the same as saying $(j,i)\in\Lambda$ (that is, the order of $i$ and $j$ doesn't matter).
Let $\{ g_{i,j}\}_{(i,j)\in\Lambda} \subset \Gr $ denote the measured group
differences. The task at hand
is to recover $\pi(g_1),\ldots,\pi(g_n)$ from the available observations 
$\{ g_{i,j}\}_{(i,j)\in\Lambda} $.
%
\paragraph{Noise model.}	Following \cite{singer2011angular}, we consider the case when $\Lambda$ is a
	random Erd\H{o}s--R\'enyi graph, where each edge appears independently
	with some probability $q$, and where
	the corrupted measurements are chosen uniformly at random.  
	More specifically, conditioned on $(i,j)\in \Lambda$, sticking to the convention $i<j$, with
	probability $p$ we measure the real group difference 
	$g_{ij} = g_i g_j^{-1}$. 
	Otherwise, we measure a random group element, in the sense that 
	$g_{ij}\iid \textrm{Haar}$ 
	is sampled from the normalized Haar measure on $\Gr$ \footnote{We provide concise
		background on the Haar measure and other related group-theoretic
notions in the Appendix.}. We will sometimes refer to this form of noise by "outlier-type" corruption. 
We also consider an extension of this noise model, where each observed measurement is also corrupted by some \emph{additive} noise. That is, conditioned on $(i,j)\in\Lambda$, instead of being given $\pi(g_{ij})$ (where $g_{ij}$ is either the true difference or a random group element, as before), we measure a matrix of the form $\pi(g_{ij})+\epsilon_{ij}$, where $\epsilon_{ij}$ is a $d\times d$ i.i.d noise matrix, each of whose entries is Gaussian (real or complex \footnote{Recall that a complex Gaussian random variable with mean $0$ and variance $1$ is of the form $X+iY$, where $X,Y\sim N(0,1/2)$}) with mean $0$ and variance (absolute second moment) $\sigma/\sqrt{d}$ (the normalization by $\sqrt{d}$ is convenient). In the sequel, we will compute asymptotic results as the number of measurements $n\to\infty$. We will allow the probabilities $p=p_n$ and $q=q_n$ and additive noise intensity $\sigma=\sigma_n$ to vary with $n$, at a rate which will be made precise later.
 

	\paragraph{The Spectral Group Synchronization method.}
	Introduced in \cite{singer2011angular}, the method proceeds as
	follows. Define the 
	$nd$-by-$nd$ Hermitian block matrix $Y$ with $d$-by-$d$ blocks
	$y_{ij}=\pi(g_{ij})$,
for $i<j$, and of course $y_{ji}=y_{ij}^*$. On the block-diagonal, we always have $y_{ii}=I$ (the $d\times d$ identity matrix). 
%
%
Here, $g_{ij}$ is the measurement corresponding to the $(i,j)$ edge, 
and by convention $y_{ij}=0$ if $(i,j)\notin \Lambda$.
Observe that in the fully-observed, noiseless case, namely the case where
$q=1$, $p=1$ and $\sigma=0$, $Y$ is a rank-$d$ matrix that decomposes as $Y=XX^{*}$ where $X \in \C^{nd \times d}$ is the block matrix
	\[
	X = \begin{bmatrix}
		\pi(g_1) \\
		\vdots \\
		\pi(g_n)
	\end{bmatrix} \,.
	\]
	Since
	the blocks $\pi(g_{i})$ are all unitary matrices, 
	the columns of $X$ are orthogonal vectors with norm $\sqrt{n}$. This make
$Y=XX^{*}$ an eigen-decomposition of $Y$, where there is a single
$d$-dimensional 
eigenspace of dimension $d$, corresponding to the eigenvalue $n$ and spanned by the columns of $X$. 
This suggests that even in the noisy case, the top $d$ eigenvectors of $Y$ should
approximate $X$ well. 
	
Let $\tilde{X}$ be the $nd$-by-$d$ 
matrix whose columns are the top $d$ eigenvectors
of $Y$. We think of $\tilde{X}$ as a block matrix with a single column of $n$
blocks, each $d$-by-$d$. In the noisy case, these blocks do not necessarily
correspond to elements of the representation $\pi$. 
Ideally, the spectral method should therefore conclude
with a {\em rounding} step, 
in which we produce actual group elements $\hat{g}_1,\ldots,\hat{g}_n$ from $\tilde{X}$.
Denote by $\hat{X} \in \C^{nd\times d}$ the block matrix 
	\[
	\hat{X} = \begin{bmatrix}
		\pi(\hat{g}_1) \\
		\vdots \\
		\pi(\hat{g}_n)
	\end{bmatrix} \,,
	\]
	(as $X$ was for $g_1,\ldots,g_n$). As discussed below, 
	the choice of rounding algorithm -- an important component of the overall
	synchronization method -- depends on the specific group at hand.

\paragraph{Reconstruction quality.}
We measure the quality of our reconstruction $\hat{g}_1,\ldots,\hat{g}_n$ using  the \emph{average squared alignment error},
\begin{equation}
\label{eq:mse_def}
\begin{split}
	MSE(X,\hat{X}) 
	&= \frac{1}{n^2}\sum_{i,j=1}^n \norm{\pi(g_i g_j^{-1}) - \pi(\hat{g}_i \hat{g}_j^{-1})}_F^2 \\
	&= \norm{ \frac{1}{n} XX^{*} - \frac{1}{n}\hat{X}\hat{X}^* }_F ^2 \,.
\end{split}
\end{equation}
Note that the error $MSE(X,\hat{X})$ cannot be directly computed from the data, as some of the real pairwise differences $g_i g_j^{-1}$ may be corrupted or missing.

\paragraph{MSE proxy.}
To the best of our knowledge, it is very difficult to give any meaningful bounds on the performance of any computationally feasible rounding procedure, beyond very limited special cases (most prominently, $\Gr = U(1)=\left\{z\in\C\,:\,\abs{z}=1 \right\}$). We mention two natural rounding strategies in Subsection \ref{subsection:rounding} below. 
To circumvent this difficulty, and keep the discussion in the generality of an arbitrary compact group, we shall assume that we have access to some rounding procedure that produces a rounding error
\[
R(\hat{X},\tilde{X}) =
\norm{\frac{1}{n}\hat{X}\hat{X}^*-\tilde{X}\tilde{X}}^2_F\,,
\]
which is negligible with respect to the overall MSE.
Specifically, we will assume that 
the error $MSE(X,\hat{X})$ is dominated by the \emph{MSE proxy},
\begin{equation}
\begin{split}
\overline{MSE}(X,\tilde{X}) 
&= \norm{\frac{1}{n}XX^*-\tilde{X}\tilde{X}^{*}}_F^2 \\
&= 2d - 2 \tr \left( (XX^{*}/n)(\tilde{X}\tilde{X}^*) \right) \,,
\end{split}
\end{equation}
which essentially measures the degree to which the space spanned by the top $d$
eigenvalues of $Y$ aligns with the column span of $X$. 
In Section \ref{subsect:experiment5} we report numerical evidence that the
rounding error is 
indeed negligible for $\Gr=\mathbb{Z}_2$ and for $\Gr=O(3)$, each with a
straightforward rounding scheme. 
Since
\[
\sqrt{\overline{MSE}}-\sqrt{R} \le \sqrt{MSE} \le
\sqrt{\overline{MSE}}+\sqrt{R} \,,
\] 
under this assumption it is enough to study the MSE proxy.

		\paragraph{Contributions.}
Our main result is the precise asymptotic behavior 	of the MSE proxy
	in the large sample limit $n\to\infty$, where we consider a sequence of probabilities $p_n$, $q_n$, and an additive noise intensity $\sigma_n$, all depending on $n$. We assume that our measurement graph $\Lambda$ is \emph{dense}, with an average degree of order $\Omega\left(\log^c(n)\right)$, where $c>1$ is sufficiently large.  
	(formally, as already mentioned, we assume that the graph of available measurements is an Erd\H{o}s--R\'enyi 
	graph where every edge appears with probability $q_n$). 
	Generalization of
	our analysis to the case of a sparser random graph (say $q_n =
	a\log(n)/n$, for a large number $a>1$, which would already guarantee that the graph $\Lambda$ is connected with an overwhelming probability) seems to be currently out of the reach for our tools. To be more precise, we will require,
	\begin{assumption}
	\label{assum:prob}
		The sequence of probabilities $q_n$ and $p_n$ satisfy $q_n p_n = \omega \left( \log^{c}(n)/n \right)$, where $c>1$ is some universal constant, which will not dependent on $\Gr$ or the choice of representation $\pi$ (or on $d$). In particular, for any $\delta>0$, the condition $p_n q_n = \Omega\left(n^{-1+\delta}\right)$ already suffices. Trying to optimize for the best $c>1$ that could possibly be guaranteed by our technique appears to be a somewhat messy task of very little interest, and we do not attempt to undertake it (it requires, for a start, more precise estimates for a certain combinatorial calculation that we use in the proof).
	\end{assumption}
	As for the representation $\pi$, we will assume throughout the rest of this paper,
	\begin{assumption}
	\label{assum:irreducible}
		The representation $\pi : \Gr \to U(d)$ is non-trivial and irreducible.
	\end{assumption}

	Under these assumptions, in the case where there is no additive noise ($\sigma_n=0$), we observe a recoverability phase-transition at
	scale $p_n \sqrt{q_n} \sim 1/\sqrt{n}$. In particular, when $q_n=\Omega(1)$, our
	results imply that the spectral method is remarkably robust to an order of
	$\sim (1-1/\sqrt{n})\abs{\Lambda}$ outlier-type corruptions, a fact already observed, in
	the case where $\Gr$ is the unit circle, in the original analysis of \cite{singer2011angular}. 


	\subsection{Some remaks on the rounding step}
	\label{subsection:rounding}
	
	We identify two sensible rounding strategies to produce our estimates $\hat{g}_1,\ldots,\hat{g}_n$ from the matrix of $d$ top eigenvectors, $\tilde{X}$:
	
	\paragraph{Ideal rounding.} Considering our error criterion \eqref{eq:mse_def}, the natural rounding procedure would be to take
	\begin{equation}
	\label{eq:rounding_ideal}
	\hat{g}_1,\ldots,\hat{g}_n \in \arg\min \norm{\frac{1}{n}\hat{X}\hat{X}^{*}-\tilde{X}\tilde{X}^{*}}_F^{2} \,,
	\end{equation}
	where recall that $\hat{X}$ is a column block matrix whose blocks are $\pi(\hat{g}_i)$. 
	This rounding procedure bears a strong resemblance to the \emph{least squared error} estimator,
	\begin{equation*}
	\begin{split}
	\hat{g}_1^{LSE},\ldots,\hat{g}_{n}^{LSE} 
	&\in \arg\min \sum_{i<j\,:\,(i,j)\in \Lambda} \norm{\pi(\hat{g}_i \hat{g}^{-1}_j) - y_{ij}}_F^2 \,.
	\end{split}
	\end{equation*}
	Unfortunately, ideal rounding can be, in general, a computationally hard problem. For instance, when $\Gr = \mathbb{Z}_2$, the ideal rounding problem basically amounts to solving 
	\[
	\arg\max_{g_1,\ldots,g_n\in \left\{ \pm 1 \right\}} \tr (W \hat{X}\hat{X}^{*}) = \arg\max_{g_1,\ldots,g_n\in \left\{ \pm 1 \right\}} \sum_{i,j} w_{ij}g_i g_j
	\,,
	\]
	where $W=\tilde{X}\tilde{X}^{*}$.
	This essentially amounts to solving an instance of the MAX-CUT problem, which is generally known to be NP-complete.
	
	While ideal rounding is generally computationally infeasible, it is very easy to give a loose upper bound on the true MSE one gets, in terms of the MSE proxy: 
	\begin{equation}\label{eq:ideal_rounding_bound}
	\begin{split}
	MSE(X,\hat{X}) 
	&= \norm{\frac{1}{n}XX^*-\frac{1}{n}\hat{X}\hat{X}^*}_F^2 \\
	&\le \left( \norm{\frac{1}{n}XX^*-\tilde{X}\tilde{X}^*}_F + \norm{\frac{1}{n}\hat{X}\hat{X}^*-\tilde{X}\tilde{X}^*}_F \right)^2 \\
	&\le 4 \norm{\frac{1}{n}XX^{*}-\tilde{X}\tilde{X}^*}_F^2 = 4\cdot\overline{MSE}(X,\tilde{X}) \,,
	\end{split}
	\end{equation}
	since for ideal rounding, $\norm{\frac{1}{n}\hat{X}\hat{X}^*-\tilde{X}\tilde{X}^*}_F \le \norm{\frac{1}{n}XX^*-\tilde{X}\tilde{X}^*}_F$.

	\paragraph{Block-wise rounding.}
	Since in the noiseless case, the columns
	of $X$ \emph{are} the eigenvectors of $Y$, it makes sense 
	to round the matrix $\tilde{X}$ directly. That is, we take
	\begin{equation}
	\label{eq:roudning_block}
	\hat{g}_1,\ldots,\hat{g}_n \in \arg\min \norm{\hat{X} -
		\sqrt{n}\tilde{X}}_F \,. 
	\end{equation}  
	While still group-specific,  this is usually an easy problem; indeed, notice that the problem decouples across the optimization variables, so that
	\begin{equation}
	\hat{g}_i \in \arg\min_{g\in\Gr} \norm{\pi(g)-\sqrt{n}\tilde{X}_i}_F \,,	
	\end{equation}
	where $\tilde{X}_i \in \C^{d\times d}$ is the $i$-th block of $\tilde{X}$. It is \emph{hoped} (but not guaranteed) that the eigenvectors $\tilde{X}$ are sufficiently delocalized (and that the image of $\pi$ is sufficiently dense in $U(d)$ \footnote{That is, for an arbitrary $U\in U(d)$ we could find some $g\in\Gr$ such that $\norm{\pi(g)-U}_F$ is not too large.}) so that we don't lose much by decoupling the ideal rounding problem \eqref{eq:rounding_ideal}.
	Actually proving a theorem along these lines looks, to us, to be a non-trivial task.
	
	Denote by 
	\[
	D_i = \norm{\pi(\hat{g}_i)-\sqrt{n}\tilde{X}_i}_F
	\]
	the rounding error for the $i$-th reconstructed group element. We may bound, as before (Eq. \eqref{eq:ideal_rounding_bound}),
	\begin{align*}
		MSE(X,\hat{X}) 
		&\le 2\overline{MSE}(X,\tilde{X}) +  \frac{2}{n^2}\sum_{i \ne j} \norm{\pi(\hat{g}_i)\pi(\hat{g}_j)^*-n\tilde{X_i}\tilde{X_j}^*}_F^2 \,, 
	\end{align*}
	using
	\begin{align*}
		\norm{\pi(\hat{g}_i)\pi(\hat{g}_j)^*-n\tilde{X_i}\tilde{X_j}^*}_F^2 
		&\le 2 \norm{\pi(\hat{g}_i)\pi(\hat{g}_j)^*-\sqrt{n}\pi(\hat{g}_i)\tilde{X_j}^*}_F^2 + 2 \norm{\sqrt{n}\pi(\hat{g}_i)\tilde{X_j}^*-n\tilde{X_i}\tilde{X_j}^*}_F^2 \\
		&\le 2 \norm{\pi(\hat{g}_i)}^2\norm{\pi(\hat{g}_j)^*-\sqrt{n}\tilde{X_j}^*}_F^2 + 2\norm{\sqrt{n}\tilde{X_j}^*}^2 \norm{\pi(\hat{g}_i)-\sqrt{n}\tilde{X_i}}_F^2\\
		&\le 2\left(1 + \norm{\sqrt{n}\tilde{X_j}^*}^2_F\right) D_i^2 \,,
	\end{align*}
	and that
	\[
	\sum_{j=1}^{n} \norm{\sqrt{n}\tilde{X_j}^*}_F^2 = \norm{\sqrt{n}\tilde{X}^*}_F^2 = nd\,,
	\]
	we get 
	\[
	MSE(X,\hat{X}) \le 2\overline{MSE}(X,\tilde{X}) + \frac{4(d+1)}{n}\sum_{i=1}^{n}D_i^2 \,,
	\]
	so that if the average block-wise rounding error $\frac{1}{n}\sum_{i=1}^n D_i^2$ is small, then a bound on $\overline{MSE}$ also yields a reasonable upper bound on $MSE$. We do not, however, know how to say anything meaningful on $\frac{1}{n}\sum_{i=1}^n D_i^2$. 
	
	The main difficulty here, we believe, lies with the fact that when $d>1$, block-wise rounding is, really, no longer the "obvious" thing to do, because of the following fundamental ambiguity: \emph{there is no "intrinsic", canonical, basis} for the principal subspace of the "noiseless" data matrix $XX^*$. That is, in terms of the spectral method, any other basis has just as high a standing as the one basis we are \emph{really} interested in - the one given by the columns of $X$. All of these bases, of course, relate to one another through an isometry, and in the case $d=1$, this means simply multiplication by any complex number of modulus $1$ \footnote{In the case where the representation is real, we may work exclusively with real numbers. In that case, the isometry is simply multiplication by $\{\pm 1\}$. }. In the special case of the group $\Gr=U(1)$, the spectral method essentially \emph{trully} recovers the signal, up to a global element! This fact was used, e.g, in the analyses of \cite{boumal2016nonconvex,liu2017estimation}. Note that this line of reasoning already break down when $d>1$ and $\Gr=U(d)$, in the sense that the notion of recovery up to global phase no longer captures the permissible isometries of the principal subspace: On the one hand, the group of isometries is given by all the matrices 
	\[
	\left\{ XUX^* \,:\,U\in U(d)\right\} \subset U(nd) \,,
	\]
	whereas alignment by a global group element corresponds to multiplication by any $nd\times nd$ block diagonal matrix, whose block diagonal consists of $n$ copies of some single $U\in U(d)$. It is only when $d=1$ that these two groups coincide!

	\section{Main results}
	\begin{theorem}[Limiting eigenvector correlations]\label{thm:limiting_mse}
		There is a numerical constant $c>1$ such that the following holds. 
		Suppose that $p_n q_n = \omega\left(\log^c(n)/n\right)$. Consider
		\[
		\beta_n = \frac{1}{p_n\sqrt{q_n n}} \sqrt{1-p_n+\sigma_n^2} \,,
		\]
		and suppose that $\beta_n \to 1/\gamma \in [0,\infty]$ as $n\to\infty$. 
		Then under the probabilistic model described above, 
		\begin{equation}
			\lim_{n\to\infty} \overline{MSE}(X,\tilde{X}) \aseq \begin{cases}
		\frac{2d}{\gamma^2}, &\text{ if } \gamma > 1\\
		2d, &\text{ otherwise }
		\end{cases}\,.
		\end{equation}
	\end{theorem}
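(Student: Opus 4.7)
The strategy is a classical signal-plus-noise decomposition coupled with a BBP-type spike analysis. First, by Schur orthogonality, $\E_{h\sim\text{Haar}}[\pi(h)]=0$ for any non-trivial irreducible $\pi$, which gives $\E[Y] = p_n q_n XX^* + (1-p_n q_n)I_{nd}$. Writing $Y = p_nq_n XX^* + (1-p_nq_n)I + W$ with $W := Y - \E[Y]$ and absorbing the identity shift (which does not alter eigenvectors), the problem reduces to the deformed matrix $M := p_n q_n XX^* + W$, whose signal is a rank-$d$ isotropic spike of strength $\theta_n := p_n q_n n$ with eigenspace equal to the column span of $X$.

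Next I would analyze the random matrix $W$. Its off-diagonal blocks $\{W_{ij}\}_{i<j}$ are independent and mean-zero, and decomposing each block into its independent sources of randomness (edge presence, outlier corruption by a Haar-random group element, and additive Gaussian noise) gives by a direct computation
\[
\E[W_{ij}W_{ij}^*] = \tau_n^2 I_d,
\]
with $\tau_n^2$ of leading order $q_n(1-p_n+\sigma_n^2)$; the isotropy in $I_d$ follows from Schur's lemma applied to the second-moment tensor of the irreducible $\pi$. This isotropic block structure then yields, via an operator-valued semicircle law (equivalently, a block-Wigner moment/trace computation), almost sure convergence of the empirical spectral distribution of $W/\sqrt{n}$ to the semicircle on $[-2\tau_n, 2\tau_n]$ and $\norm{W}/\sqrt{n}\to 2\tau_n$. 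Under Assumption \ref{assum:prob}, the same trace machinery upgrades to an isotropic local law
\[
\tfrac{1}{n}\, X^*(zI - W)^{-1}X \to m(z)\, I_d
\]
for $z$ away from the bulk, where $m$ is the Stieltjes transform of the limiting semicircle.

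With the local law in hand, the BBP phase transition follows by the usual Woodbury / secular-equation route. The top eigenvalues of $M$ are the roots of $\det(I_d - \theta_n\cdot\tfrac{1}{n}X^*(zI-W)^{-1}X) = 0$, which under the local law collapses to the scalar equation $\theta_n\, m(\lambda) = 1$; it has a solution above the bulk edge $2\tau_n\sqrt{n}$ precisely when $\theta_n > \tau_n\sqrt{n}$, that is, when $\gamma = \theta_n/(\tau_n\sqrt{n}) = 1/\beta_n > 1$. Standard residue calculus applied to the contour integral $\tfrac{-1}{2\pi i}\oint \tfrac{1}{n}\tr X^*(zI-M)^{-1}X\,dz$ around the outlier then yields $\tr\bigl((XX^*/n)(\tilde X\tilde X^*)\bigr) \to d(1-1/\gamma^2)$ in the supercritical regime $\gamma > 1$ and $\to 0$ otherwise. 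Plugging into $\overline{MSE}(X,\tilde X) = 2d - 2\tr((XX^*/n)(\tilde X\tilde X^*))$ gives the claimed limits $2d/\gamma^2$ and $2d$.

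The hardest step is the isotropic local law for $W$. The entries inside each block $W_{ij}$ are highly correlated (the block is, roughly, a random unitary matrix scaled by Bernoulli indicators, plus additive Gaussian noise), and the Erdős–Rényi graph $\Lambda$ is only mildly dense. I would expect the proof to rely on Schur orthogonality of matrix coefficients of $\pi$ so that contributions from distinct blocks decouple cleanly in the trace expansions, combined with a graph-theoretic count of closed walks in $\Lambda$. The polylogarithmic density assumption on $p_n q_n$ in Assumption \ref{assum:prob} is precisely the slack needed to absorb the resulting logarithmic losses and close the argument.
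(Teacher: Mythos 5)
Your overall architecture is the same as the paper's: reduce $Y$ to a rank-$d$ spike plus a centered block-random matrix, establish a semicircle law and edge convergence for the noise, and read off the BBP threshold and eigenvector overlaps; your numerology (edge at $2\sqrt{n}\tau_n$, threshold at $\theta_n=\sqrt{n}\tau_n$, overlap $1-1/\gamma^2$) matches the paper's conclusions. The execution differs in two places worth comparing. First, you center $Y$ by its full expectation, so your $W$ contains the term $(\chi_{ij}-p_nq_n)\,\pi(g_ig_j^{-1})$ coming from the Bernoulli fluctuation of the signal-bearing edges. Because of it, $\E[W_{ij}W_{ij}^*]=\tau_n^2 I_d$ holds only to leading order, and, more importantly, the second-moment map $C\mapsto\E[W_{ij}CW_{ij}^*]$ acquires an anisotropic piece $\mathrm{Var}(\chi_{ij})\,\pi(g_ig_j^{-1})C\pi(g_ig_j^{-1})^*$, so the self-consistent equation for the block semicircle law does not close exactly on scalar multiples of $I_d$. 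This piece is negligible in the regimes where the theorem is non-trivial, but you would need to say so; the paper instead splits this term off and kills it in operator norm (Proposition \ref{prop:signal_noise_decouple}), leaving a noise matrix whose blocks are exactly Haar-plus-Gaussian and for which Schur orthogonality gives exact isotropy. Second, and more substantively, your route through the secular equation requires the convergence $\frac{1}{n}X^*(zI-W)^{-1}X\to m(z)I_d$ for the \emph{deterministic} directions $X$; this is genuinely harder than the averaged trace law, and "the same trace machinery" does not deliver it without an additional concentration argument for the bilinear forms $x^*W^kx$. The paper sidesteps a full isotropic law by verifying only the specific incoherence condition needed by Benaych-Georges--Nadakuditi, via Talagrand's inequality for convex Lipschitz functions (Lemmas \ref{lemma:isotropy_BGN}--\ref{lemma:trace}), and then invokes their theorem wholesale; you should either do the same or supply the quadratic-form concentration explicitly, as this is the one step in your sketch that is asserted rather than reducible to the tools you name.

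One further regime is not covered by your argument: when $\sigma_n=0$ and $p_n$ is not small (so $n\tau_n^2=nq_n(1-p_n)$ may fail to grow polylogarithmically), the semicircle and edge statements for $W/\sqrt{n}$ are false as stated because the noise matrix is too sparse for that normalization. In that regime $\beta_n\to 0$ with room to spare and one only needs the cruder bound $\|W\|/(np_nq_n)\to 0$, which the paper proves separately (Proposition \ref{prop:small-beta}); your proposal should carve out this case explicitly rather than appealing to the bulk/edge theorems there.
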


	Theorem \ref{thm:limiting_mse} demonstrates that, in particular, recovery by the spectral method exhibits a \emph{phase-transition}: when the effective signal level is $\gamma \le 1$, the top eigenvectors of the measurement matrix $Y$ are completely uncorrelated with the columns of $X$. \footnote{Note that the reason why the limiting expression for $\overline{MSE}$ is proportional to $d$, owes to the fact that we are measuring the correlation between two $d$-dimensional subspaces. Indeed, when the columns of $X$ and $\tilde{X}$ are completely uncorrelated, we have $\overline{MSE}(X,\tilde{X})=\frac{1}{n}\norm{XX^*}_F^2 + \norm{\tilde{X}\tilde{X}^*}_F^2= d+d=2d$. } 


	\begin{theorem}[Interpretation of observed eigenvalues]\label{thm:limiting_statistics}
		Suppose that we work under the setting of Theorem \ref{thm:limiting_mse}.
		Define the statistic
		\begin{equation}
			\hat{\phi}(Y) = \frac{2d}{\left( \eta + \sqrt{\eta^2 - 1} \right)^2 }
		\end{equation}
		where $\eta=\frac{\lambda_{1}}{\lambda_{d+1}} $ and $\lambda_1 \ge \lambda_2 \ge \ldots \ge \lambda_{nd}$ are the observed eigenvalues of the measured matrix $Y$. Then  $\hat{\phi}$ is an asymptotically strongly consistent estimate of the MSE proxy, in the sense that
		\begin{equation}
			\overline{MSE}(X,\tilde{X}) - \hat{\phi}(Y) \to 0
		\end{equation}
		almost surely as $n\to\infty$.

	\end{theorem}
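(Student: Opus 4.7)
The plan is to identify the almost sure limit of $\eta = \lambda_1(Y)/\lambda_{d+1}(Y)$ in terms of the effective SNR $\gamma$ via BBP-type random matrix theory, and then to check by a continuous mapping argument that the resulting value of $\hat{\phi}(Y)$ matches the limiting $\overline{MSE}(X,\tilde{X})$ given by Theorem \ref{thm:limiting_mse}. The analysis is a direct extension of the random matrix reasoning already required in the proof of Theorem \ref{thm:limiting_mse}.

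I would begin with the signal-plus-noise decomposition $Y = S + N$ used in the proof of Theorem \ref{thm:limiting_mse}. Here $S := \E[Y \mid g_1,\ldots,g_n] = p_n q_n XX^* + (1-p_n q_n) I_{nd}$ has rank $d$ up to a bounded identity shift, with top $d$ eigenvalues equal to $p_n q_n n + (1-p_n q_n)$, and $N := Y - S$ is centered. With $\sigma_{\text{eff}} := \sqrt{n q_n (1-p_n+\sigma_n^2)}$ --- the scale governing the operator-norm bounds on $N$ in the proof of Theorem \ref{thm:limiting_mse} --- the empirical spectral distribution of $N/\sigma_{\text{eff}}$ converges to a compactly supported limit with right edge $2$, while the $d$ nontrivial spikes of $S/\sigma_{\text{eff}}$ sit at common height $p_n q_n n/\sigma_{\text{eff}} = 1/\beta_n =: \gamma_n \to \gamma$ (the identity shift contributes $o(1)$ at this scale). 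This is precisely a rank-$d$ BBP-deformed Wigner-type ensemble with equal spikes of strength $\gamma$.

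Invoking the BBP phase transition (already at work in the proof of Theorem \ref{thm:limiting_mse}) yields, almost surely,
\[
\frac{\lambda_i(Y)}{\sigma_{\text{eff}}} \to \gamma + \frac{1}{\gamma} \;\; (1\le i\le d), \qquad \frac{\lambda_{d+1}(Y)}{\sigma_{\text{eff}}} \to 2
\]
when $\gamma > 1$, and $\lambda_1(Y)/\sigma_{\text{eff}}, \lambda_{d+1}(Y)/\sigma_{\text{eff}} \to 2$ when $\gamma \le 1$. Taking the ratio gives $\eta \to (\gamma+1/\gamma)/2$ in the supercritical case and $\eta \to 1$ in the subcritical case. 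Solving the quadratic $2\eta = \gamma + 1/\gamma$ for the root $\gamma \ge 1$ yields $\gamma = \eta + \sqrt{\eta^2-1}$, hence
\[
\hat{\phi}(Y) = \frac{2d}{(\eta+\sqrt{\eta^2-1})^2} \aslim \begin{cases} 2d/\gamma^2, & \gamma > 1, \\ 2d, & \gamma \le 1, \end{cases}
\]
which coincides with the limit of $\overline{MSE}(X,\tilde{X})$ from Theorem \ref{thm:limiting_mse}. Continuity of $\eta \mapsto 2d/(\eta+\sqrt{\eta^2-1})^2$ at $\eta=1$ handles the boundary case $\gamma = 1$.

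The main technical obstacle, beyond what Theorem \ref{thm:limiting_mse} already supplies, is the almost sure convergence $\lambda_{d+1}(Y)/\sigma_{\text{eff}} \to 2$: a combined ``no-extra-outlier''/edge-location statement for the noise matrix $N$. In the subcritical regime this amounts to $\|N\|_{\text{op}}/\sigma_{\text{eff}} \to 2$ almost surely, which is precisely the bound underlying the subcritical half of Theorem \ref{thm:limiting_mse}. In the supercritical regime one combines the operator-norm upper bound --- via Weyl's interlacing $\lambda_{d+1}(Y) \le \lambda_{d+1}(S) + \lambda_1(N) = \lambda_1(N)$ --- with a matching lower bound coming from edge rigidity of $N/\sigma_{\text{eff}}$, which is standard for block-structured Hermitian random matrices with independent entries above the diagonal. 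Once this convergence of $\lambda_{d+1}$ is in hand, the remainder of the proof is a routine continuous mapping.
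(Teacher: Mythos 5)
Your proposal is correct and follows essentially the same route as the paper: signal-plus-noise decomposition, semicircle/edge behavior of the noise at scale $\sqrt{nq_n(1-p_n+\sigma_n^2)}$, BBP spike asymptotics giving $\eta\to\frac{1}{2}(\gamma+1/\gamma)$ above threshold and $\eta\to 1$ below, then inversion of the quadratic. The one caveat is that the edge-location statement you describe as ``standard'' is in fact the paper's main technical contribution (its Theorems \ref{thm:bulk} and \ref{thm:edge}, proved by a moment-method computation for Hermitian block matrices with Haar-distributed blocks), and your conditional-expectation decomposition folds the Bernoulli fluctuation of the signal term into $N$, which must still be shown negligible exactly as in the paper's Proposition \ref{prop:signal_noise_decouple}.
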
	


	\begin{remark}
		\label{rem:no_additive_noise}
		Let us consider, for a moment, the case where there is no
		additive noise, that is, $\sigma_n=0$, and suppose that $\beta_n
		\to 1/\gamma$ for $\gamma\in(0,\infty)$. Under the condition
		that $p_nq_n = \omega(\log^c(n)/n)$, the case
		$\lim\sup_{n\to\infty} p_n > 0$ is impossible: indeed, since
		$\beta_n$ tends to a constant, this would require that $q_n \sim
		1/n$, contrary to the condition. Therefore, $\beta_n \to 1/\gamma \Leftrightarrow p_n\sqrt{q_n n} \to \gamma$. In the case where the measurement graph is complete, $q_n=1$, the phase transition with respect to the level of outlier-type corruptions is then exactly at $p_n=1/\sqrt{n}$, which is the threshold that was claimed in the analysis of \cite{singer2011angular} (though only for $\Gr=U(1)$).
	\end{remark}

	\begin{remark}
		Consider also the case where $p_n=1$, that is, we have only additive Gaussian noise, and no outlier-type corruptions. Suppose, moreover, that all of the pairwise measurements are available, that is, $q_n=1$. In that case, we observe a phase transition at noise level $\sigma_n = \sqrt{n}$. This result is well-known, see, for example, the discussion in \cite{perry2016optimality}. In this context, it is also worth to mention the result of \cite{bandeira2017tightness}, who prove that when $\sigma_n \sim n^{1/4}$, the semidefinite relaxation to the maximum likelihood estimator recovers, with high probability, the signal \emph{exactly} in synchronization over $U(1)$. Their numerical experiments suggest that this, in fact, should be true already for noise levels that are almost up to $\sqrt{n}$ (say, $\sigma_n=\frac{\sqrt{n}}{\textrm{poly}\log(n)}$). Note that their type of result is not directly comparable with ours, since we are considering here a \emph{weaker} notion of recoverability (namely, small average squared alignment error). 
	\end{remark}

	\begin{remark}
	\label{rem:phi_stability}
		While the statistic suggested in Theorem \ref{thm:limiting_statistics}
		is indeed exact in the large $n$ limit, we found that it is not
		practical to use when the noise level is close or below the
		recoverability threshold (i.e, when $\gamma \lesssim 1$ in Theorem
		\ref{thm:limiting_mse}). We suspect that the problem here is that of numerical
		stability. Specifically, the analysis in the next sections shows that when $\gamma \le 1$, we have $\eta\to 1$, and so we expect to observe in this case a value of $\eta$ slightly bigger than $1$. However, the derivative of $\hat{\phi}$ (with respect to $\eta$) actually has a $1/\sqrt{\cdot}$ singularity near $\eta=1$, making $\hat{\phi}$ very sensitive to slight deviations in $\eta$.
	\end{remark}

	\paragraph{Outline.}
	This paper now proceeds as follows. 
	In Section \ref{sec:lower} we comment on lower bounds for the noise
	threshold in various groups. 
	In Section \ref{sec:experiments} we report numerical evidence 
	on the finite-$n$ behavior of our
	results. 
	In Section \ref{sec:red} we prove our main results Theorem~\ref{thm:limiting_mse} and Theorem~\ref{thm:limiting_statistics}. The main idea is to show that the measurement matrix $Y$ can be approximately written as a low-rank signal matrix, plus an independent noise-only matrix. This essentially reduces the model at hand to
	the additive-noise variant of
	Johnstone's celebrated Spiked Model \cite{johnstone2001distribution}. We can then use existing results on the limiting eigenvalues and eigenvectors of such models to deduce the limiting MSE of the spectral method. 
	This requires us to prove some results on the limiting spectrum of random block Hermitian matrices with Haar-distributed blocks, which, to the best of our knowledge, do not appear in the literature and could be of independent interest. This is done in Section~\ref{sec:rmt}.
	In Section \ref{sec:proofs} we
	provide proofs to some of the technical claims made before. 
	To make this text self-contained, in the Appendix
	we summarize necessary background from harmonic
	analysis.

%
%

	 \section{Lower bounds for the noise threshold} \label{sec:lower}

	Let us consider, for now, the case where there is no additive noise ($\sigma_n=0$), so that the only type of noise we have are outlier-type corruptions. Theorem \ref{thm:limiting_mse} (see also Remark~\ref{rem:no_additive_noise}) identifies a noise threshold $p_n\sqrt{q_n} = \frac{1}{\sqrt{n}}$ above which (that is, when $p_n\sqrt{q_n}<1/\sqrt{n}$) the estimate returned by the spectral method is completely non-informative. One naturally wonders, then, whether this is the best we can do, among all possible recovery algorithms. Several results from the literature hint that, in some cases, this may well be the case. 

	\paragraph{Finite groups.} 
	Consider the case where $\Gr$ is a finite group of size $L$. In a recent work \cite{perry2016optimality}, it was proved that in the case of a full measurement graph ($q_n=1$) it is impossible to distinguish reliably (with success probability tending to $1$ as $n\to\infty$) between the measurement matrix $Y$ and pure noise whenever 
	\[
	p_n\sqrt{n} \le \sqrt{\frac{2(L-1)\log(L-1)}{L(L-2)}} \,.
	\]
	Moreover, in the region 
	\[
	\sqrt{\frac{4\log L}{L-1}} \le p_n \sqrt{n} \le 1
	\] 
	it is shown that there is an \emph{inefficient} algorithm that can distinguish between $Y$ and pure noise, hinting that there might be a group synchronization algorithm which has a non-trivial MSE even when the spectral method completely fails. These results complement the information-theoretic lower bound given previously in \cite{singer2011angular}. To the best of our knowledge, it is currently unknown whether there exists an \emph{efficient} algorithm that can distinguish between $Y$ and pure noise below the threshold $p_n \le 1/\sqrt{n}$. Nonetheless, these results imply that in the case of synchronization over a finite group, the spectral method is rate-optimal (in the sense that the optimal noise threshold must scale like $p_n \sim 1/\sqrt{n}$).

		\paragraph{Infinite groups.}
In the absence of additive noise, 
	the spectral method is not rate-optimal, at least among inefficient recovery algorithms. In this case, exact recoverability is governed by the \emph{edge-connectivity} of the measurement graph, in a sense which we now describe.
	Let $\Lambda'$ be the graph induced by all the "good" measurements (that is, the measurements that are both available, and which were not replaced by a random group element. Recall that under our model, $\Lambda'$ is an Erd\H{o}s--R\'enyi  graph with edge probability $p_n q_n$. Of course, $\Lambda' \subset \Lambda$). If $\Lambda'$ is \emph{bridgeless} (or $1$-edge connected), meaning that if we remove any single edge the graph remains connected \footnote{Equivalently, any pair of vertices $i$ and $j$ is contained in some simple cycle of $\Lambda'$.}, then the following inefficient procedure recovers $g_1,\ldots,g_n$: we traverse every simple cycle in $\Lambda$. If the group elements along the cycle sum up to the identity - this means that no edge in the cycle was corrupted (that is, the entire cycle is contained in $\Lambda'$), since $g_{ij}$ has zero probability to be any single value. Since $\Lambda'$ is bridgeless, we can identify a connected, noiseless, edge set of $\Lambda$ and use it to reconstruct $g_1,\ldots,g_n$. The threshold for bridgelessness coincides with that for having a Hamiltonian cycle, which is the same as that for having minimum degree $\ge 2$. By the results of \cite{komlos1983limit}, whenever $p_n q_n \ge  (\log(n) + \log\log(n) + \omega(1))/n$, 
	$\Lambda'$ is bridgeless with probability tending to $1$ (when $p_n q_n \le (\log(n)+\log\log(n)+o(1))/n$ that probability tends to $0$, and when $p_nq_n = (\log(n)+\log\log(n)+c)/n $ the probability tends to $e^{-e^{-c}} \in (0,1)$).
	This should be compared with the threshold for connectivity, $p_n q_n =
	\log(n)/n$, which gives us a sharp (to leading order) phase transition between perfect and
	impossible recoverability at $p_n q_n = (1+o(1))\frac{\log(n)}{n}$, which is
	much smaller than the implied noise threshold given in Theorem \ref{thm:limiting_mse}.

	\paragraph{Infinite groups, alternative noise models.}
	The case of an infinite compact group has mostly been investigated in the
	literature under various noise models. We mention the work \cite{bandeira2013cheeger}, which derives worst-case guarantees on the alignment error in $O(d)$ synchronization in terms of the eigenvalues of a certain graph Laplacian, under an adversarial noise model; and \cite{boumal2013cramer} which derives Cramer-Rao -type lower bound for synchronization over $SO(d)$, under a more general noise model that includes both outlier-type corruptions and continuous noise 
	(extending other works which derived such bounds in a less general setting,
	e.g, \cite{chang2006cramer,ash2007relative,howard2010estimation}). In
	particular, the results of \cite{boumal2013cramer} imply that in the absence of additive noise, the rate $p_n\sim
	1/\sqrt{n}$ of the noise threshold is indeed the right one. The analysis of
	\cite{perry2016optimality} for $U(1)$ synchronization imply that under
	Gaussian noise (and no outliers, meaning $p_n=1$, with full measurements, meaning $q_n=1$), the noise threshold $\sigma_n = \sqrt{n}$ is exactly optimal
	(in a suitable sense).

	\section{Finite-$n$ behavior}\label{sec:experiments}

	Before proceeding to prove our main results, we pause to evaluate the
	accuracy of our results in the non-asymptotic regime. 
	We performed numerical simulations to test the validity of our asymptotic predictions (Theorems \ref{thm:limiting_mse} and \ref{thm:limiting_statistics}) in the case of finite sample size $n$, in various regimes of the parameters $q$, $p$ and $\sigma$. In most of the experiments below, we consider a setup where there are only outlier-type corruptions, without additional additive noise (that is, we take $\sigma=0$). 
	This decision is justified in that giving precise asymptotics under outlier-type corruptions is, really, the main technical contribution of this paper. In the case where there is only additive Gaussian noise (and, say, $q=1$), the asymptotic MSE (up to rounding) of the spectral method is already well-known, and follows immediately (without needing to prove anything new!) from the existing results on the extreme eigenvectors in the Spiked Model \footnote{Indeed, in that case the measurement matrix $Y$ \emph{is}, precisely, an independent low-rank (additive) perturbation of a Gaussian random matrix.}. Only in the last experiment of this section, we study a setup which include both outlier-type corruptions and additive Gaussian noise.

	\subsection{Around the recoverability threshold}
	\label{subsect:experiment1}

	We run the spectral method on a random problem instance with a dense measurement graph, $q\in \left\{ 0.5,1 \right\}$. We let the corruption level $p$ scale like $p=\gamma/\sqrt{n}$, with $\gamma$ varying around the theoretical asymptotic threshold $\gamma^{*}=1/\sqrt{q}$ (please see Remark~\ref{rem:no_additive_noise}). We compare the observed MSE proxy (normalized by the dimension of the representation) against the limiting value given in Theorem \ref{thm:limiting_mse} and against the estimate $\hat{\phi}$ suggested in Theorem \ref{thm:limiting_statistics}.
	We do this for several choices of groups: $\Gr = \mathbb{Z}_3, U(2), SO(3)$ represented as rotation subgroups (the representations have dimensions $1,2,3$ respectively). In all the experiments below, we reconstruct from $n=400$ samples. Our results are summarized in Figure \ref{fig:experiment1}.

	\begin{figure}[tbh!]
	       \begin{subfigure}{.5\textwidth}
	            \centering
	            \includegraphics[width=\textwidth]{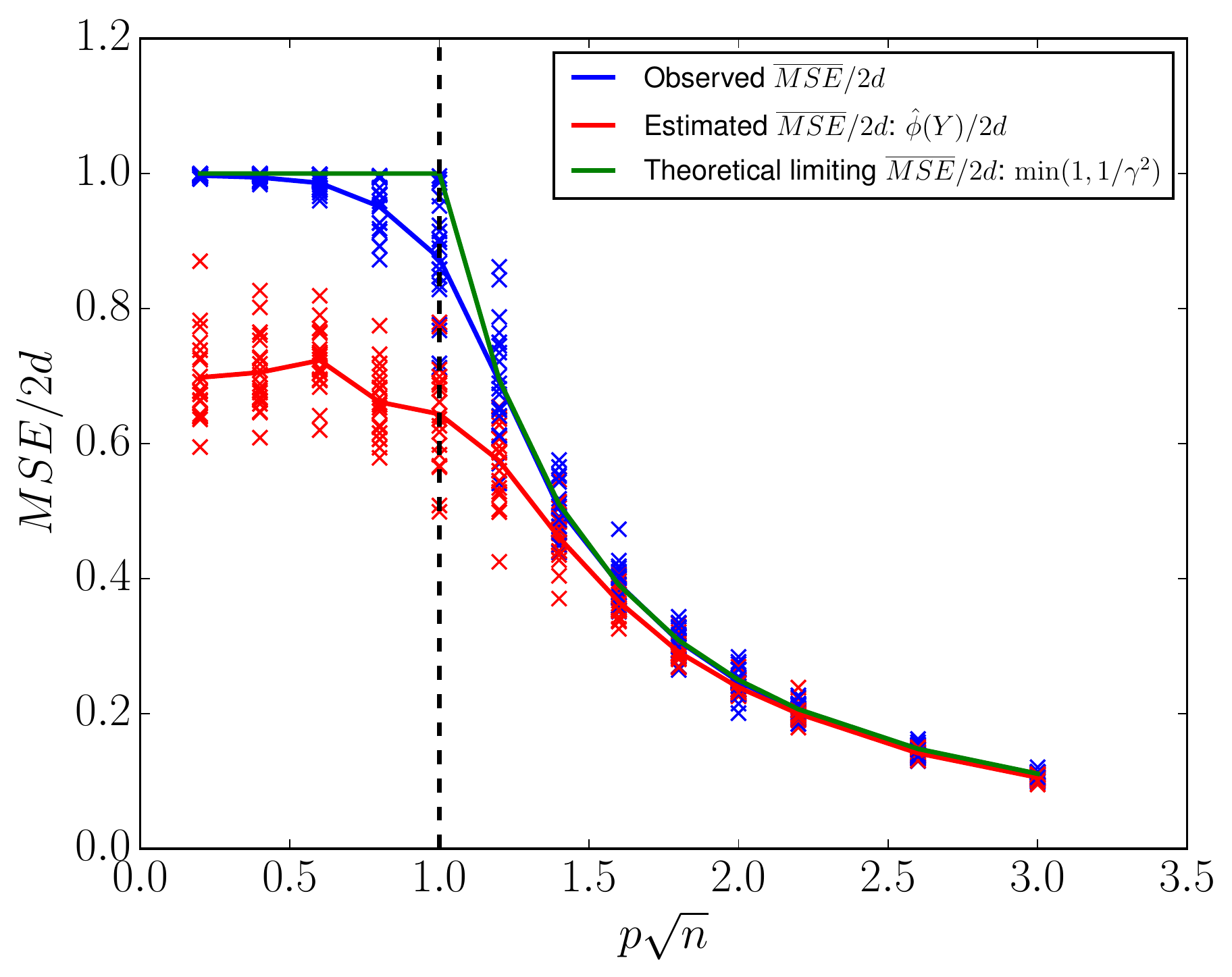}
	      \end{subfigure}
	      \begin{subfigure}{.5\textwidth}
	            \centering
	            \includegraphics[width=\textwidth]{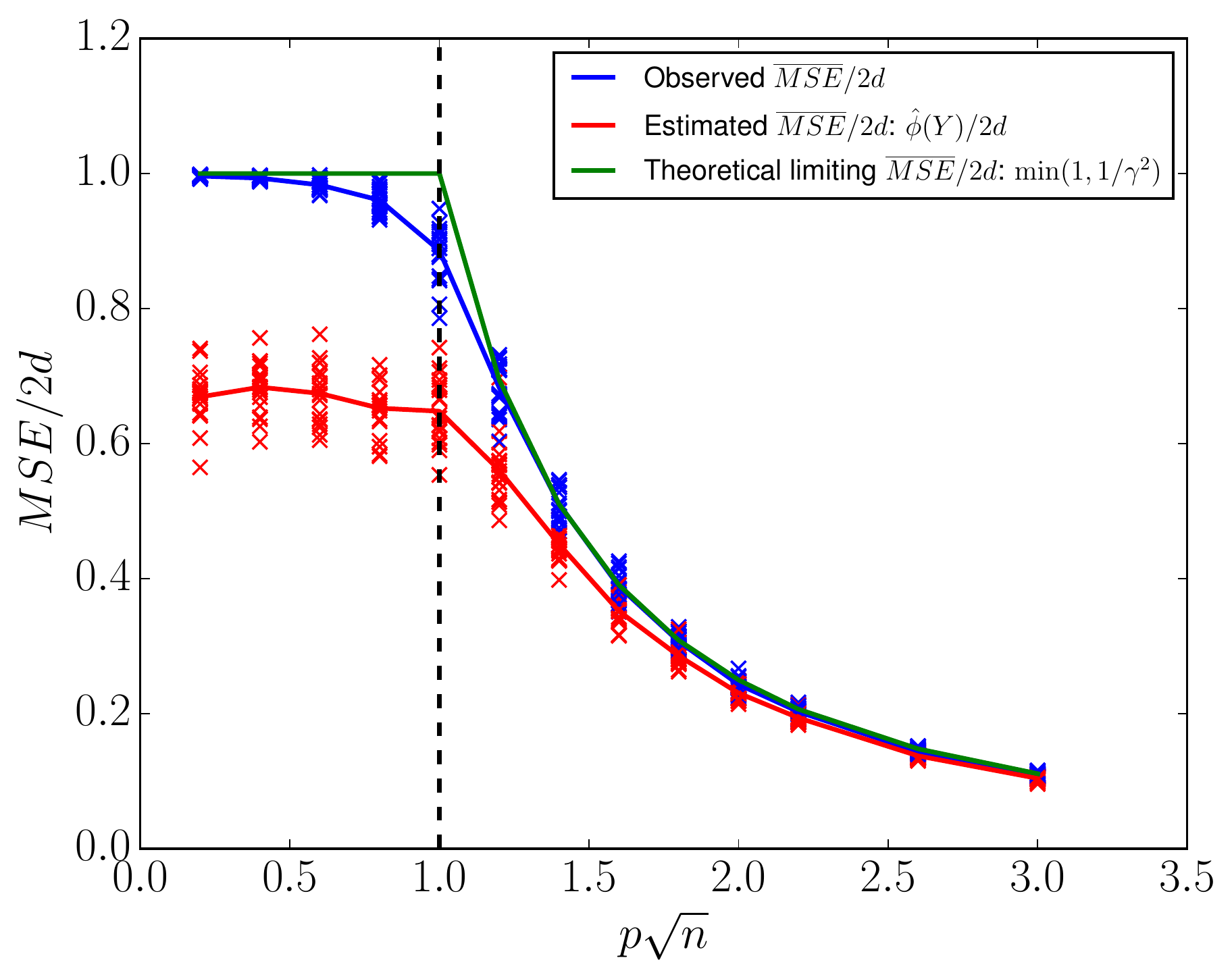}
	      \end{subfigure}
	      \begin{subfigure}{.5\textwidth}
	            \centering
	            \includegraphics[width=\textwidth]{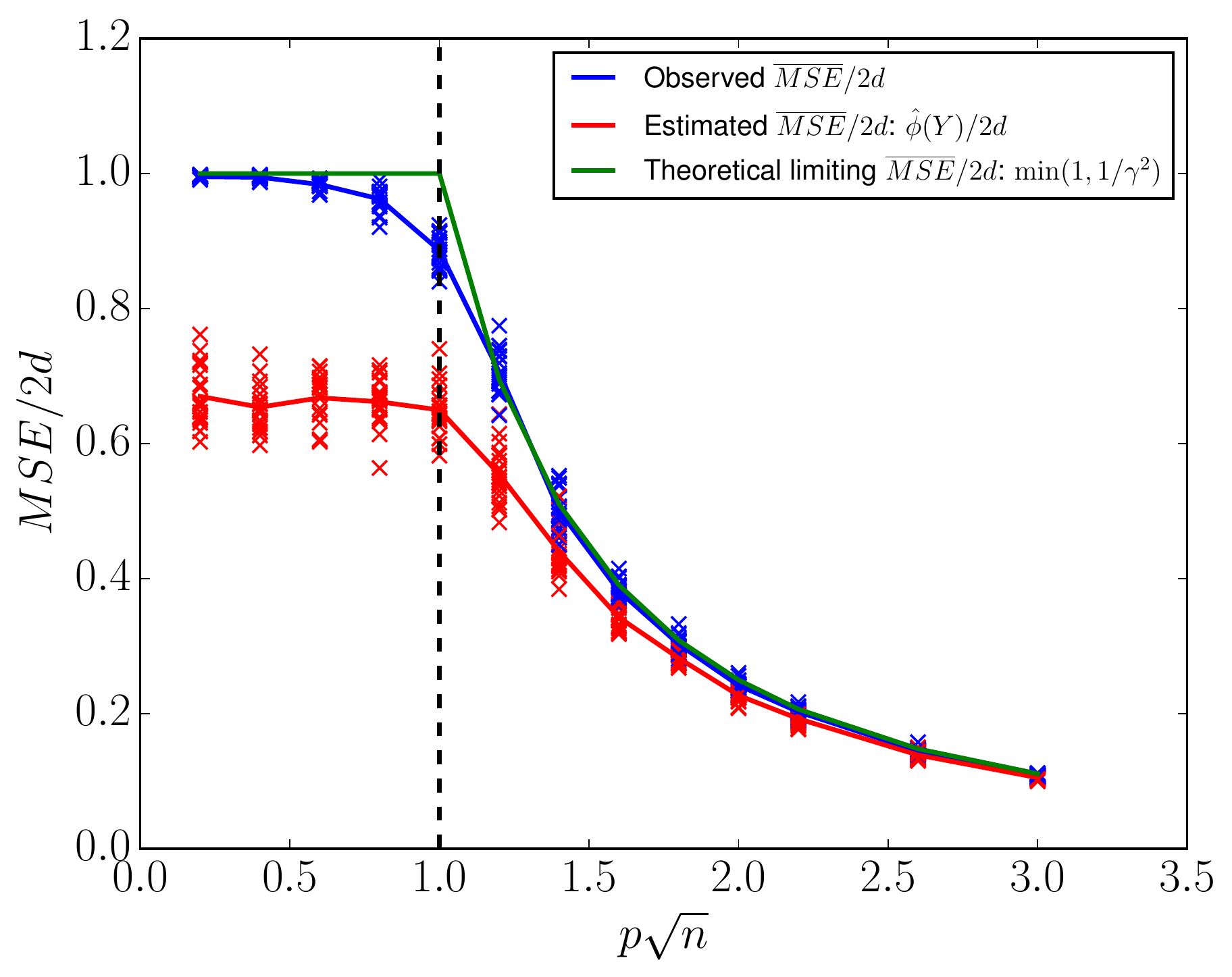}
	      \end{subfigure}
	      \begin{subfigure}{.5\textwidth}
	            \centering
	            \includegraphics[width=\textwidth]{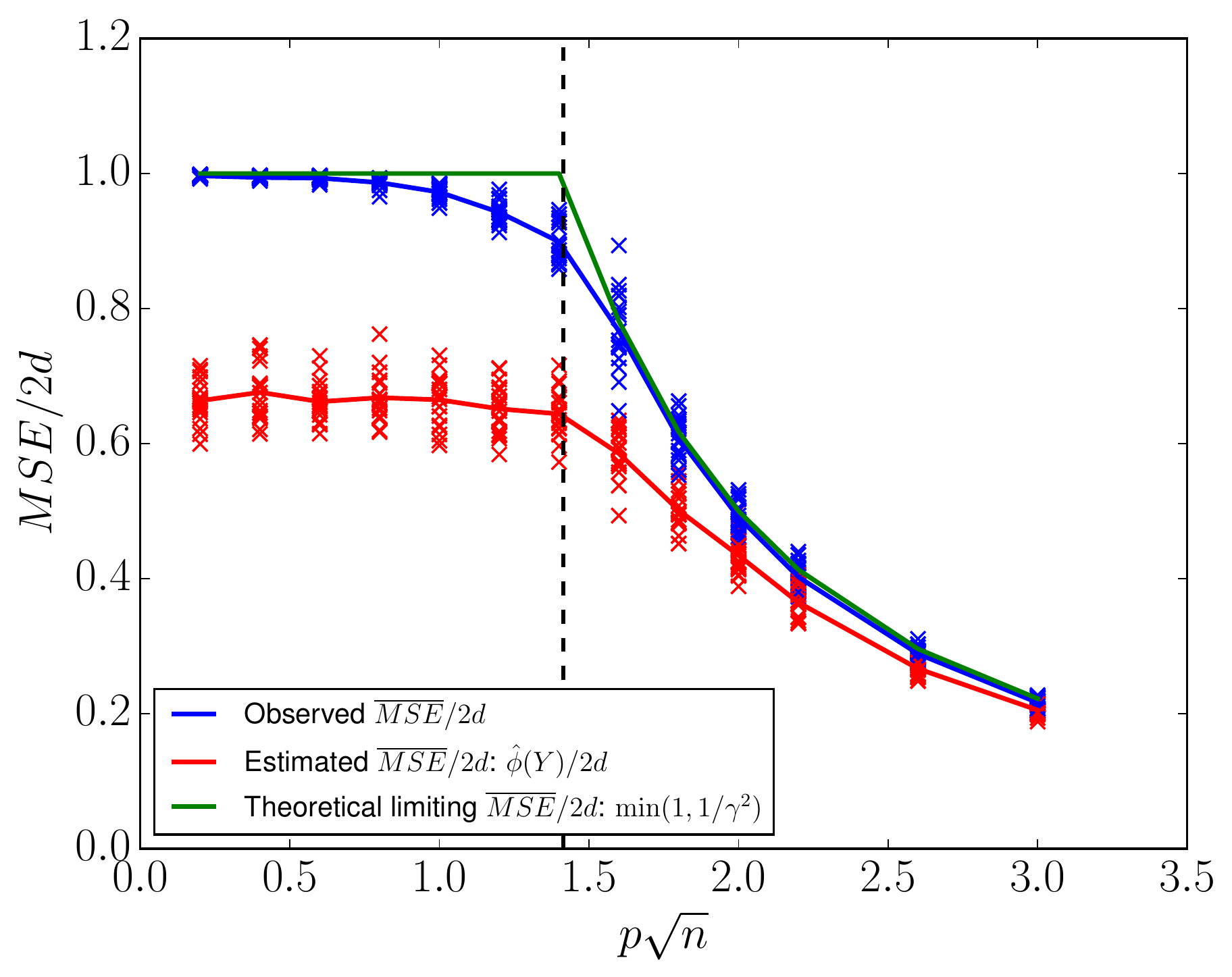}
	      \end{subfigure}
		  \caption{
		  Results of the experiment outlined in Subsection \ref{subsect:experiment1}. Each point on the curve is the average of $T=20$ individual problem instances (the individual results are marked by `x'-s). The dashed vertical line marks the asymptotic threshold $\gamma^{*}=1/\sqrt{q}$. In all plots except for the bottom right, we used $q=1$; there we used $q=0.5$. Top left: $\mathbb{Z}_3$; top right: $U(2)$; bottom: $SO(3)$.}
	      \label{fig:experiment1}
	\end{figure}

	We find that the observed MSE indeed matches the theoretical limiting value quite closely, across all groups. As for $\hat{\phi}$, we see that around the threshold, $\gamma^*$, it predicts the MSE very poorly; as the SNR increases, however, it seems to match the true MSE better and better. See also Remark \ref{rem:phi_stability}, where the numerical stability of $\hat{\phi}$ is discussed; we hypothesize that this is the main reason for the discrepancy between the theory and the observed behavior.

	\subsection{Broad range of corruption levels over a dense measurement graph}
	\label{subsect:experiment2}

	This time, we let the $p$ run over a broad range of values, with the logarithmic scaling $p=n^{-e}$ as $e=0,0.1,\ldots,1$. Here $q=1$ and $n=400$, making $p$ range from $0.0025$ to $1$. We used $\Gr=SO(3)$ throughout this experiment. Our results are summarized in Figure \ref{fig:experiment2}.

	\begin{figure}[tbh!]
	       \begin{subfigure}{.7\textwidth}
	            \centering
	            \includegraphics[width=\textwidth]{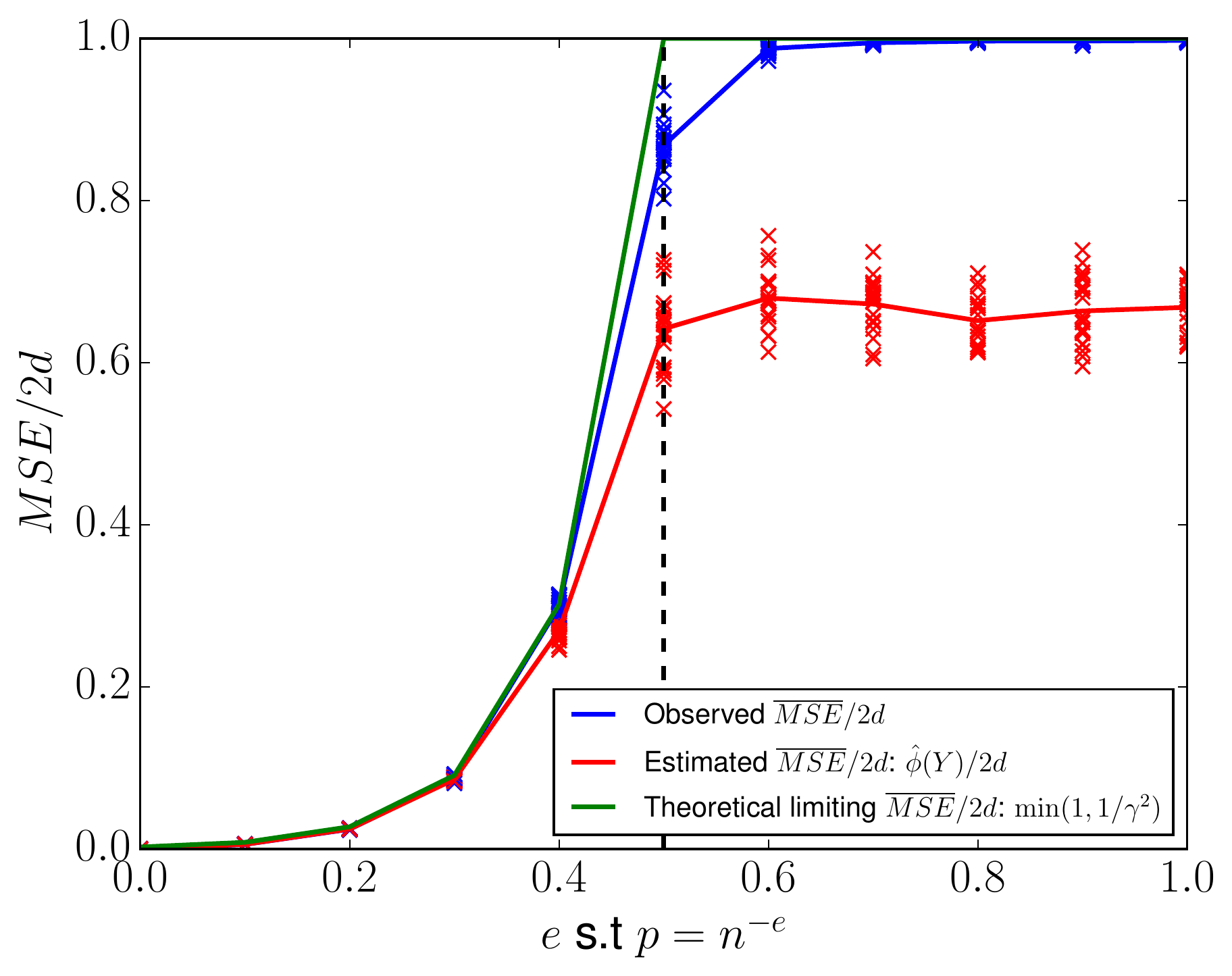}
	      \end{subfigure}
	      \caption{Results of the experiment outlined in Subsection \ref{subsect:experiment2}. Each point on the curve is the average of $T=20$ individual problem instances (the individual results are marked by `x'-s). The dashed vertical line marks the asymptotic threshold $e^{*}=1/2$. Here $\gamma = p\sqrt{qn}$.}
	      \label{fig:experiment2}
	\end{figure}

	We find that the expression of Theorem \ref{thm:limiting_mse} predicts quite correctly the MSE even in the case where $p$ is quite large relative to $1/\sqrt{qn}$ (informally, ``$\gamma \sim \sqrt{n}$'' in the notation of Theorem \ref{thm:limiting_mse}), at least in the case where the measurement graph is complete (here $q=1$).

	\subsection{Sparser measurement graph}
	\label{subsect:experiment3}

	This time, we keep the product $\sqrt{n} \times p_n \sqrt{q_n} = 2 $ fixed and vary the sparsity level of the measurement graph in a logarithmic scale, $q_n = n^{-e}$, so that $p_n = 2n^{-1/2+e/2}$. In that case, $p_n q_n \sim n^{-1/2-e/2}$, so that as long as $e<1$, we are still operating under the conditions of Theorem \ref{thm:limiting_mse}. 	
	We use $n=400$, so that $e$ varies from $0$ to $1-\frac{log(4)}{log(n)} \approx 0.76$, to keep the constraint $p\le 1$. We used $\Gr=SO(3)$ throughout this experiment. Our results are summarized in Figure \ref{fig:experiment3}.

	\begin{figure}[tbh!]
	       \begin{subfigure}{.7\textwidth}
	            \centering
	            \includegraphics[width=\textwidth]{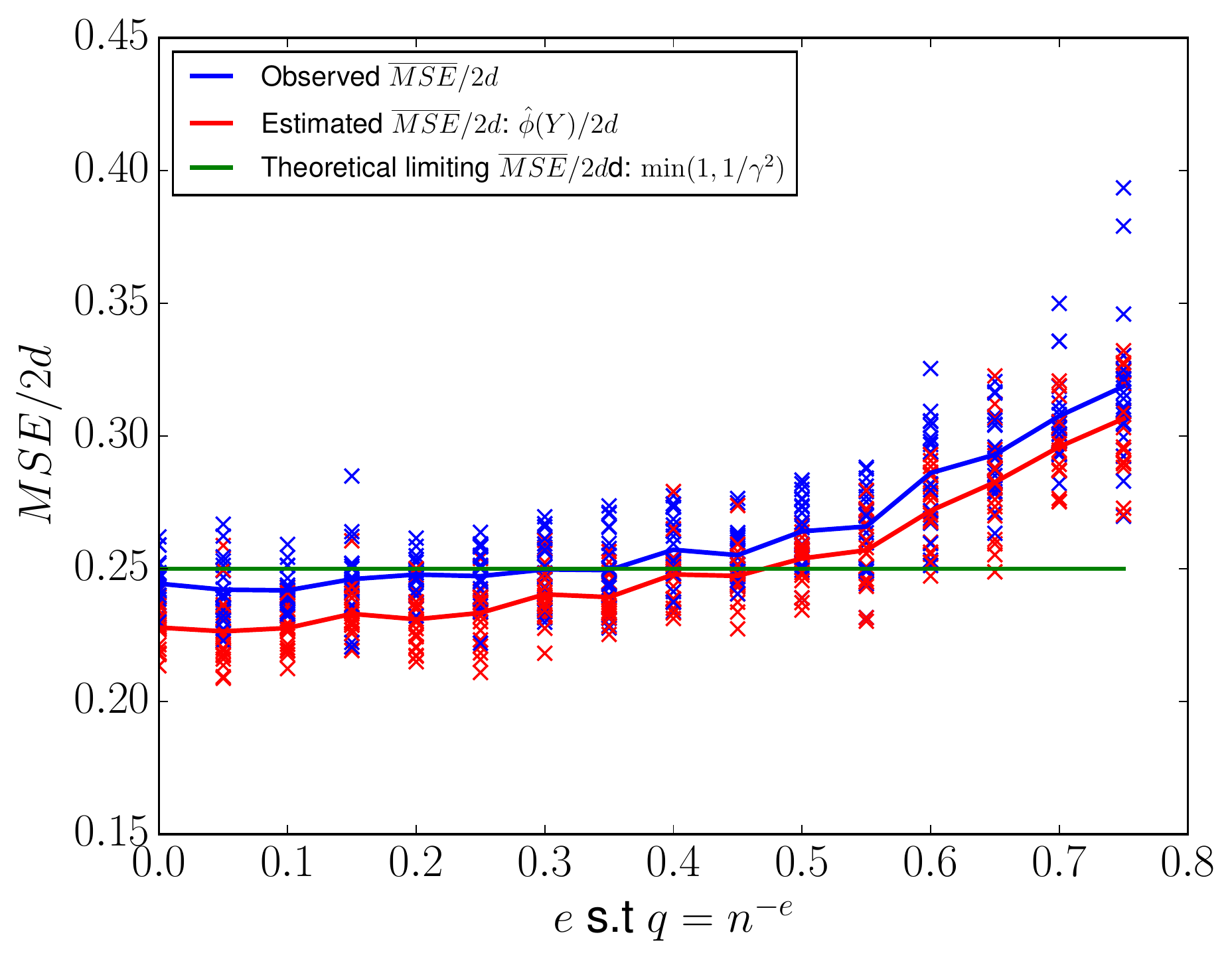}
	      \end{subfigure}
	      \caption{Results of the experiment outlined in Subsection \ref{subsect:experiment3}. Each point on the curve is the average of $T=20$ individual problem instances (the individual results are marked by `x'-s). 
      	}
	      \label{fig:experiment3}
	\end{figure}

	We find that as the measurement graph becomes sparser ($e$ increases, with $n$ being fixed), 
	the discrepancy between the predicted asymptotic MSE from Theorem \ref{thm:limiting_mse} and the observed MSE becomes larger.

\subsection{Convergence to the limit}
\label{subsect:experiment4}

We next want to give evidence for the convergence of the MSE proxy $\overline{MSE}$ and statistic $\hat{\phi(Y)}$ to their limiting value in accordance with the results of Theorem \ref{thm:limiting_mse}. To that end, we estimate the expected squared deviations of these quantities from their limit value,
\[
	\E \left( \overline{MSE}-\lim\overline{MSE} \right)^2,\quad \E \left( \hat{\phi}(Y)-\lim\overline{MSE} \right)^2 
\]
for different values of $n$. In this experiment, we use throughout $\Gr=SO(3)$, $q=1$, $p_n = 5/\sqrt{n}$, and had $n$ range from $50$ to $1200$. Our results are summarized in Figure \ref{fig:experiment4}.
	
\begin{figure}[tbh!]
       \begin{subfigure}{.7\textwidth}
            \centering
            \includegraphics[width=\textwidth]{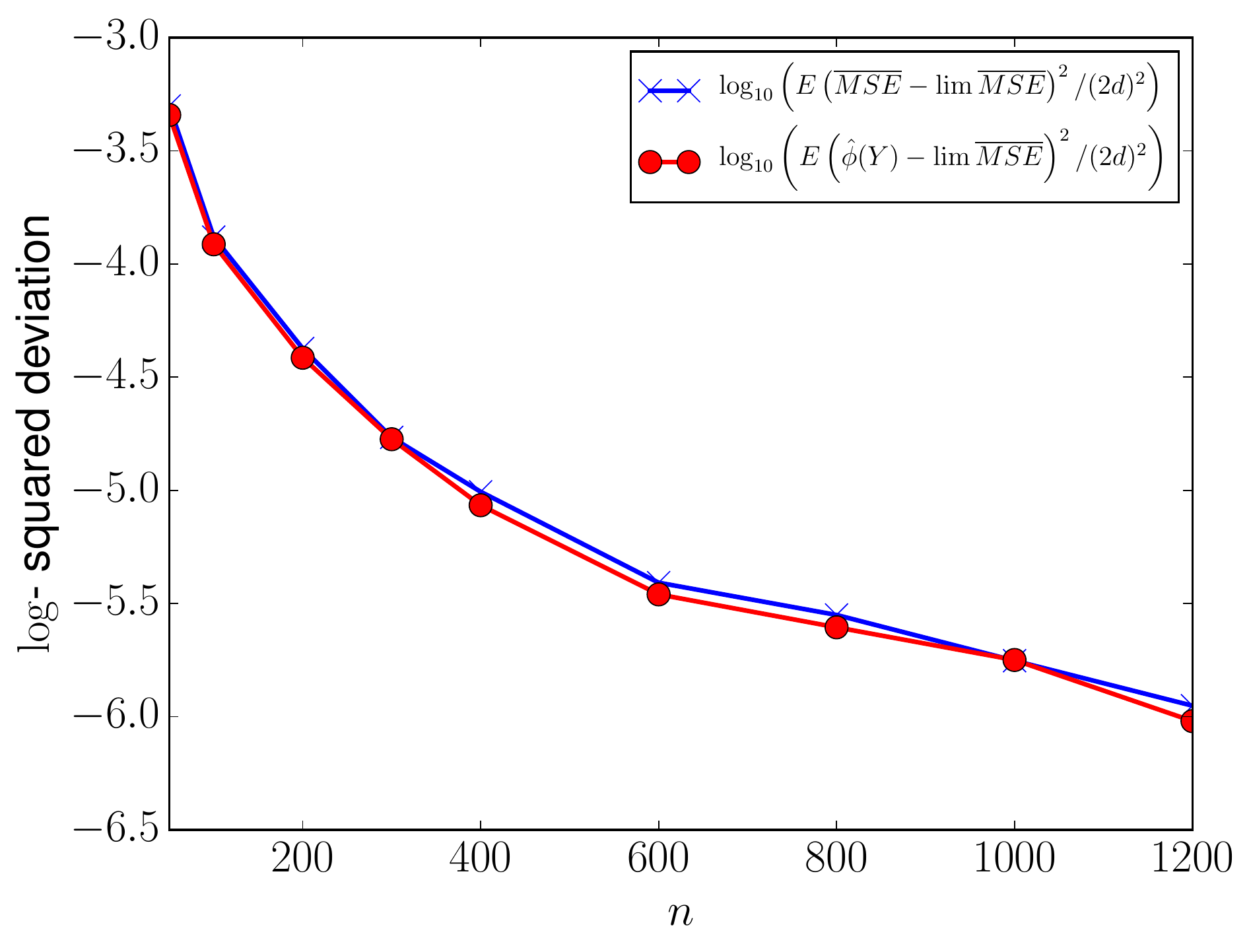}
      \end{subfigure}
      \caption{Results of the experiment outlined in Subsection \ref{subsect:experiment4}. Each point on the curve is the average of the observed squared deviation across $T=20$ Monte-Carlo trials. }
      \label{fig:experiment4}
\end{figure}

We find that indeed the squared deviations decrease as $n$ increases, in affirmation with the theory.

\subsection{How much do we lose by rounding?}
\label{subsect:experiment5}

Up to this point, we only investigated the MSE proxy $\overline{MSE}(X,\tilde{X})$ in itself. We would now like to see if this quantity indeed tells us something practical about the actual problem we set out to solve, namely, about the error $MSE(X,\hat{X})$ after rounding. In the experiment outlined below, we used optimal blockwise rounding, which for the following groups is given explicitly:
\begin{enumerate}
	\item $\Gr = \mathbb{Z}_2$ represented as $\left\{ \pm1 \right\}$. Optimal rounding is given by
	\[
	\textrm{round}(X_i) = \text{sgn}(X_i) \in \left\{ \pm 1 \right\} \,.
	\]
	\item $\Gr = O(3)$ represented as rotation matrices. Optimal rounding is given by
	\[
	\textrm{round}(X_i) = UV^{T}
	\]
	where $X_i = U\Sigma V^T$ is the SVD. Indeed, if $O\in O(3)$,
	\[
	\norm{O-X_i}_F^2 = \norm{X}_F^2 + d - 2\tr \left( O^T X_i \right) \,,
	\]
	where $\tr (O^T X_i) \le \tr \Sigma$, with equality when $O=UV^T$ 
	\footnote{ Recall that by duality for matrix norms,
	\[
		\tr (X^T Y) \le \norm{X} \norm{Y}_{*}
	\]
	where $\norm{\cdot}$ is the operator norm and $\norm{\cdot}_*$ is the
Nuclear norm (sum of singular values). }.
\end{enumerate}

We run the spectral method on a random problem instance with a dense measurement graph, $q\in \left\{ 0.5,1 \right\}$. We let the corruption scale like $p=\gamma/\sqrt{n}$, with $\gamma$ varying around the theoretical asymptotic threshold $\gamma^{*}=1/\sqrt{q}$. We compare the observed and asymptotic MSE proxy, $\overline{MSE}$, with the true MSE obtained after block-wise rounding of the measurement eigenvectors. We do this for the two groups $\Gr=\mathbb{Z}_2,O(3)$, where in each experiment we attempt recovery from $n=400$ samples. Our results are summarized in Figure \ref{fig:experiment5}.

	\begin{figure}[tbh!]
		  	       \begin{subfigure}{.5\textwidth}
	            \centering
	            \includegraphics[width=\textwidth]{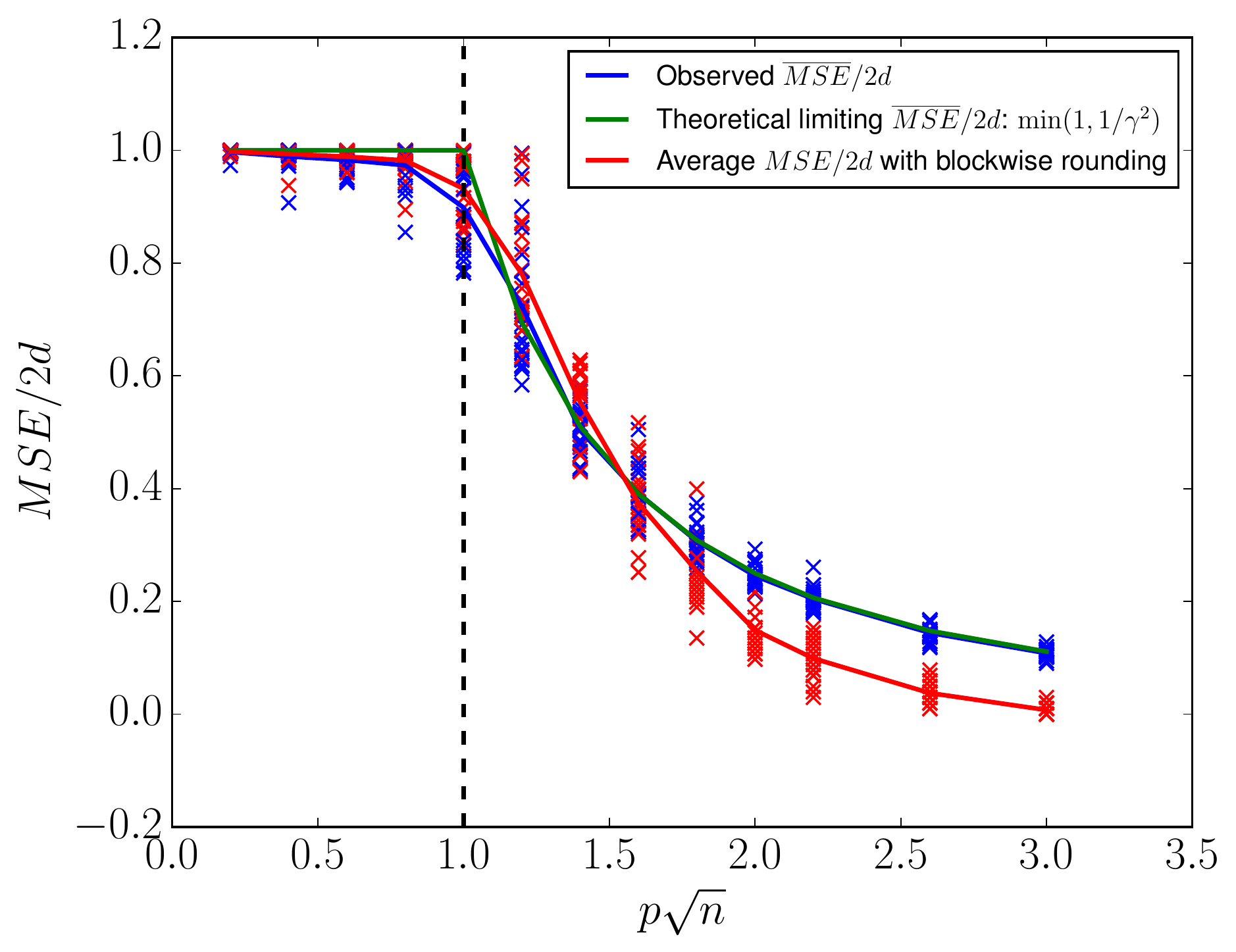}
	      \end{subfigure}
	      \begin{subfigure}{.5\textwidth}
	            \centering
	            \includegraphics[width=\textwidth]{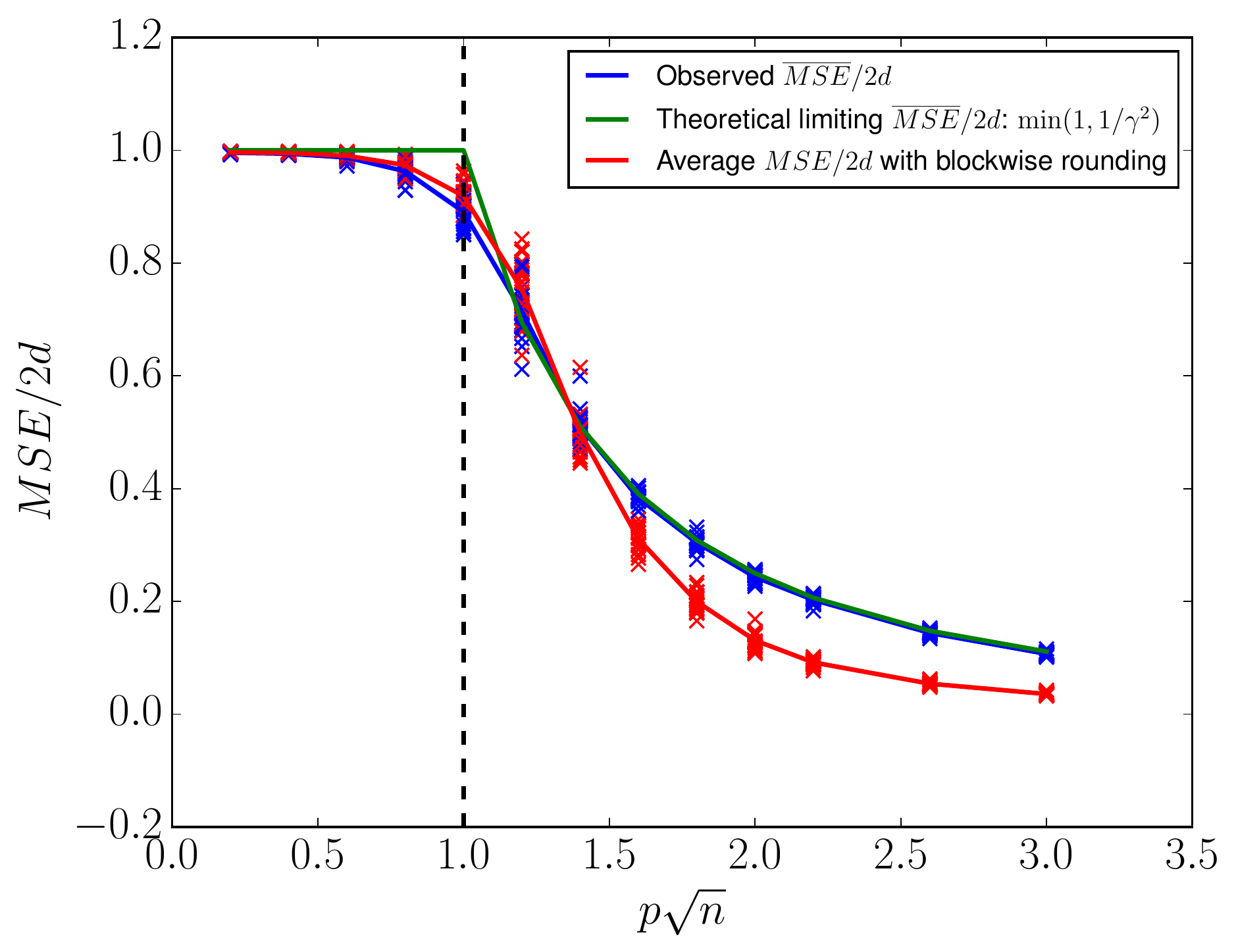}
	      \end{subfigure}
	      \begin{subfigure}{.5\textwidth}
	            \centering
	            \includegraphics[width=\textwidth]{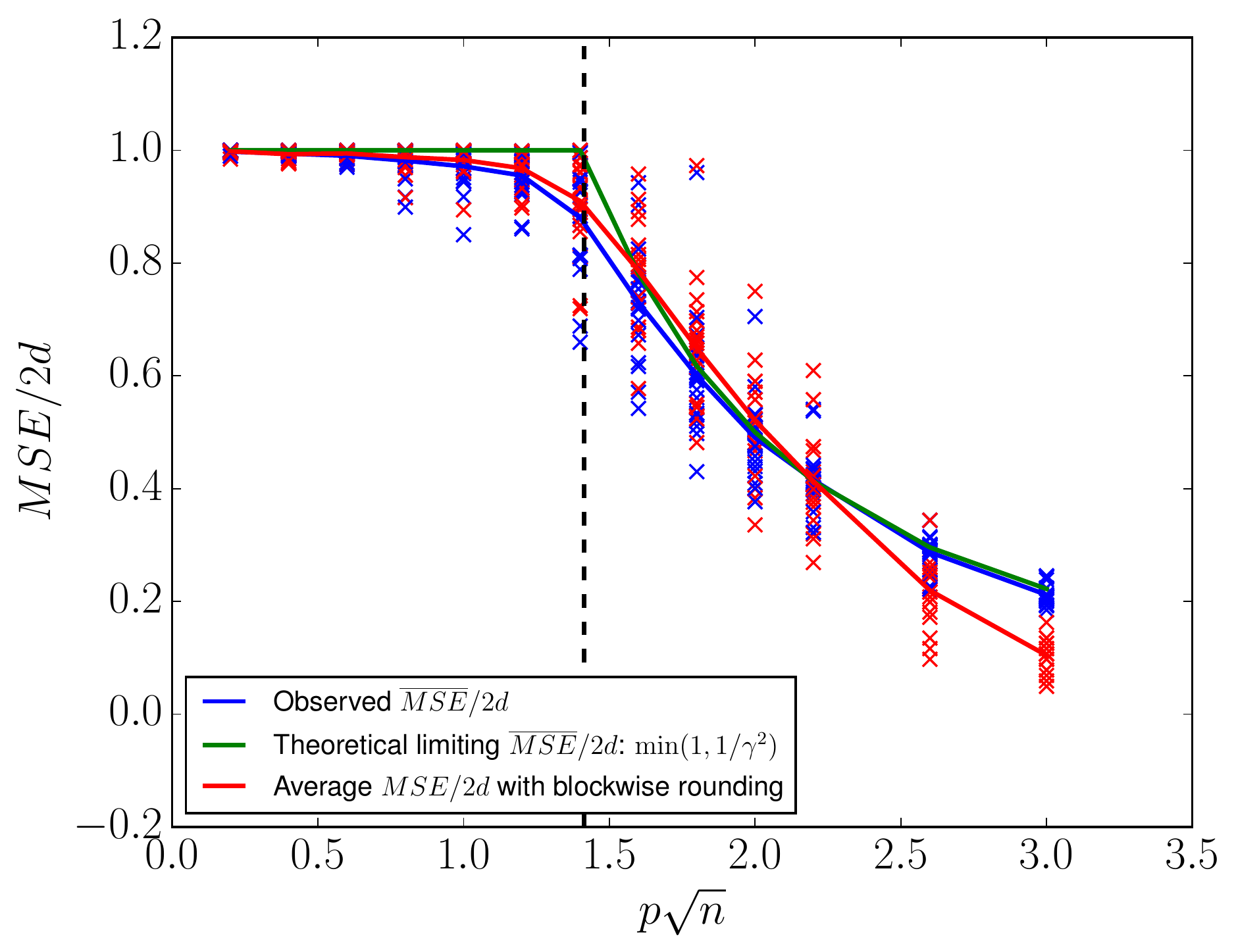}
	      \end{subfigure}
	      \begin{subfigure}{.5\textwidth}
	            \centering
	            \includegraphics[width=\textwidth]{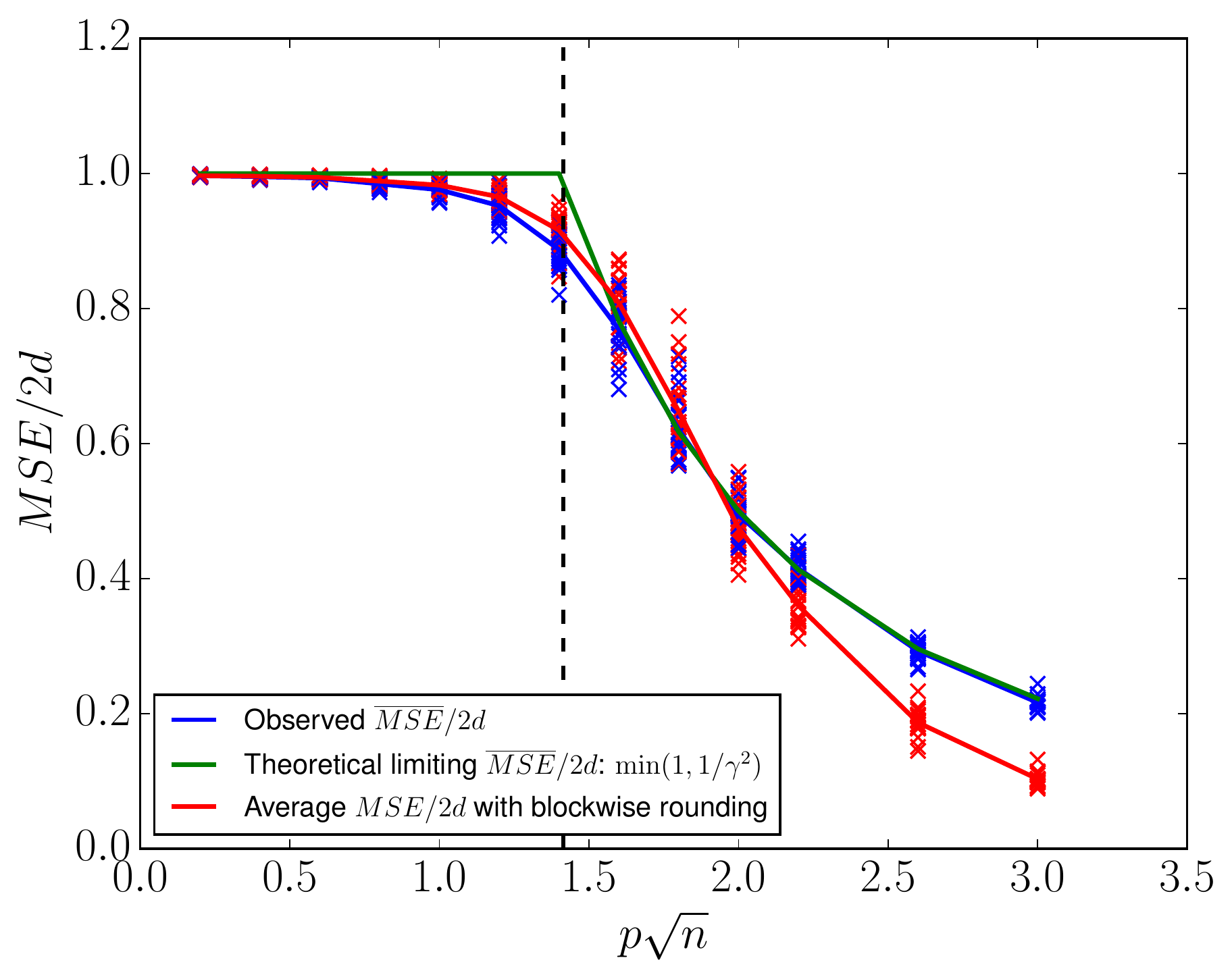}
	      \end{subfigure}

		  \caption{
		  Results of the experiment outlined in Subsection
	  \ref{subsect:experiment5}. Each point on the curve is the average of
  $T=20$ individual problem instances (the individual results are marked by
  `x'-s). The dashed vertical line marks the asymptotic threshold
  $\gamma^{*}=1/\sqrt{q}$. Left: $\Gr=\mathbb{Z}_2$; right: $\Gr=O(3)$. Top:
  $q=1.0$; bottom: $q=0.5$.}
	      \label{fig:experiment5}
	\end{figure}

We find that the rounded MSE displays a noise-sensitivity phase transition at roughly the same location as the phase transition for $\overline{MSE}$. Indeed, at high noise (low $\gamma$) $\overline{MSE}$ appears to be a reliable proxy for MSE; at low noise, it appears to give consistently pessimistic estimates for the rounded MSE.

\subsection{Adding additive noise}
\label{subsect:experiment6}

We now consider the case where we have both outlier-type corruptions, and also (real valued, Gaussian) additive noise. In all the experiments in this section we take $q=1$, $n=400$ and $\Gr=SO(3)$. 

Recall that $1/\gamma \approx \beta = \frac{\sqrt{1-p+\sigma^2}}{p\sqrt{n}}$, so that solving for $\sigma$ in terms of $p$ and $\gamma$, we have
\[
\sigma = \sqrt{\frac{p^2 n}{\gamma^2}+p-1}\,.
\]
For three choices of $p$: (1) $p=3/\sqrt{n}=0.15$; (2) $p=2/n^{1/4}\approx0.45$; and (3) $p=1$, we run random recovery experiments so that $\sigma$ is chosen according to the expression above, as to give $\gamma\in \{0.5,1,\ldots,3\}$. Our results are summarize in Figure~\ref{fig:experiment6}. In all the cases considered, we observe good agreement with the asymptotic predictions of Theorems~\ref{thm:limiting_mse} and \ref{thm:limiting_statistics}.

\begin{figure}[tbh!]
	\begin{subfigure}{.5\textwidth}
		\centering
		\includegraphics[width=\textwidth]{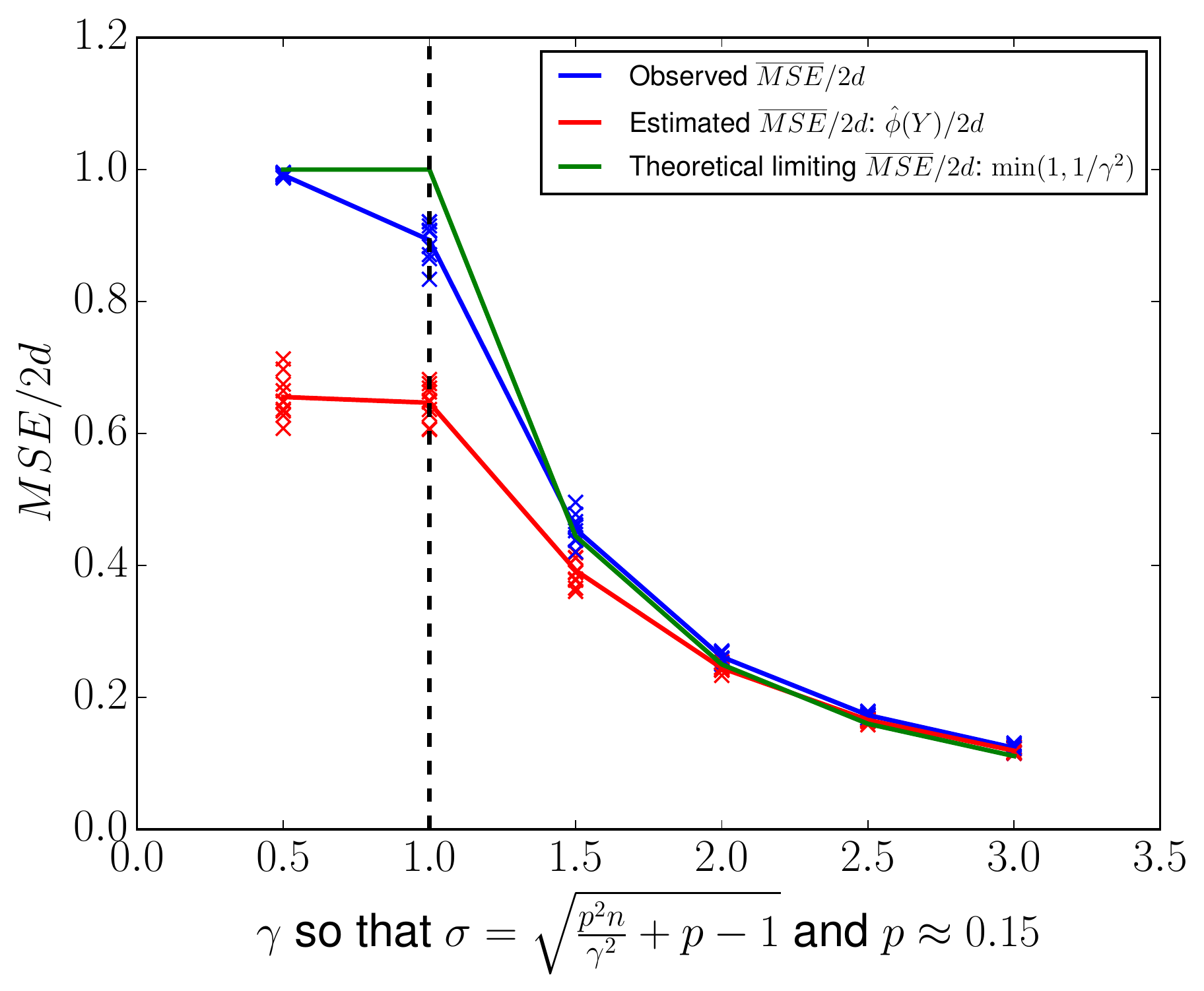}
	\end{subfigure}
	\begin{subfigure}{.5\textwidth}
		\centering
		\includegraphics[width=\textwidth]{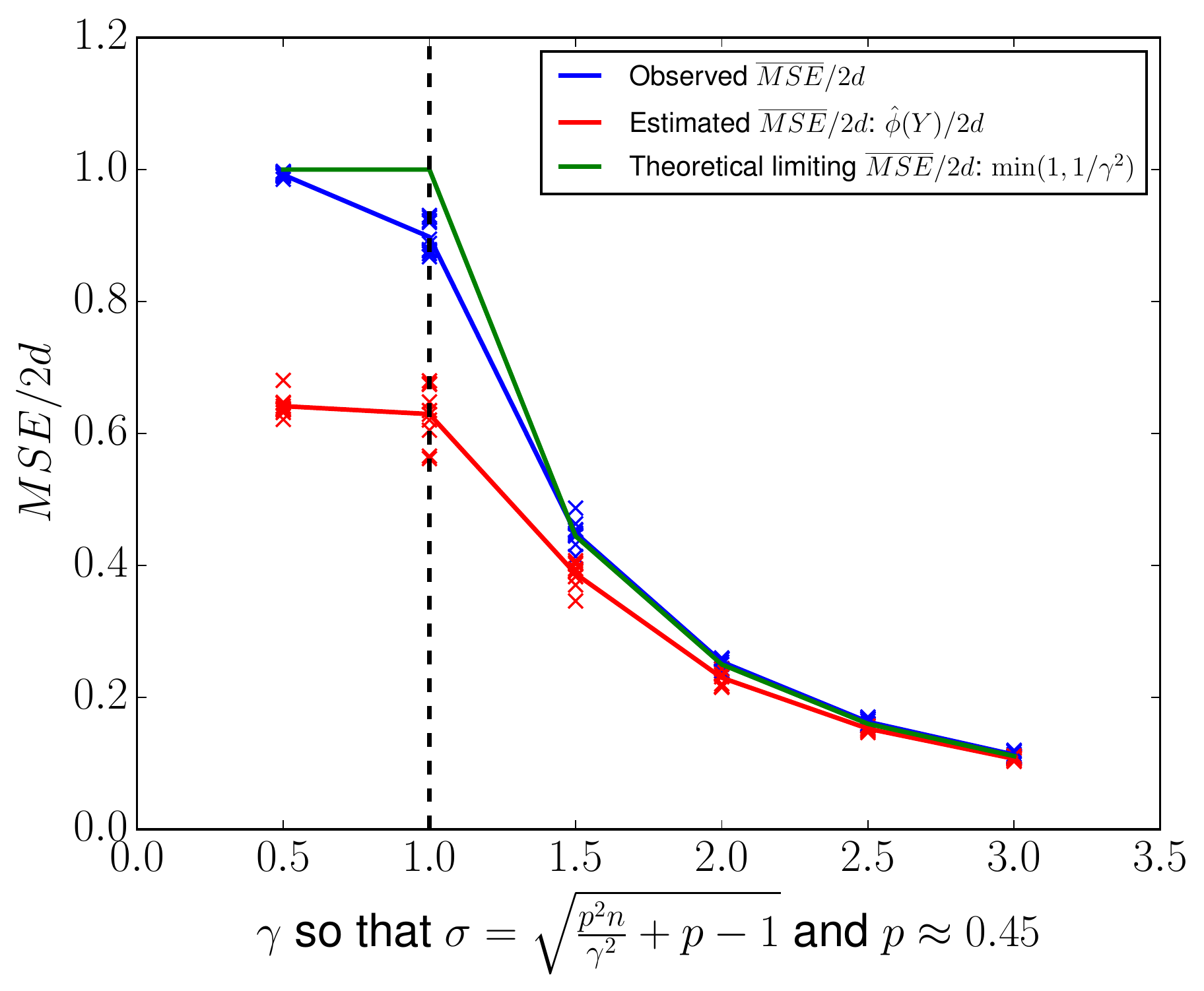}
	\end{subfigure}
	\begin{subfigure}{.5\textwidth}
		\centering
		\includegraphics[width=\textwidth]{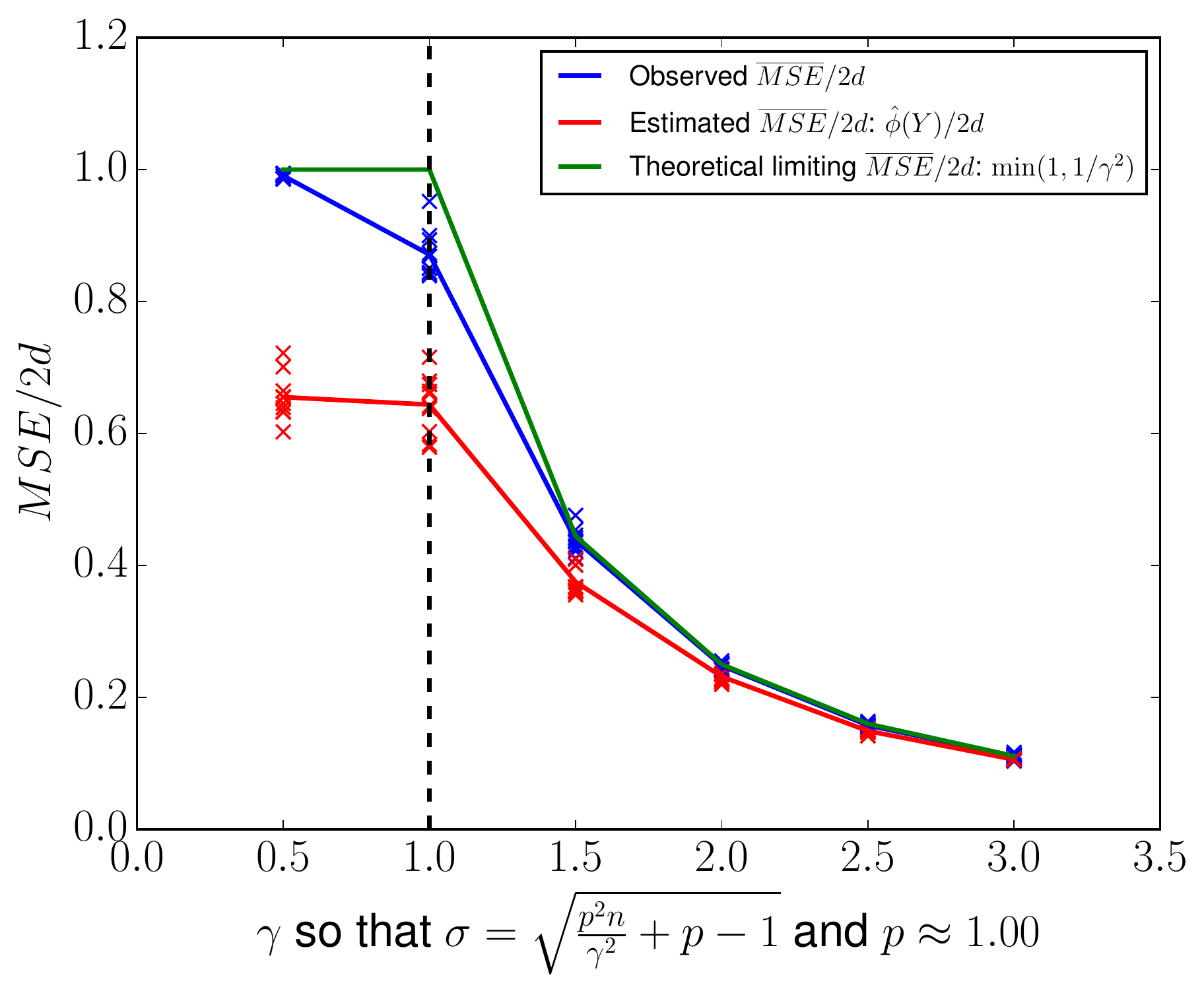}
	\end{subfigure}
	\caption{
		Results of the experiment outlined in Subsection \ref{subsect:experiment6}. Each point on the curve is the average of $T=10$ individual problem instances (the individual results are marked by `x'-s). The dashed vertical line marks $\gamma=1$, the location of the theoretically predicted noise threshold. }
	\label{fig:experiment6}
\end{figure}

\section{Proof of the main results} \label{sec:red}

We now turn to prove our main results Theorems \ref{thm:limiting_mse} and \ref{thm:limiting_statistics}. We first introduce some notation, which will shall use throughout the rest of the analysis.

Let $\Pi$ be an $nd\times nd$ Hermitian block matrix, such that every $d\times d$ block above the diagonal is a random element of the form $\pi(g_{ij})$, with $g_{ij}\sim \textrm{Haar}$. Let $\Ec$ be a Hermitian block matrix (of the same dimensions) such that every block above the diagonal is a matrix of the form $\epsilon_{ij}$, which is a $d\times d$ i.i.d Gaussian (either real or complex) matrix, with mean $0$ and variance (in the complex case - absolute second moment) $1$. The block diagonal of both $\Pi$ and $\Ec$ is zero. Let $\Delta$ be a symmetric block matrix, such that every block above the diagonal is either all-ones with probability $p_n$, or all-zeros with probability $1-p_n$. Let $E$ be a matrix of the same kind as $\Delta$, with $p_n$ replace by $q_n$. The block diagonal of $\Delta$ and $E$ is all-ones. The matrices $\Pi$, $\Ec$, $\Delta$ an $E$ are all independent of one another, and of the signal $X$. 

The none-zero blocks of $E$ will correspond to the measurements that are available to us, that is, $(i,j)\in\Lambda$ if and only if the $(i,j)$-th block of $E$ is all-ones. Similarly, the blocks of $\Delta$ will correspond to those links that were not replaced by complete noise (that is, we observe a possible noisy version of the true group difference). 

We can now write our measurement matrix $Y$ in the form
\begin{equation}
	Y = E \odot \Delta \odot (XX^*) + E \odot \left( \overline{\Delta}\odot \Pi + \frac{\sigma_n}{\sqrt{d}}\Ec \right) \,,
\end{equation}
where $\odot$ denotes entry-wise (Hadamard) product, $\overline{\Delta}=\one\one^*-\Delta$ (the "complement" of $\Delta$), and $\one$ is an $nd$-dimensional all-ones vector. The first summand above will be thought of as the "signal", and second one will be thought of as the "noise". Observe that at this point, these two random matrices are dependent on one another (through $E$ and $\Delta$). Our first task is to show that as $n\to\infty$, this dependence asymptotically decouples. 

We start by rescaling $Y$, and consider for now the matrix
\begin{equation}
	Y_1 := \frac{1}{n q_n p_n}Y = \frac{1}{n q_np_n} E \odot \Delta \odot (XX^*) + \frac{1}{n q_np_n}E \odot \left( \overline{\Delta}\odot \Pi + \frac{\sigma_n}{\sqrt{d}}\Ec \right) \,.
\end{equation}
Consider also the matrix 
\begin{equation}
	Y_2 = \frac{1}{n}XX^* + \frac{1}{n q_np_n}E \odot \left( \overline{\Delta}\odot \Pi + \frac{\sigma_n}{\sqrt{d}}\Ec \right) \,,
\end{equation}
where we note that 
\[
\E_{E,\Delta} \left[ \frac{1}{n q_n p_n} E \odot \Delta \odot (XX^*) \right] = \frac{1}{n}XX^* 
\]
(here we take the expectation only with respect to $E$ and $\Delta$). Our first result states that asymptotically, $Y_1$ and $Y_2$ become arbitrarily close to one another in operator norm. This would imply, in a manner which will be made precise later, that in order to study the top $d$ eigenvalues and eigenvectors of $Y_1$, it suffices to study those of $Y_2$.

\begin{proposition}
	\label{prop:signal_noise_decouple}
	There is some universal numerical constant $c_0>1$ (that doesn't depend on the representation $\pi$ or the group we use) such that if $p_nq_n = \omega(\log^{c_0}(n)/n)$. Then 
	\[
	\norm{Y_1-Y_2} \aslim 0 \,,
	\]
	as $n\to\infty$. 
\end{proposition}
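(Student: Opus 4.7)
My plan is to observe that the additive noise block $\frac{1}{nq_np_n} E \odot (\overline{\Delta} \odot \Pi + \frac{\sigma_n}{\sqrt{d}} \Ec)$ cancels \emph{exactly} in $Y_1 - Y_2$, leaving only a ``signal residual'':
\[
Y_1 - Y_2 \;=\; \frac{1}{n q_n p_n}\, E \odot \Delta \odot (XX^*) \;-\; \frac{1}{n}\, XX^* .
\]
Since each $d \times d$ block of $E\odot \Delta$ equals $J_d$ (with probability $q_n p_n$, independently across $i<j$) or zero, I would encode the block-level randomness by a symmetric scalar matrix $A \in \{0,1\}^{n \times n}$ with $A_{ii}=1$, and set $C_{ij} := A_{ij}/(q_np_n) - 1$. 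The $(i,j)$-block of $Y_1 - Y_2$ then equals exactly $\frac{1}{n} C_{ij}\, \pi(g_i) \pi(g_j)^{*}$. Letting $D := \text{blockdiag}(\pi(g_1),\ldots,\pi(g_n))$, which is unitary because each $\pi(g_i) \in U(d)$, one obtains the clean factorization
\[
Y_1 - Y_2 \;=\; \tfrac{1}{n}\, D\,(C \otimes I_d)\, D^{*} ,
\]
and hence $\norm{Y_1 - Y_2} = \norm{C}/n$ \emph{deterministically}, irrespective of $X$.

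The problem thus collapses to showing $\norm{C}/n \aslim 0$. I would decompose $C = \alpha I_n + S$ with $\alpha := 1/(q_np_n) - 1$ and $S = \sum_{i<j} \xi_{ij}(e_i e_j^{*} + e_j e_i^{*})$, where $\xi_{ij} := (A_{ij} - q_np_n)/(q_np_n)$ are i.i.d., centered, uniformly bounded in absolute value by $1/(q_np_n)$, and of variance at most $1/(q_np_n)$. Applying the Matrix Bernstein inequality to $S$, viewed as a sum of independent rank-two Hermitian matrices with parameters $R = 1/(q_np_n)$ and $\sigma^2 \le n/(q_np_n)$, and taking $t = K\sqrt{n \log n / (q_np_n)}$ (which under the hypothesis is $o(n)$, placing us inside the sub-Gaussian regime $t \le 3\sigma^2/R = 3n$), yields
\[
\Pr\!\left( \norm{S} \ge t \right) \;\le\; 2n\exp\!\left( -\tfrac{t^2}{4\sigma^2} \right) \;\le\; 2\, n^{\,1 - K^2/4}.
\]
For $K$ sufficiently large this is summable in $n$, so by Borel--Cantelli, almost surely $\norm{S} \le K\sqrt{n \log n / (q_np_n)}$ for every sufficiently large $n$. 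Combining with the trivial bound $|\alpha| \le 1/(q_np_n)$ gives
\[
\tfrac{\norm{C}}{n} \;\le\; \tfrac{1}{n q_n p_n} \;+\; K \sqrt{\tfrac{\log n}{n q_n p_n}}\,,
\]
which vanishes as $n \to \infty$ under $q_n p_n = \omega(\log^{c_0} n / n)$ for any constant $c_0 > 1$.

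Conceptually, the heart of the argument is the unitary block-conjugation by $D$, which decouples the random measurement mask from the representation $\pi$ and reduces a matrix-valued, group-dependent question to a scalar Wigner-type concentration estimate completely agnostic to $\Gr$. The only genuine technical input is the spectral norm bound on the centered sparse scalar matrix $C - \alpha I$, which is standard. I expect the ``hard'' step to be mere bookkeeping: choosing $K$ large enough to make the Bernstein tail summable (so Borel--Cantelli applies) while simultaneously respecting the sub-Gaussian threshold $t \le 3n$; both are transparent under the assumed scaling, and the analysis is seen to work for any $c_0 > 1$.
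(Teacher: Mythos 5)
Your proposal is correct, and it takes a genuinely different (and slicker) route than the paper. The paper proves a general-purpose lemma: for \emph{any} matrix $A$ with bounded entries, independent of the Bernoulli block mask $M$, one has $\frac{1}{n}\norm{\frac{1}{t_n}M\odot A-A}\aslim 0$; the proof is a trace-moment computation $\E[\tr(Z^{2k})]$ with $k$ slightly larger than $\log n$, reusing the Bai--Yin cycle-counting estimates from the proof of Theorem \ref{thm:edge} (which is why the constant $c_0$ is left unspecified there). You instead exploit the specific structure of $A=XX^*$: since each block is $\pi(g_i)\pi(g_j)^*$ with unitary factors and the mask is constant on blocks, conjugation by the block-diagonal unitary $D=\mathrm{blockdiag}(\pi(g_1),\ldots,\pi(g_n))$ gives the exact identity $Y_1-Y_2=\frac{1}{n}D(C\otimes I_d)D^*$, hence $\norm{Y_1-Y_2}=\norm{C}/n$ with $C$ a scalar $n\times n$ centered-and-rescaled Erd\H{o}s--R\'enyi adjacency matrix (plus $\alpha I_n$). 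Matrix Bernstein then yields the explicit rate $\norm{S}/n=O\bigl(\sqrt{\log n/(nq_np_n)}\bigr)$ almost surely, which vanishes for any $c_0>1$; your parameter checks ($R=1/(q_np_n)$, $\sigma^2\le n/(q_np_n)$, $t=o(n)\le 3\sigma^2/R$, summability of $2n^{1-K^2/4}$) are all in order. What the paper's route buys is generality in $A$ (no unitary block structure needed); what yours buys is an elementary, quantitative argument with an off-the-shelf concentration inequality and no combinatorics, which fully suffices for the proposition as stated since the lemma is only ever applied with $A=XX^*$.
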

The proof of Proposition \ref{prop:signal_noise_decouple} is technical, and deferred to Subsection~\ref{subsection:signal_noise_decouple_proof}.

Our present goal, then, is to study the top eigenvectors of the matrix $Y_2$.
We denote
\begin{equation}
	\Wc = \frac{1}{\tau_n \sqrt{n}} E \odot \left( \overline{\Delta}\odot \Pi + \frac{\sigma_n}{\sqrt{d}}\Ec \right) \,,
\end{equation}
where 
\begin{equation}
	\tau_n := \sqrt{{q_n}\left(1-p_n+\sigma_n^2\right)} \,.
\end{equation}
We also denote
\begin{equation}
\begin{split}
	\beta_n 
	&=  \frac{\sqrt{n}\tau_n}{n q_n p_n} \\
	&= \frac{1}{p_n\sqrt{q_n n}} \sqrt{1-p_n+\sigma_n^2} \,,
\end{split}
\end{equation}
which will be thought of as a noise-to-signal ratio. Note that the normalization was chosen so that the off-(block-)diagonal elements of $\Wc$ all have mean $0$ and second absolute moment $E\abs{\Wc_{lk}^2}=1/(nd)$. We now write the matrix $Y_2$ as 
\begin{equation}
	Y_2 = \frac{1}{n}XX^* + \beta_n \Wc\,.
\end{equation}

\subsection{Analysis of the highly noisy case, using the theory of low-rank perturbations}

Observe that the matrix $\frac{1}{n}XX^*$ has rank $d$, and all of its none-zero eigenvalues are exactly $1$ (and an orthonormal basis for the corresponding eigenspace is given by the columns of $X/\sqrt{n}$). This low-rank component is now, of course, independent of $\beta_n \Wc$. The matrix $Y_2$ can be thought of as a low-rank perturbation of a random matrix. Models of this form are very common is modern statistics and engineering (see, for example, the books \cite{bai2010spectral,couillet2011random}), and many properties about their limiting extreme eigenvalues and eigenvectors can be calculated, in terms of properties of the unperturbed random matrix $\Wc$.  

In order to leverage existing results about models of this kind, we need to study the limiting properties of the spectrum of $\Wc$. In particular, we will need two components: (1) to calculate the limiting distribution of the eigenvalues (if such exists); (2) to calculate the almost-sure limits (if those exist) of the extreme (largest and smallest) eigenvalues. We prove the following two results, which could also be of independent interest:

\begin{theorem}[Semicircle law for Hermitian block matrices with Haar-distributed entries]
	\label{thm:bulk}
	Suppose that either one of the following holds:
	\begin{enumerate}
		\item (no outliers) $p_n=1$, and $q_n = \omega \left(1/n\right)$. 
		\item (with outliers, but enough noise) $\tau_n^2 = q_n(1-p_n+\sigma_n^2) = \omega\left(1/n\right)$ and $q_n = \omega\left(1/n\right)$. In particular, $\sigma_n = \Omega(1)$ is enough to satisfy the first requirement. When $\sigma_n=0$, the first condition reduces to $q_n(1-p_n)=\omega(1/n)$ - that is, that there are sufficiently many outliers.
	\end{enumerate}
	Then as $n\to\infty$, the empirical spectral distribution (ESD) of $\Wc$ converges weakly almost surely to the Wigner semicircle law. That is, denoting the eigenvalues of $\Wc$ by $\lambda_1 \ge \ldots \ge \lambda_{nd}$, we have
	\[
	\frac{1}{nd} \sum_{i=1}^{nd} \delta_{\lambda_i} \to f_{sc}(\lambda)d\lambda \,,
	\]
	almost surely (where by convergence we mean convergence of measures in the weak sense). Recall that the Wigner semicircle law is given by the density 
	\begin{equation}
	\label{eq:wiger-semicircle}
	f_{sc}(\lambda) = \frac{1}{2\pi} \sqrt{4-\lambda^2} \cdot
	\mathbf{1}_{ {\abs{\lambda}\le 2} }
	\end{equation}
	with respect to Lebesgue measure.
\end{theorem}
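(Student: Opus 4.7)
The plan is to establish the semicircle law via the method of moments: I would show that for every fixed integer $k \geq 0$,
\[
\frac{1}{nd}\E\bigl[\tr(\Wc^{k})\bigr] \longrightarrow m_k := \int x^{k} f_{sc}(x)\,dx\,,
\]
where $m_k$ equals the Catalan number $C_{k/2}$ for even $k$ and vanishes for odd $k$, and then promote this to almost-sure weak convergence of the ESD by a variance bound combined with Borel--Cantelli and Carleman's moment-determinacy criterion (which applies since $f_{sc}$ has compact support).

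First I would check that the entrywise normalization of $\Wc$ agrees with that of a standard Wigner matrix of size $nd$. The mean of each scalar entry vanishes because $\int_{\Gr}\pi(g)\,dg = 0$ for a non-trivial irreducible $\pi$ (Schur's lemma), and Schur orthogonality
\[
\int_{\Gr}\pi(g)_{\alpha\beta}\,\overline{\pi(g)_{\gamma\delta}}\,dg = \frac{1}{d}\delta_{\alpha\gamma}\delta_{\beta\delta}\,,
\]
together with the Gaussian covariance, yields
\[
\E\bigl[\Wc_{ij,\alpha\beta}\,\overline{\Wc_{ij,\gamma\delta}}\bigr] = \frac{1}{nd}\delta_{\alpha\gamma}\delta_{\beta\delta}
\]
exactly, thanks to the definition of $\tau_n$. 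The hypotheses $q_n = \omega(1/n)$ and $\tau_n^2 = \omega(1/n)$ guarantee that neither graph sparsity nor outlier sparsity produces atypically heavy-tailed rescaled entries, so the usual sparse-Wigner arguments do not break down.

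The heart of the argument is the moment expansion $\tr(\Wc^{k}) = \sum \Wc_{i_1 i_2}\cdots \Wc_{i_k i_1}$, with each scalar index written as a block-position pair $i_t = (a_t,\alpha_t)$ and the sum grouped by the underlying block walk $a_1\to\cdots\to a_k\to a_1$. Because distinct-pair blocks are independent, the expectation factorizes over distinct block-edges. A block-edge visited exactly once forces a single Haar integral of $\pi(g)$, which vanishes by irreducibility, so for a nonzero contribution every block-edge must be visited at least twice; together with $\Wc$ having zero block-diagonal, this forces $k \geq 2e$ where $e$ is the number of distinct block-edges, and tree walks ($k = 2e$, $v = e+1 = k/2+1$) can occur only when $k$ is even. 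For odd $k$ some edge is visited at least three times, losing at least one distinct block vertex and yielding contribution $O(n^{-1/2})$ after normalization; for even $k$, edges visited four or more times similarly cost a block vertex and are subleading. The surviving leading configurations are thus block walks tracing a rooted plane tree with each edge used twice; the Schur covariance then propagates the internal indices consistently along the tree, giving $d^{v}$ compatible assignments. Combining this with the tree count $C_{k/2}$, the $\sim n^{k/2+1}$ choices of distinct block labels, and the $(nd)^{-k/2}$ entry scaling, one obtains contribution $nd\cdot C_{k/2}$, matching $m_k$ after division by $nd$.

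Concentration follows from a second-moment bound: expanding $\E[\tr(\Wc^k)^2]$ as a sum over ordered pairs of closed walks, disconnected pairs cancel against the square of the mean, while connected pairs lose at least two block vertices relative to a pair of independent trees, yielding variance $O(n^{-2})$, summable in $n$. Borel--Cantelli then gives almost-sure convergence of every moment, and Carleman concludes. The principal technical obstacle, in my view, is the representation-theoretic bookkeeping for block-edges visited three or more times: the unitary constraint $\pi(g)\pi(g)^{*} = I$ produces within-block correlations absent in the classical scalar Wigner case, so the corresponding Weingarten/character integrals must be bounded uniformly in $n$ (with constants depending only on $d$ and the number of visits), and the sparsity factors from $E$ and $\Delta$ in the two scaling regimes (1) and (2) must be tracked carefully to confirm that no non-tree block walk survives to leading order.
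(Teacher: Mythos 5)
Your route is genuinely different from the paper's: the paper does not run the trace--moment combinatorics itself but instead verifies the hypotheses (centering, a uniform second-moment bound on block rows, and a Lindeberg condition) of Girko's theorem for Hermitian block matrices with independent blocks, and then solves the resulting canonical matrix equation for the $d\times d$ matrix Stieltjes transform $C(z)$ --- the ansatz $C(z)=\alpha(z)I$ works because $\E[\Wc_{ij}\Wc_{ij}^*]=\frac{1}{n}I$, and $\alpha$ is recognized as the semicircle transform. That approach offloads both the combinatorics and the almost-sure concentration to the cited theorem; yours is self-contained. Your expectation computation is essentially correct: Schur orthogonality does give the exact i.i.d.-Wigner covariance structure, the doubled-tree walks do dominate with $d^{v}$ consistent internal index assignments, and higher-multiplicity block-edges are suppressed. (Minor imprecision: in the sparse regimes the suppression factor for a walk with $e<k/2$ distinct block-edges is of order $\bigl(\min(nq_n,\,n\tau_n^2)\bigr)^{e-k/2}$ rather than a power of $n$; it still vanishes under the stated hypotheses, which is all you need for fixed $k$.)

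The genuine gap is in the concentration step. The claimed variance bound $O(n^{-2})$ is false under the theorem's hypotheses. Connected pairs of closed walks whose union is a tree with one block-edge of multiplicity $4$ and the rest of multiplicity $2$ do not cancel against the square of the mean, because $\E\abs{W_{f}}^4-(\E\abs{W_{f}}^2)^2$ is of order $n^{-1}\rho_n^{-1}$ rather than $n^{-2}$, where $\rho_n=\min(nq_n,\,n\tau_n^2)$; summing over such configurations gives
\[
\mathrm{Var}\Bigl[\tfrac{1}{nd}\tr(\Wc^{k})\Bigr]\;=\;\Theta\bigl(1/(n\rho_n)\bigr)\,,
\]
and the hypotheses only guarantee $\rho_n\to\infty$, so this need not be summable (take $q_n\sim \log\log(n)/n$). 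Hence Chebyshev plus Borel--Cantelli does not deliver almost-sure convergence as written. This is exactly the sparse-Wigner pathology: a single rescaled entry can be as large as $1/(\tau_n\sqrt{n})$, which decays arbitrarily slowly. The standard repairs are either to bound the fourth central moment of the trace (Bai--Yin/Bai--Silverstein style, counting connected $4$-tuples of walks, which does give a summable bound), or to abandon moments for this step and prove almost-sure convergence of the Stieltjes transform via a martingale-difference/rank-one-perturbation argument, whose increments are deterministically $O(1/(n\Im z))$ independently of sparsity. Either fix is routine but must be supplied; as stated, the proof of almost-sure (as opposed to in-probability) convergence is incomplete in precisely the sparse regime the theorem is designed to cover.
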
 

\begin{theorem}[The extreme eigenvalues converge to the bulk edge]
	\label{thm:edge}
	There is a universal numeric constant $c_1>1$ such that the following holds. Suppose that either of the following holds:
	\begin{enumerate}
		\item (no outliers) $p_n=1$ and $q_n=\omega(\log^{c_1}(n)/n)$.
		\item (with outliers, but enough noise) $p_n<1$, $\tau_n = \omega(\log^{c_1}(n)/n)$ and $q_n = \omega(\log^{c_1}(n)/n)$.
	\end{enumerate}
Then
	\begin{equation}
		\lambda_{\max}(\Wc) \to 2\,\quad \lambda_{\min}(\Wc) \to -2 \,,
	\end{equation}
	almost surely as $n\to\infty$. Note that $\pm 2$ are the edges of the support of the Wigner semicircle law, Eq. \ref{eq:wiger-semicircle}.
\end{theorem}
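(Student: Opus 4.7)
The plan is to combine a moment (trace) method bound with the semicircle law already established in Theorem~\ref{thm:bulk}. The easy direction is a consequence of Theorem~\ref{thm:bulk}: since the limiting empirical distribution is the semicircle, which assigns strictly positive mass to both $[2-\epsilon,2]$ and $[-2,-2+\epsilon]$ for every $\epsilon>0$, we automatically get $\liminf_n \lambda_{\max}(\Wc)\ge 2$ and $\limsup_n \lambda_{\min}(\Wc)\le -2$ almost surely. The real work is the matching upper bound on $\lambda_{\max}$ (and, by symmetry of the distribution, on $-\lambda_{\min}$).

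The standard vehicle is the inequality $\lambda_{\max}(\Wc)^{2k}\le \mathrm{tr}(\Wc^{2k})$. I would aim to show $\mathbb{E}\,\mathrm{tr}(\Wc^{2k_n}) \le (nd)(2+o(1))^{2k_n}$ along $k_n = c'\log n$, and then apply Markov plus Borel--Cantelli after an appropriate union bound over $n$. Expanding the trace produces
\[
\mathrm{tr}(\Wc^{2k}) = \sum_{a_1,\ldots,a_{2k}} \Wc_{a_1 a_2}\Wc_{a_2 a_3}\cdots \Wc_{a_{2k} a_1},
\]
where each index $a_r=(I_r,\alpha_r)$ splits into a block label $I_r\in[n]$ and an intra-block index $\alpha_r\in[d]$. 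I would group closed walks by the multigraph they induce on the block indices, exploiting three structural facts: independence of blocks across distinct unordered pairs $(I,J)$; the first-moment vanishing and the Schur-orthogonality relation $\mathbb{E}[\pi(g)_{\alpha\beta}\overline{\pi(g)_{\gamma\delta}}]=\frac{1}{d}\delta_{\alpha\gamma}\delta_{\beta\delta}$ (valid by Assumption~\ref{assum:irreducible}, since $\pi$ is non-trivial irreducible), together with the matching Gaussian second moments for the $\Ec$ part; and the Bernoulli structure of $E$ and $\overline{\Delta}$. With the chosen normalization $\tau_n\sqrt n$, each ``active'' block-edge contributes second absolute moment exactly $1/(nd)$, and the leading contribution in the trace comes from walks that form a double tree on $k+1$ block vertices, using each block-edge exactly twice. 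There are $C_k$ (Catalan) combinatorial shapes and $\sim n^{k+1}$ choices of block labels, yielding $C_k\cdot nd$ after the $d^{2k+1}$ factor from intra-block sums collapses through Schur orthogonality. This exactly reproduces the $2k$-th moment of the semicircle.

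The main obstacle is controlling the walks that are \emph{not} double trees when $k=k_n$ grows with $n$ and, more seriously, when the measurement graph is sparse ($q_n\to 0$). Walks that traverse some block-edge more than twice acquire higher moments of the $\pi$-blocks, which I would bound using the finite dimension of $\pi$ and the crude inequality $\|\pi(g)\|\le 1$; walks that revisit block vertices without pairing edges create combinatorial excess. In the sparse regime the usual Furedi--Komlos type path count picks up factors of $(nq_n)^{-1}$ per defect, so the condition $q_n=\omega(\log^{c_1}(n)/n)$ is precisely what is needed for these defects to be swamped by the $(2+\epsilon)^{2k_n}$ budget once $k_n\sim c'\log n$ and $c_1$ is chosen large enough in terms of $c'$ (and of the combinatorial bound on defect penalties). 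This is the same mechanism that forces the polylogarithmic sparsity assumption and is, in fact, the reason why Theorem~\ref{thm:edge} requires a stronger hypothesis than Theorem~\ref{thm:bulk}.

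Finally, I would handle the two cases in the hypothesis uniformly: when $p_n=1$ the matrix $\Wc$ reduces to a sparsified Gaussian block Wigner matrix, for which the above calculation specializes cleanly; in the general case the contributions of the Haar part and the Gaussian part decouple under the expectation (since $\Pi$, $\Ec$, $\Delta$, $E$ are independent), and the definition $\tau_n^2=q_n(1-p_n+\sigma_n^2)$ is exactly what makes the combined per-entry variance equal to $1/(nd)$, putting the spectral edge at $\pm 2$. The lower bound on $\lambda_{\min}$ then follows either by repeating the argument with $\mathrm{tr}(\Wc^{2k_n})$ and the entries being distributionally symmetric (up to the Haar-block structure), or more cleanly by noting that odd moments of the entries vanish so that the trace method in fact bounds $\max(\lambda_{\max},-\lambda_{\min})$ directly.
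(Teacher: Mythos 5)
Your proposal follows essentially the same route as the paper: the lower bounds $\liminf\lambda_{\max}\ge 2$, $\limsup\lambda_{\min}\le -2$ are read off from Theorem~\ref{thm:bulk}, and the matching upper bound on $\norm{\Wc}$ comes from $\E\tr(\Wc^{2k})$ with $k$ growing logarithmically, block-level closed walks, Schur orthogonality for the second moments of the Haar blocks, the Catalan/double-tree count for the leading term, and Bai--Yin/F\"uredi--Koml\'os defect counting whose $(n q_n)$-dependent penalties are absorbed by the $\omega(\log^{c_1}(n)/n)$ sparsity hypothesis. The paper organizes this as Lemma~\ref{lemma:symbol_bound} (a forced/unforced index count inside each block cycle) plus the estimate \eqref{eq:bai-yin-estimate}, but the mechanism is identical to what you describe.

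The one concrete omission is your treatment of block-edges traversed more than twice when additive Gaussian noise is present. You propose to control these via $\norm{\pi(g)}\le 1$, which handles the Haar part, but the entries of $\Ec$ are unbounded, and the crude per-entry bound needed for multiplicity-$\ge 3$ edges (each extra traversal costing a factor of the worst-case entry size rather than a variance) requires either an explicit truncation of $\Ec$ at level $O(\sqrt{\log(nd)})$ --- which is what the paper does, showing $\norm{\Wc-\overline{\Wc}}\aslim 0$ --- or a direct bookkeeping of Gaussian moments of order up to $2k\sim\log^{1+}(n)$. This is a standard and easily repaired step, but as written your defect analysis is incomplete without it. A second, smaller point: with $k_n=c'\log n$ for a fixed $c'$, the Markov bound only yields $\limsup\norm{\Wc}\le 2+t$ for $t$ bounded away from $0$ in terms of $c'$; you need either $k_n=\omega(\log n)$ (the paper takes $k\sim\log^{1.01}n$) or a diagonal argument letting $c'$ depend on $t$ and intersecting over a countable sequence $t_m\downarrow 0$.
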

Theorems \ref{thm:bulk} and \ref{thm:edge} are the main technical contributions of this paper. Their proofs are presented in Section~\ref{sec:rmt}. The conditions in Theorems \ref{thm:bulk} and \ref{thm:edge} may seem somewhat unintuitive at first sight. Their essence is this: in general, we disallow measurement graphs that are too sparse, specifically, that the average degree $nq_n$ is  too small \footnote{Note that $q_n n=\log(n)$ is the (scale of the) threshold for connectivity.}. Second, when there are no outliers ($p_n=1$), we adopt a scaling such that the non-zero entries of $\Wc$ are simply i.i.d Gaussians with unit variance (regardless of $\sigma_n$), so there shouldn't be any further problems. However, when there are outliers ($p_n<1$) we also have to account for the matrix $\Pi$, and ensure that our scaling for $1/\tau_n \sqrt{n}$ indeed makes the entries very small (rather than blow them up, instead).

Equipped with Theorems \ref{thm:bulk} and \ref{thm:edge}, we shall now leverage the results of \cite{benaych2011eigenvalues} to characterize the top $d$ eigenvalues and eigenvectors of $Y_2$, working under the assumptions of Theorems \ref{thm:bulk} and \ref{thm:edge}. First, let us identify the parameter regime in which we can obtain non-trivial asymptotics: (1) We note that when the noise-to-signal parameter $\beta_n \to 0$, we have that $\norm{Y_2-\frac{1}{n}XX^*} \to 0$. Hence the $d$ top eigenvalues of $Y_2$ converge to $1$, and the eigespace asymptotically aligns with the signal eigenspace. It will be obvious that in this case, we have that the MSE proxy $\overline{MSE} \to 0$. (2) On the other hand, when $\beta_n \to \infty$, the effective signal is completely swamped by the noise and the top eigenvectors of $Y_2$ essentially become independent of the signal. In that case, the top eigenvalues all behave like $\beta_n(2+o(1)))$. It will be obvious that in this case, $\overline{MSE} \to 2d$. 

By the discussion above, we now concentrate on the case where $\beta_n \to 1/\gamma$, where $\gamma\in(0,\infty)$ is a constant. Denoting 
\begin{equation}
	Y_3 = \frac{1}{n}XX^* + \gamma^{-1}\Wc \,,
\end{equation}
Theorem \ref{thm:edge} ensures us that $\norm{Y_2-Y_3} \to 0$. We would now like to apply the results of \cite{benaych2011eigenvalues}. Note that their stated result requires that either the signal or the noise matrix has a unitarily invariant (orthognally invariant in the real case) distribution. Upon a closer inspection of the proof, we see that it suffices to verify a certain, simple "incoherence" condition between the signal principal components and the noise matrix, which does hold under our model. In order to keep the narrative flow, we defer the statement and proof of this condition (which is otherwise not particularly informative) to Subsection~\ref{subsection:BG-conditions}.

Finally, by \cite{benaych2011eigenvalues} we have that for every $i=1,\ldots,d$, 
\begin{equation}
	\lambda_i(Y_3) \aslim \begin{cases}
	\frac{1}{\gamma} \left(\gamma + \frac{1}{\gamma}\right) &\quad \text{ when } \gamma>1 \\
		\frac{2}{\gamma} &\quad \text{ when } \gamma \le 1 
	\end{cases}\,.
\end{equation}
Denoting by $u_1,\ldots,u_d$ the corresponding eigenvectors of $Y_3$, their projection onto the signal eigenspace satisfies 
\begin{equation}
	\norm{\Pc_{X}(u_i)}^2 \aslim \begin{cases}
	1 - \frac{1}{\gamma^2} &\quad \text{ when } \gamma>1 \\
	0 &\quad \text{ when } \gamma \le 1 
	\end{cases}\,,
\end{equation}
where $\Pc_X$ is the projection onto the column span of $X$, that is, $\Pc_X(u) = \frac{1}{n}XX^*u$. Moreover, for any fixed number $j\ge 1$, we have that 
\begin{equation}
	\lambda_{d+j}(Y_3) \aslim 2/\gamma \,.
\end{equation}

Let us now focus on the case where $\gamma>1$. Let $\tilde{X}$ be an $nd\times d$ matrix whose columns are the top $d$ eigenvectors of $Y_1$ (hence of $Y$), and $U$ be the corresponding matrix for the top eigenvectors of $Y_3$. Since $\norm{Y_1-Y_3}\aslim 0$, as well as $\lambda_{d}(Y_3)-\lambda_{d+1}(Y_3)=\Omega(1)$, we may use the Davis-Kahan theorem (see, for example, Theorem 2 in \cite{yu2014useful}) to conclude that
\[
\norm{UU^*-\tilde{X}\tilde{X}^*}_F \aslim 0 \,.
\] 
Hence,
\begin{align*}
\overline{MSE}(X,\tilde{X}) 
&= \norm{\frac{1}{n}XX^*-\tilde{X}\tilde{X}^*}_F^2 \\
&\approx \norm{\frac{1}{n}XX^*-\tilde{U}\tilde{U}^*}_F^2 \\
&=	2d - 2\sum_{i=1}^{d} \norm{\Pc_X (\tilde{u_i})}^2 \aslim \begin{cases}
	\frac{2d}{\gamma^2}, &\text{ if } \gamma > 1\\
	2d, &\text{ otherwise }\,,
	\end{cases} \,,
\end{align*}
as claimed in Theorem \ref{thm:limiting_mse}. While we may not use the Davis-Kahan theorem when $\gamma\le 1$, it is reasonable to expect that indeed the MSE proxy does not improve at lower $\gamma$. Our numerical experiments in Section~\ref{sec:experiments} suggest that this is indeed the case. Also observe that using the formulas above, we may estimate the signal-to-noise ratio $\gamma$, and hence also the MSE proxy. We can compute the statistic,
\begin{equation}
\begin{split}
\eta 
&= \frac{\lambda_{1}(Y)}{\lambda_{d+1}(Y)} \\
&= \frac{\lambda_{1}(Y_1)}{\lambda_{d+1}(Y_1)} \\
&\approx \frac{\lambda_{}(Y_3)}{\lambda_{d+1}(Y_3)} \\
&\approx \begin{cases}
\frac{1}{2}\left( \gamma + \frac{1}{\gamma} \right) &\quad \text{ when } \gamma > 1 \\
1 &\quad \text{ when } \gamma \le  1 \\
\end{cases} \,.
\end{split} 
\end{equation}
We may solve this equation for $\gamma$, to obtain
\[
\gamma \approx \eta + \sqrt{\eta^2-1} \,.
\]
Plugging this into the expression for the MSE, we obtain
\[
\overline{MSE}(X,\tilde{X}) \approx \frac{2d}{\left( \eta + \sqrt{\eta^2-1} \right)^2}\,,
\]
and notice that when $\gamma<1$, using $\eta=1$ indeed gives us the right value for $\overline{MSE}(X,\tilde{X})$. Also observe that (perhaps unsurprisingly) the statistic on the right should converge to the correct quantity even in the regime where this exact analysis doesn't hold, i.e, $\beta_n \to  0,\infty$ (which could be thought of, at least formally, as the limits $\gamma \to \infty,0$ of the analysis here).  

To conclude, we have proved Theorems \ref{thm:limiting_mse} and \ref{thm:limiting_statistics} under the additional assumptions of Theorems \ref{thm:bulk} and \ref{thm:edge}. We shall take $c$, the constant in the statement of Theorem \ref{thm:limiting_mse}, to be $c=\max(c_0,c_1)$, where $c_0$ and $c_1$ are the constants from Proposition \ref{prop:signal_noise_decouple} and Theorem \ref{thm:edge} respectively.

\subsection{The case of very low noise}

There is something deeply unsatisfying with our analysis up to this point: in order for our proofs to work, we needed to assume that the noise level is sufficiently high, in the sense that if $p_n>0$, it must be that $n\tau_n^2 = \omega(\log^c(n)/n)$. In particular, when $\sigma_n=0$ (no additive noise), this means that $q_n(1-p_n)$ has to be large, hence $p_n$ small! It is quite clear, however, that performance must improve, rather than degrade, as we increase $p_n$ (that is, \emph{decrease} the number of outlier-type corruptions). 

Recall that $\beta_n = \frac{\sqrt{n}\tau_n}{np_n q_n}$. Working under the assumption of Theorem \ref{thm:limiting_mse}, we have that $np_nq_n = \omega(\log^c(n))$. The case not covered by the previous analysis is when $\lim\inf n\tau_n^2/\log^c(n) < \infty$, or equivalently that $\lim\inf \sqrt{n}\tau_n/\log^{c/2}(n) < \infty$. We find that in this case, necessarily $\beta_n \to 0$, with much room to spare: we in fact have the much stronger condition $(\sqrt{np_nq_n})\beta_n \to 0$ as well! In the case where $1/(\sqrt{n}\tau_n)$ is large, the scaling we chose for $\Wc$ appears to be the wrong one. Indeed, when $\Wc$ is very sparse, a result of the kind of Theorems \ref{thm:bulk} and \ref{thm:edge} is too much to hope for. Fortunately, it is also not necessary for our purpose: we only need to show that $\norm{\beta_n \Wc} \to 0$. 

The following proposition, which is proved in Subsection~\ref{subsection:small-beta}, completes the proof of our main results.
\begin{proposition}
	\label{prop:small-beta}
	There exists a (universal) choice $c\ge \max(c_0,c_1)$ such that if $np_nq_n = \omega(\log^c(n))$, and also $\sqrt{np_nq_n}\beta_n = o(1)$, then we have that $\norm{\beta_n \Wc} \aslim 0$. 
\end{proposition}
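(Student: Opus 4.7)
The plan is to bound $\norm{\beta_n\Wc}$ via two separate matrix Bernstein arguments, one for the Haar-unitary part and one for the Gaussian part. First I would unpack the hypothesis: since $\sqrt{np_nq_n}\,\beta_n = \sqrt{(1-p_n+\sigma_n^2)/p_n}$, the condition $\sqrt{np_nq_n}\,\beta_n = o(1)$ forces both $(1-p_n)/p_n\to 0$ (so $p_n\to 1$) and $\sigma_n^2/p_n\to 0$; combined with $np_nq_n = \omega(\log^c n)$ these are the only facts about the parameters I need. Then I split
\begin{equation*}
\beta_n\Wc = A_n + B_n,\quad A_n := \frac{1}{np_nq_n}\,E\odot\overline{\Delta}\odot\Pi,\quad B_n := \frac{\sigma_n}{\sqrt{d}\,np_nq_n}\,E\odot\Ec,
\end{equation*}
and bound each term separately in operator norm.

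For $A_n$, write $A_n = \sum_{i<j} A_{ij}$ as a sum of independent mean-zero Hermitian matrices, where $A_{ij}$ places $\frac{E_{ij}\overline{\Delta}_{ij}}{np_nq_n}\pi(g_{ij})$ in block $(i,j)$ and its adjoint in block $(j,i)$. Unitarity of $\pi(g_{ij})$ gives the deterministic uniform bound $\norm{A_{ij}} \le K := \frac{1}{np_nq_n}$, and a short calculation using $\pi(g)\pi(g)^*=I_d$ yields the matrix variance
\begin{equation*}
v := \Bigl\|\sum_{i<j}\E[A_{ij}^2]\Bigr\| = \frac{q_n(1-p_n)(n-1)}{(np_nq_n)^2}\;\lesssim\;\frac{1-p_n}{np_n^2q_n}.
\end{equation*}
Tropp's matrix Bernstein inequality then produces, for fixed $\epsilon>0$, a tail of the form $2nd\exp\!\bigl(-\Omega(\min(\epsilon^2/v,\epsilon/K))\bigr)$. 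Under the hypotheses, both $v$ and $K$ are $o(1/\log^c n)$ since $(1-p_n)/p_n = o(1)$, so for $c$ sufficiently large the exponent is $\omega(\log n)$, the probability is summable in $n$, and Borel--Cantelli delivers $\norm{A_n}\to 0$ almost surely.

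For $B_n$, the matrix $E\odot\Ec$ is a sparse Hermitian random matrix whose above-diagonal entries are independent, zero with probability $1-q_n$ and standard Gaussian otherwise. Truncating those Gaussians at level $T_n = C\sqrt{\log n}$ (the truncation event has probability $n^{-\omega(1)}$ by a union bound over $(nd)^2$ entries) reduces to a bounded-entry model, for which matrix Bernstein, or equivalently the Bandeira--van Handel bound for sparse sub-Gaussian Hermitian matrices, yields $\norm{E\odot\Ec} \le C'\bigl(\sqrt{ndq_n}+\sqrt{\log n}\bigr)$ with probability $1-n^{-\omega(1)}$ once $ndq_n = \omega(\log n)$. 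Rescaling gives $\norm{B_n} = O\bigl(\sigma_n/(p_n\sqrt{nq_n})\bigr) = O\bigl(\sqrt{\sigma_n^2/p_n}\bigr)\cdot O\bigl(1/\sqrt{np_nq_n}\bigr)$, which is $o(1)$ by $\sigma_n^2/p_n\to 0$; another Borel--Cantelli step closes the argument.

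The only real technicality is bookkeeping in the two Bernstein exponents: $v$ and $K$ vanish with $n$ at potentially different rates, and $c$ must be picked large enough that the slower rate dominates $\log n$ in both applications. This is straightforward, and I would simply take $c$ to be $\max(c_0, c_1)$ plus a small additional absolute constant. Conceptually, what makes the proof work is that the hypothesis $\sqrt{np_nq_n}\,\beta_n = o(1)$ simultaneously forces $p_n\to 1$ and $\sigma_n\to 0$, which is precisely why the Wigner-type scaling used in Theorems~\ref{thm:bulk} and~\ref{thm:edge} broke down in this regime: $\Wc$ is then too sparse to exhibit a semicircle law, but for the same reason the rescaled matrix $\beta_n\Wc$ has vanishing operator norm.
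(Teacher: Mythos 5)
Your proof is correct, but it takes a genuinely different route from the paper. The paper's proof recycles the trace-moment machinery of Theorem~\ref{thm:edge}: it keeps the same combinatorial estimate for $\E\bigl[\tr(\overline{\Wc}^{2k})\bigr]$ with $k\sim\log^{1.01}(n)$, observes that in this regime the geometric sum over $b$ has quotient $Q\le 1$ and so is dominated by its \emph{first} term $b=1$ rather than the last, and then shows that after multiplying by $\beta_n^{2k}$ the resulting base $\beta_n\cdot k^{O(1)}(d^5+Bd^{4.5}\sigma_n\sqrt{\log(nd)})/(\tau_n\sqrt{n})$ is $o(1/\log n)$ because $\beta_n/(\sqrt{n}\tau_n)=1/(np_nq_n)=o(\log^{-c}n)$ and $\sigma_n=o(1)$; Markov plus Borel--Cantelli then finish. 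You instead abandon the moment method entirely, split $\beta_n\Wc$ into the Haar part and the Gaussian part, and invoke matrix Bernstein for the former and a Bandeira--van Handel-type bound for the latter. Your decomposition is cleaner and more modular, and your identity $\sqrt{np_nq_n}\,\beta_n=\sqrt{(1-p_n+\sigma_n^2)/p_n}$ makes transparent exactly which parameter degeneracies ($p_n\to1$, $\sigma_n\to0$) the hypothesis encodes --- the paper only extracts $\sigma_n=o(1)$ and the bound on $1/(np_nq_n)$, and does so less explicitly. What the paper's route buys is that no new probabilistic tool is introduced and the constant $c$ is tied to the same combinatorial estimate already used in Theorem~\ref{thm:edge}, which is why the statement can assert a single universal $c\ge\max(c_0,c_1)$. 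Two small points to tighten in your write-up: the above-diagonal \emph{entries} of $E\odot\Ec$ are not mutually independent, since the $d^2$ entries within one block share the same Bernoulli variable $E_{ij}$; you should condition on $E$ (under which the surviving entries are i.i.d.\ Gaussian with a sparsity pattern whose row sums concentrate around $dnq_n$ because $nq_n=\omega(\log^c n)$) before applying the sparse-Gaussian norm bound. And in the Bernstein step for $A_n$ you should note explicitly that the variance proxy is the norm of a block-diagonal matrix, so it equals the per-row quantity you computed; as written this is implicit. Neither issue is a gap.
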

Using this $c$ in the assumption of Theorems \ref{thm:limiting_mse} and \ref{thm:limiting_statistics} would ensure that they hold even in the case where it is not true that $n\tau_n^2 = \omega(\log^c(n)/n)$, that is, where the analysis of the previous subsection failed.

\section{Spectrum of the pure-noise matrix: proof of Theorems \ref{thm:bulk} and \ref{thm:edge}} \label{sec:rmt}
 
\subsection{Proof of Theorem \ref{thm:bulk}}

The result will follow from the main Theorem of \cite{girko1996matrix} (note that while the result there is stated for symmetric real matrices, the same proof works for the Hermitian case). We need to verify the following three conditions (their precise implications will be mentioned and explained in detail right after):

Let $\Wc_{ij} = \Wc_{ij}^{(n)}$ ($i,j=1,\ldots,n$) be the $d\times d$ blocks of the Hermitian block matrix $\Wc$.
\begin{enumerate}
  \item {\bf First moment.} The matrix $\Wc$ needs to be centered, that is $\E(\Wc)=0$. This is indeed the case here.
  \item {\bf Second moment.} The blocks need to satisfy that
	\[
		\sup_{n} \max_{i=1,\ldots,n} \sum_{j=1}^{n} \E \norm{\Wc_{ij}}_F^2 < \infty \,.
	\]
	In our case, we normalized $\Wc$ so that each off-diagonal block has exactly $\E \norm{\Wc_{ij}}_F^2 = d^2 \cdot (1/nd) = d/n$, so this condition is certainly satisfied. 
      \item {\bf Lindeberg condition.} The blocks need to satisfy a Lindeberg-type condition: for every fixed $\alpha>0$, 
	\[
	\lim_{n\to\infty} \max_{i=1,\ldots,n} \sum_{j=1}^{n} \E \left[ \norm{\Wc_{ij}}_F^2 \I \left\{ \norm{\Wc_{ij}}_F >\alpha \right\} \right] = 0 \,.
	\]
	In the case where $\sigma_n=0$ (case (3) in the theorem statement), this is very easy to verify: we have that with probability $q_n(1-p_n)$,  $\norm{\Wc_{ij}}_F^2 = d/(q_n(1-p_n) n)$ (which tends to $0$ as $n$ grows, by assumption) and with probability $1-q_n(1-p_n)$ that $\Wc_{ij}=0$; in any case, for large enough $n$, $\norm{\Wc_{ij}}_F \le \alpha$ with probability $1$, so that each summand above becomes identically zero. As for the case $\sigma_n>0$ with $p_n<1$ (case (2) in the theorem statement), let us use the crude bound
	\begin{align*}
		\norm{\Wc_{ij}}_F 
		&\le \frac{1}{\tau_n\sqrt{n}}\left(\norm{\pi(g_{ij})}_F + \frac{\sigma_n}{\sqrt{d}}\norm{\epsilon_{ij}}_F\right) \\
		&= \frac{1}{\tau_n\sqrt{n}}\left(\sqrt{d} + \frac{\sigma_n}{\sqrt{d}}\norm{\epsilon_{ij}}_F\right) \,,
	\end{align*}
	where $\pi(g_{ij})$ and $\epsilon_{ij}$ are the blocks of $\Pi$ and $\Ec$ respectively. Recall that $\tau_n \sqrt{n} \to\infty$. For all large enough $n$ such that $\frac{\sqrt{d}}{\tau_n \sqrt{n}} < \alpha/2$, we have that for every $t\ge \alpha$,
	\begin{align*}
		\Pr \left( \norm{\Wc_{ij}}_F > t \right)
		&\le q_n \Pr \left( \frac{1}{\tau_n\sqrt{n}}\left(\sqrt{d} + \frac{\sigma_n}{\sqrt{d}}\norm{\epsilon_{ij}}_F\right) > t \right) \\
		&\le q_n \Pr \left( \frac{1}{\tau_n\sqrt{n}}\cdot \frac{\sigma_n}{\sqrt{d}}\norm{\epsilon_{ij}}_F > t/2  \right) \\
		&= q_n\Pr \left( \norm{\epsilon_{ij}}_F^2 > \frac{d\tau_n^2 {n}}{4\sigma_n^2}\cdot t^2  \right) \,.
	\end{align*}
	Notice that when the additive noise $\epsilon_{ij}$ is real, $\norm{\epsilon_{ij}}_F^2$ is simply a sum of $d^2$ squared  independent $N(0,1)$ variables \footnote{That is, it has a $\chi$-squared distribution with $d^2$ degrees of freedom.} (in the complex case there are $2d^2$ squared $N(0,1/2)$ variables). Using the general inequality
	\[
	\Pr\left(\sum_{i=1}^D g_i \ge t\right) \le \sum_{i=1}^D \Pr\left( g_i > t/D\right) \,,
	\]
	we can further bound the probability above as
	\[
	\Pr \left( \norm{\Wc_{ij}}_F > t \right) \le O(1) \cdot d^2 q_n e^{-O\left(\frac{\tau_n^2 {n}}{d\sigma_n^2}\cdot t^2\right)} \,.
	\]
	Now,
	\begin{align*}
		\E \left[ \norm{\Wc_{ij}}_F^2 \I \left\{ \norm{\Wc_{ij}}_F >\alpha \right\} \right]
		&= \int_{0}^{\infty} \Pr \left( \norm{\Wc_{ij}}_F^2 \I \left\{ \norm{\Wc_{ij}}_F >\alpha \right\} \ge t \right)dt \\
		&= \alpha^2 \Pr \left(\norm{\Wc_{ij}}_F^2 >\alpha^2\right) + \int_{\alpha^2}^{\infty} \Pr \left(\norm{\Wc_{ij}}_F^2 >t\right) dt \\
		&\le O(1)\cdot d^2 q_n e^{-\Omega\left(\frac{\tau_n^2 {n}}{d\sigma_n^2}\cdot t^2\right)} \left( \alpha^2 + \sqrt{\frac{1}{\frac{\tau_n^2 {n}}{d\sigma_n^2}}} \right) \,.
	\end{align*}
	Noting moreover that $\tau_n^2/\sigma_n^2 = q_n(1-p_n+\sigma_n^2)/\sigma_n^2 \ge  q_n$, we may bound
	\begin{align*}
		\sum_{j=1}^{n} \E \left[ \norm{\Wc_{ij}}_F^2 \I \left\{ \norm{\Wc_{ij}}_F >\alpha \right\} \right] 
		\le O(1) \cdot d^2 nq_n e^{-\Omega\left(\frac{q_n {n}}{d}\cdot \alpha^2\right)} \left( \alpha^2 + \sqrt{\frac{1}{\frac{q_n {n}}{d}}} \right)
	\end{align*}
	which tends to $0$ whenever $q_n n \to \infty$. As for the case $\sigma_n>0$ and $p_n=1$ (case (3) in the theorem statement), the matrix $\Wc$ is simply an i.i.d Gaussian matrix with random erasures. The proof is very similar to the previous case, where now we don't have to ensure that the contribution from $\Pi$ is asymptotically vanishing (that is why we needed to make sure that $\tau_n \sqrt{n}\to \infty$ before). 
\end{enumerate}

Having satisfied the conditions above, the main result of \cite{girko1996matrix} tells us now that  the following holds:
Denote by $F_{\mu_n}$ the CDF (cumulative distribution function) of the empirical spectral distribution of $\Wc$. Then for almost every point $x$, we have that $F_{\mu_n}(x)-F_n(x) \to 0$ almost surely, where $F_n$ are a sequence of CDFs whose Stieltjes transform satisfies
\[
	\int (x-z)^{-1} dF_n(x) = \frac{1}{d} \tr C(z)\, \quad \Im(z)\ne 0 \,,
\]
where $C(z)$ is the unique $d\times d$ matrix analytic function on $\C\setminus\R$ that satisfies the equation
\[
	C(z) = - \left( zI + \sum_{i = 1, i \ne j}^{n} \E \left[\Wc_{ij} C(z) \Wc_{ij}^* \right] \right)^{-1}
\]
(for any $j$ we get the same thing on the right-hand side) and such that $\Im(z)\Im C(z) \ge 0$ (entrywise).  

Observe that $\E\left[\Wc_{ij}\Wc_{ij}^*\right]=\frac{1}{n}I_{d\times d}$, so trying a solution of the form $C(z)=\alpha_n(z)I$, we obtain the functional equation
\[
	\alpha_n(z) = -\left( z+\frac{n-1}{n}\alpha_n(z) \right)^{-1} 
\]
with $\Im(z) \Im \alpha(z) > 0$. Solving, we get
\[
	\alpha_n(z) = \frac{1}{2 \frac{n-1}{n}}\left( -z - \sqrt{z^2-4\frac{n-1}{n}} \right) \,,
\]
with the point-wise limit
\[
	\alpha_n(x) \to \alpha(x) = -\frac{1}{2}\left( z - \sqrt{z^2-4} \right)
\]
being the Stieltjes transform of the semicircle law (see for example \cite{anderson2009introduction}, page 47). Thus the laws $F_n$ converge to the semicircle law, and the theorem is proved.

\begin{remark}
	Note that we didn't need to use here the fact that representation $\pi$ is irreducible; we only needed the blocks $\pi(g_{ij})$ to be unitary and centered. The proof of Theorem \ref{thm:edge}, however, relies on the exact second moments of the individual matrix elements.
\end{remark}

\subsection{Proof of Theorem \ref{thm:edge}}

Our proof is very similar the argument for the case of the extreme eigenvalues of an i.i.d Wigner matrix, see the books $\cite{tao2012topics,bai2010spectral,anderson2009introduction}$ for example. 

Observe that by Theorem \ref{thm:bulk}, we know that it must be that almost surely, 
\[
\lim\inf \lambda_{\max}(\Wc) \ge 2,\,\,\text{and }\,\, \lim\sup \lambda_{\min}(\Wc) \le -2 \,.
\]
To show this formally, simply take any positive, smooth function $\alpha$ supported on (say) $[2-\epsilon,2]$. Since $f_{sc}$ is strictly positive on $(-2,2)$, by Theorem \ref{thm:edge} we have that
\[
\frac{1}{nd} \tr\left( \alpha(\Wc) \right) := \frac{1}{nd} \sum_{i=1}^{nd} \alpha\left(\lambda_{i}(\Wc) \right) \aslim \int_{-2}^{2}\alpha(\lambda)f_{sc}(\lambda)d\lambda > 0 \,.
\]
This means, in particular, that asymptotically almost surely, the interval $[2-\epsilon,2]$ contains $O(nd)$ eigenvalues of $\Wc$. Now, the fact that the limiting density of eigenvalues is zero outside of $[-2,2]$ means that any compact interval outside of it contains at most strictly $o(nd)$ eigenvalues. It remains to show, then, that asymptotically almost surely there are no outlying eigenvalues outside the support of the semicircle law, at all. Since $\lambda_{\max}(\Wc),-\lambda_{\min}(\Wc) \le \norm{\Wc}$, it clearly suffices to show that almost surely, $\lim\sup \norm{\Wc} \le 2$.

We shall obtain a tail bound on $\norm{\Wc}$ using high-order moments. For any $k\ge 1$, clearly, $\norm{\Wc}^{2k} \le \tr(\Wc^{2k})$, and when $k=\omega(\log(nd))$, it is easy to see that $\norm{\Wc} \sim \left( \tr(\Wc^{2k}) \right)^{1/2k}$ \footnote{Indeed, $\norm{\Wc} \le \left( \tr(\Wc^{2k}) \right)^{1/2k} \le (nd)^{1/2k}\norm{\Wc}=(1+o(1))\norm{\Wc}$, since $k=\omega(\log(nd))$.}. Hence, one expects (and we shall see that this indeed turns out to be the case) that the tail bound (obtained by taking the power and using Markov's inequality)
\[
\Pr\left( \norm{\Wc} \ge 2+t \right) \le (2+t)^{-2k}\E\left[ \tr\left(\Wc^{2k}\right) \right]
\]
where $k$ is large enough, indeed captures the true behavior of $\norm{\Wc}$.  Estimating the high-order moment $\E\left[ \tr\left(\Wc^{2k}\right) \right]$ will require a rather sophisticated combinatorial calculation. Fortunately, most of the estimates we will need can be imported almost verbatim from the existing proof for the i.i.d Wigner case (that is, $d=1$ dimensional blocks).

\paragraph{Truncation step.}
Instead of working directly with the matrix $\Wc$, we will work with a modified version where all the entries are guaranteed to be bounded with probability $1$. 
Let $\overline{\Ec}$ be the matrix obtained by truncating all the entries of $\Ec$ to a magnitude at most $B\sqrt{\log(nd)}$, where $B>0$ is a number to be chosen later, that is,
\[
\overline{\Ec}_{ij} = \Ec_{ij}\cdot \I\left[\abs{\Ec_{ij}} \le B\sqrt{\log(nd)} \right] \,.
\]
We use a rather crude estimate,
\begin{align*}
	\E\norm{\Ec-\overline{\Ec}} 
	&\le \E\norm{\Ec-\overline{\Ec}}_F \\
	&\le \sqrt{\E\norm{\Ec-\overline{\Ec}}_F^2} \\
	&\le \sqrt{  \sum_{i=1}^{nd} \sum_{i=j}^{nd}  \E \left[ \abs{\Ec_{ij}}^2\cdot \I\left[\abs{\Ec_{ij}} > B\sqrt{\log(nd)} \right] \right] } \\
	&=   O\left(nd \cdot   e^{-\Omega(B^2\log(nd))} \right) \,,
\end{align*}
where the big-Oh/Omega notation hides purely numerical constants, that may differ between the two cases where the entries of $\Ec$ are real or complex standard Gaussians. 
Let
\[
\overline{\Wc} =\frac{1}{\tau_n\sqrt{n}} E \odot\left(  \overline{\Delta} \odot \Pi + \frac{\sigma_n}{\sqrt{d}}\overline{\Ec} \right)
\]
(obtained by replacing $\Ec$ with $\overline{\Ec}$ in the definition of $\Wc$), so that
\[
\E\norm{\Wc-\overline{\Wc}} \le  \frac{q_n \sigma_n}{\tau_n \sqrt{nd}}\E\norm{\Ec-\overline{\Ec}} =   O\left(  \sqrt{nq_n d}\cdot e^{-\Omega(B^2\log(nd))} \right)
\]
(recall that $\tau_n \ge \sigma_n \sqrt{q_n}$), so that for an appropriate choice of $B>0$ we can ensure that this is $\le n^{-2}$, say. Hence by Markov's inequality and the Borel-Cantelli lemma, $\norm{\Wc-\overline{\Wc}} \aslim 0$, so that it is now enough to show that $\lim\sup\norm{\Wc} \le 2$ almost surely. Note that due to the sub-Gaussian nature of $\Ec$, we didn't need to work almost at all for this truncation step. In the case where one can only assume a weaker moment bound on the additive noise (e.g, existence of a fourth moment), truncation can be somewhat more involved (see, for example, \cite{tao2012topics}). 

\paragraph{Estimating the $2k$-th moment.}
We denote the elements of $\overline{\Wc}$ by $W_{ij}$ (for brevity we omit the overline). Expanding the trace above,
\[
\E \left[ \tr (\overline{\Wc}^{2k})\right] = \sum_{s_1,\ldots,s_{2k}=1}^{nd} \E \left[ W_{s_1 s_2} \cdots W_{s_{2k-1}s_{2k}} W_{s_{2k}s_1} \right] 
\]
we can group the indices $s_1,\ldots,s_{2k}$ according to the blocks to which they correspond. That is, we can write
\begin{equation}
\E \left[ \tr (\overline{\Wc}^{2k})\right] = \sum_{i_1,\ldots,i_{2k}=1}^{n} M_{i_1,\ldots,i_{2k}}
\end{equation}
where 
\begin{equation}
M_{i_1,\ldots,i_{2k}} = \sum_{j_1,\ldots,j_{2k}=1}^{d} \E \left[ W_{d(i_1-1)+j_1,d(i_2-1)+j_2} \cdots W_{d(i_{2k}-1)+j_{2k},d(i_1-1)+j_1} \right] 
\end{equation}
is a sum over all the indices such that $(j_l,j_{l+1})$ belongs to block $(i_l,i_{l+1})$. 

As is usual in this sort of moment calculations, we can identify the index tuple $i_1,\ldots,i_{2k}$ with a directed cycle $C : \, i_1 \to \ldots \to i_{2k} \to i_{1}$ on a vertex set $V \subset [n]$. We say that $C$ contains an (undirected) edge $(i,j)$ if the (directed) cycle contains either of $i\to j$ or $j \to i$. Since the blocks of $\Wc$ are independent (unless they are in symmetric positions with respect to the diagonal) and since the matrix $\Wc$ is centered, observe that it must be that $M_{i_1,\ldots,i_{2k}}=0$ unless every edge in $i_1,\ldots,i_{2k}$ occurs at least twice, that is, the terms  
\[
\E \left[ W_{s_1 s_2} \cdots W_{s_{2k} s_1} \right] 
\]
vanish unless they contain either $0$ or at least $2$ representatives of every block (or its symmetric pair). We now proceed to bound $M_{i_1,\ldots,i_{2k}}$ when this is the case.

\begin{lemma}
	\label{lemma:symbol_bound}
	Suppose that every edge in $i_1,\ldots,i_{2k}$ occurs at least twice. Let $b$ be the number of unique edges in the cycle. Then
	\begin{equation}
	\abs{M_{i_1,\ldots,i_{2k}}} \le dn^{-k} \cdot \left(  \frac{d^5+Bd^{4.5}\sigma_n \sqrt{\log(nd)}}{\tau_n }\right)^{2k-2b} \,.
	\end{equation}
\end{lemma}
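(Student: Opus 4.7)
The plan is to factorize the expectation defining $M$ using independence of the blocks of $\overline{\Wc}$ at distinct edges, and then estimate each per-edge expectation by combining Schur orthogonality for the Haar-distributed component with a deterministic $L^\infty$ truncation bound for the excess entries.

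First, since blocks at distinct (undirected) edges of the walk are jointly independent and each is centered, the inner expectation in $M$ factorizes over edges:
\[
M_{i_1,\ldots,i_{2k}} \;=\; \sum_{j_1,\ldots,j_{2k}=1}^{d} \prod_{e}\, \E\!\Bigl[\prod_{s \in S_e} W^{(s)}_{j_s,j_{s+1}}\Bigr],
\]
where $S_e$ indexes the $r_e = |S_e| \ge 2$ cycle positions at edge $e$ (with conjugation determined by traversal direction), $\sum_e r_e = 2k$, and $\sum_e 1 = b$. Writing each entry as
\[
W_{a,b} \;=\; \frac{\xi}{\tau_n\sqrt{n}}\Bigl(\alpha\,\pi(g)_{a,b} \;+\; \tfrac{\sigma_n}{\sqrt{d}}\,\overline{\epsilon}_{a,b}\Bigr),
\]
with $\xi$ the availability indicator (Bernoulli$(q_n)$) and $\alpha$ the corruption indicator (Bernoulli$(1-p_n)$), the truncation combined with $|\pi(g)_{a,b}| \le 1$ gives the pointwise bound $\|\overline{\Wc}\|_{\infty} \le (\tau_n\sqrt{n})^{-1}(1 + \sigma_n B\sqrt{\log(nd)}/\sqrt{d})$, while the normalization is chosen so that $\E|W_{a,b}|^2 = 1/(nd)$.

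Second, for each edge $e$, designate two positions $s_1, s_2 \in S_e$ as ``paired'' and the remaining $r_e - 2$ as ``excess''. For the paired part, invoke Schur's orthogonality relation
\[
\int_{\Gr}\pi(g)_{ab}\,\overline{\pi(g)_{cd}}\,dg \;=\; \tfrac{1}{d}\,\delta_{ac}\delta_{bd},
\]
which holds since $\pi$ is irreducible (Assumption~\ref{assum:irreducible}), together with the analogous Wick-type identity for the additive Gaussian component, to produce a $1/(nd)$-scaled Kronecker-delta expression tying the paired indices. For the excess, bound each factor by the pointwise truncation bound above. Assembling these gives a per-edge estimate of order $(nd)^{-1} \cdot \|\overline{\Wc}\|_{\infty}^{r_e-2}$, multiplied by delta functions constraining the paired $j$-indices.

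Third, carry out the sum over $j_1,\ldots,j_{2k}$. The $2b$ Kronecker deltas from the paired contributions reduce the number of free $j$-indices, with each remaining free index contributing a factor of $d$. Absorbing the $\sqrt{n}$ factors from $\|\overline{\Wc}\|_\infty^{2k-2b}$ into the $n^{-k}$ main term (using $\sum_e(r_e-2) = 2k-2b$) and collecting the remaining $d$-powers yields the claimed bound. The main obstacle — and the source of the rather loose polynomial prefactors $d^5$ and $d^{4.5}$ — is tracking the $d$-dependence through the cyclic $j$-sum: delta constraints coming from different edges meeting at a common vertex interact combinatorially, and counting the resulting free-index dimension precisely is delicate. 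The lemma settles for a conservative estimate, which is sufficient for the downstream moment-method application in Theorem~\ref{thm:edge}, where only polynomial dependence on $d$ is needed.
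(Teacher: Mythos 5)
Your skeleton matches the paper's: factor the expectation over distinct edges by independence, use Schur orthogonality together with the exact second moment $\E\abs{W_{ab}}^2 = 1/(nd)$ for two designated representatives of each edge, bound the remaining $2k-2b$ factors by the $L^\infty$ truncation bound $(\tau_n\sqrt{n})^{-1}(\sqrt{d}+B\sigma_n\sqrt{\log(nd)})/\sqrt{d}$, and then count how many of the $j$-indices range freely over $\{1,\ldots,d\}$. However, there are two genuine gaps.

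First, your claim that all $b$ edges contribute Kronecker-delta constraints on the paired indices is not valid for edges of multiplicity $r_e\ge 3$. To use the $L^\infty$ bound on the $r_e-2$ excess factors you must pull them out of the expectation in absolute value, after which the surviving two factors can only be controlled by Cauchy--Schwarz, $\E\abs{W^{(s_1)}W^{(s_2)}}\le 1/(nd)$, with no delta structure; and exact joint moments of order $\ge 3$ of matrix entries of an irreducible representation are not given by Schur orthogonality and are unavailable in this generality. Consequently the indices appearing at such edges must be treated as completely unconstrained. This is precisely what the paper does: it classifies indices as \emph{forced} or \emph{unforced}, shows that the indices touching multiplicity-$\ge 3$ edges (at most $2\sum_{s:a_s>2}a_s$ of them) are unforced in the worst case, and that among multiplicity-$2$ edges only the column index of the \emph{first} representative of each pair is unforced, giving at most $1+2\sum_{s:a_s>2}a_s+(b-l)$ free indices, where $l$ is the number of edges of multiplicity $\ge 3$.

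Second, the index count is the entire quantitative content of the lemma, and you explicitly defer it (``counting the resulting free-index dimension precisely is delicate''). The specific exponents $d^5$ and $d^{4.5}$ are not an arbitrary conservative choice: they come from the inequality $l\le 2(k-b)$ (since $2(b-l)+\sum_{s:a_s>2}a_s=2k$ and each such $a_s\ge 3$), which bounds the number of non-vanishing terms by $d^{1+10(k-b)+b}$ and lets the excess power $d^{9(k-b)}$ be distributed as $d^{4.5}$ per factor across the $2k-2b$ sup-norm-bounded factors. Without carrying out this count, the stated bound is asserted rather than proved.
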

\begin{proof}
	Let $a_1,\ldots,a_b$ be the multiplicities by which each one of the unique edges occurs in $i_1,\ldots,i_{2k}$ (ordered, say, by first appearance). We also let $l$ be the number of such edges that have multiplicity strictly larger than $2$,
	\[
	l = \abs{ \left\{ s \,:\, a_s \ge 3 \right\} } \,,
	\]
	so that $b-l$ is the number of edges of multiplicity $2$. 
	Observe that to account for all $2k$ directed legs in the cycle, we must have 
	\[
	2(b-l) + \sum_{s:a_s>2} a_s = 2k \,.
	\]
	and using $\sum_{s:a_s>2} a_s \ge 3l$ we get
	\[
	l \le 2(k-b) \,. 
	\]
	By the orthogonality relations, Eq. \eqref{eq:schur}, if $(s_l,s_{l+1})$ and $(s_{k},s_{k+1})$ belong to the same block that is represented exactly twice, the term $\E \left[ W_{s_1 s_2} \cdots W_{s_{2k} s_1} \right] $ possibly doesn't vanish only if $s_l=s_k$ and $s_k=s_{k+1}$ (if they belong to symmetric blocks, we need that $s_l=s_{k+1}$, $s_{l+1}=s_k$). Let us now traverse the list $s_1\to \ldots \to s_{2k}$ and count the number of degrees of freedom we have in choosing indices to get potentially non-vanishing terms of the form
	$
	\E \left[ W_{s_1 s_2} \cdots W_{s_{2k} s_1} \right]
	$
	where $s_l=d(i_l-1)+j_l$ for $j=1,\ldots,d$. Call an index forced if it must equal an index which we already traversed; otherwise, call it unforced. 
	\begin{enumerate}
		\item The first index $s_1$ is unforced.
		\item Suppose we reach an index $t_l$ for $l>1$. If it appears in any element $W$ that belongs to a block that is represented at least $3$ times, in the worst case it is unforced. Note that there are at most $2 \sum_{s:a_s>2}a_s$ such indices.
		\item Suppose now that every appearance of $s_l$ is in a block that is represented twice. If it was already declared forced, we move on. Otherwise, its first appearance in the term $W_{s_1 s_2} \cdots W_{s_{2k}s_1}$
		is as a column-index of some element of $Z$. Suppose that $W_{s_{l-1} s_l}$ is the first representative of its block (or symmetric pair) and let $W_{s_{v-1},s_v}$ be the second (where $v>l$). If the two representatives belong to the same block, we must have that $s_v=s_l$, so that $s_v$ is forced. Otherwise, $s_v=s_{l-1}$, which is again forced. Thus, the only unforced indices $s_l$ of this type must appear as column indices in the \emph{first} representative of a block pair, so their number is $\le b-l$, the number of unique edges that appear exactly twice.
	\end{enumerate}
	Thus, we have at most $1+2\sum_{s:a_s>2}a_s+(b-l)$ unforced indices, making the number of non-vanishing terms at most 
	\[
	d^{1+2\sum_{s:a_s>2}a_s+(b-l)}=d^{1+2(2k-2(b-l))+(b-l)} = d^{1+4k-3b+3l} \,.
	\]
	Using $3l \le 6(k-b)$, we we find that there are at most 
	$d^{1+10(k-b)+b}$
	non-vanishing terms.

	Now, recall that $\Wc$ was normalized so that the absolute second moment of its entries is $1/(nd)$. The truncation we performed clearly cannot increase the absolute second moment, so this is also true for $\overline{\Wc}$. Moreover, the entries of $\overline{\Wc}$ are clearly bounded by $(1+B\sigma_n \sqrt{\log(nd)}/\sqrt{d})/(\tau_n \sqrt{n})$, where $B$ is the numerical constant we use for the truncation. Hence,
	\[
	\E \left[ W_{s_1 s_2} \cdots W_{s_{2k} s_1} \right]  \le (nd)^{-b} \left(  \frac{\sqrt{d}+B\sigma_n \sqrt{\log(nd)}}{\tau_n \sqrt{nd}}\right)^{2k-2b}\,,
	\]
	so that
	\[
	\abs{M_{i_1,\ldots,i_{2k}}} \le d^{1+10(k-b)+b} \cdot (nd)^{-b} \left(  \frac{\sqrt{d}+B\sigma_n \sqrt{\log(nd)}}{\tau_n \sqrt{nd}}\right)^{2k-2b} \,,
	\]
	from which we obtain the claimed bound.
	
\end{proof}

	 Equipped with Lemma \ref{lemma:symbol_bound}, we are ready to conclude the computation. 

	 Let $i_1 \to \ldots \to i_{2k} \to i_1$ be a (directed) cycle where each edge appears exactly twice. What is the largest number of unique vertices $v$ that such a cycle can traverse? Since the graph is connected, $v \le e+1=k+1$ ($e$ being the number of (undirected) edges), where equality holds if and only if the (undirected) graph is a tree. A cycle of this type is called \emph{non-crossing}. It is a standard calculation (see, for example, Theorem 2.3.21 in \cite{tao2012topics}) that the number of such non-crossing cycles is
	 \[
	 	C_k n(n-1)\cdots (n-k) \le 2^{2k}n^{k+1} \,,
	 \]
	 (with $C_k = \frac{1}{k+1}\binom{2k}{k}$ being the $k$-th Catalan number). 
	 Thus, by Lemma \ref{lemma:symbol_bound} (taking $b=k$), 
	 \begin{equation*}
	 \abs{ \sum_{ \substack{i_1,\ldots,i_{2k}\\\text{non-crossing}} } M_{i_1,\ldots,i_{2k}} } \le d\cdot 2^{2k}n	
	 \end{equation*}
	 It remains to bound the contribution of the rest of the cycles.

	 Let $N_{n,2k,b}$ be the number of crossing cycles of length $2k$ on $n$ vertices, that contain exactly $b=1,\ldots,k$ unique edges. Then, by Lemma \ref{lemma:symbol_bound}, clearly, 
	 \begin{equation*}
	 	\abs{ \sum_{ \substack{i_1,\ldots,i_{2k}\\\text{crossing}} } M_{i_1,\ldots,i_{2k}} } \le \sum_{b=1}^{k} dn^{-k} \cdot \left(  \frac{d^5+Bd^{4.5}\sigma_n \sqrt{\log(nd)}}{\tau_n }\right)^{2k-2b} N_{n,2k,b} \,.
	 \end{equation*}
	 In the proof of Theorem 2.3.21 in the book \cite{tao2012topics}, it is shown (the calculation is originally due to Bai and Yin, \cite{bai1988necessary}) that assuming that $k=O(\log^2(n))$,
	 \begin{equation}
	 \label{eq:bai-yin-estimate}
	 	N_{n,2k,b} \le \begin{cases}
	 		2^{2k}(2k)^{O(1)}n^{k}, &b=k \\
	 		2^{2k}(2k)^{O(2k-2b)} (2k)^{O(1)} n^{b+1} &b<k		
	 	\end{cases}
	 \end{equation}
	 where the big-Oh hides universal constants. Plugging this estimate,
	 \begin{align*}
	 	\abs{ \sum_{ \substack{i_1,\ldots,i_{2k}\\\text{crossing}} } M_{i_1,\ldots,i_{2k}} } \le 2^{2k} (2k)^{O(1)} dn^{-k+1}\cdot   \sum_{b=1}^{k} (2k)^{O(2k-2b)}  n^{b} \left(  \frac{d^5+Bd^{4.5}\sigma_n \sqrt{\log(nd)}}{\tau_n }\right)^{2k-2b}  \,.
	 \end{align*}
	 (Note that in the case $b=k$ we're losing a factor of $n$ in the last term only, but this is inconsequential , since the crossing cycles already contribute $O(2^{2k}n^{k+1})$, so that overall we have not lost anything significant here). Note that this is a geometric sum, with quotient
	 \[
	 Q = n \cdot k^{-2c} \cdot  \left(  \frac{d^5+Bd^{4.5}\sigma_n \sqrt{\log(nd)}}{\tau_n }\right)^{-2} \,,
	 \]
	 where $c$ is the constant hidden in the estimate Eq. \eqref{eq:bai-yin-estimate}. Looking at the proof in \cite{tao2012topics}, one can verify that $c>1$, but otherwise we do not attempt to optimize it, or even give precise bounds. Take $k\sim\log^{1.01}(n)$ (any $\log$ power strictly bigger than $1$ would also work). Then as long as 
	 \[
	 n \tau_n^2 \gg \log^{2.02c} \left(d^5 + Bd^{4.5}\sigma_n \log(nd)\right)^2 \,,
	 \]
	 we have that $Q \gg 1$. Since $d$, $B$ are constants, an equivalent condition is that $n\tau_n^2 \gg \log^{2.02c}(n)$ and $n\tau_n^2 \gg \sigma_n^2 \log^{2+2.02c}(nd)$. Since $\tau_n^2 \ge q_n \sigma_n^2$, the second condition could be safely replaced by $nq_n \gg \log^{2+2.02c}(nd)$, which is what we require in the statement of the Theorem. Note also that when $p_n=1$, the first condition here is not necessary (since we can discard the constant $d^5$ term we have here, which we got by suboptimaly bounding elements of $\overline{\Wc}$, in this case).
	  Under this condition, since $Q\gg 1$, the geometric sum can be bound by a constant times its largest (last, $b=k$) summand. Hence, we conclude that
	  \begin{equation}
	  \label{eq:thm-edge-moment-est}
	  	\E\tr\left(\overline{\Wc}^{2k}\right) = O\left(d2^{2k} k^{O(1)}n\right),\,\,\,\text{for } k\sim \log^{1.01}(n) \,.
	  \end{equation}
	  
	  \paragraph{Finishing the proof of Theorem \ref{thm:edge}.}
	  For any fixed $t>0$, for $k\sim \log^{1.01}(n)$ per Eq. \eqref{eq:thm-edge-moment-est},
	  \begin{align*}
	  \Pr \left(\norm{\overline{\Wc}} \ge 2+t\right) 
	  &\le (2+t)^{-2k}\E\left[\tr(\overline{\Wc}^{2k})\right] 	  	 \\
	  &= O\left( \left(\frac{2}{2+t}\right)^{2k} dk^{O(1)}n\right) \,, 
	  \end{align*}
	  which, when $k\sim \log^{1.01}(n)$ drops faster than any power of $n$. Hence $\lim\sup \norm{\overline{\Wc}} \le 2$ almost surely. Since $\norm{\Wc-\overline{\Wc}}\aslim 0$, this means that also $\lim\sup \norm{{\Wc}} \le 2$. Theorem \ref{thm:edge} now follows, as we have explained in the beginning of this section.

\section{Additional proofs}\label{sec:proofs}

	In this section we prove some of the technical claims made in the analysis of the previous sections. 

\subsection{Proof of Proposition \ref{prop:signal_noise_decouple}}
\label{subsection:signal_noise_decouple_proof}

The proof of the Proposition will follow immediately from the following Lemma:

\begin{lemma}
	Suppose that $M$ is an $nd\times nd$ symmetric block matrix, such that each block above the diagonal is all-ones with probability $t_n$ and all-zeros with probability $1-t_n$ (and, say, the block-diagonal is always all-ones). Let $A=A_n$ be sequence of $nd\times nd$ matrices that are independent of $M$, and such that $\norm{A}_{\max} = O(1)$ (here $\norm{A}_{\max}$ is the maximal entry of $A$ in magnitude) with probability $1$. 
	There is a universal numerical constant $c_0>1$, such that if 
	$t_n = \omega\left(\log^{c_0}(n)/n\right)$, then
	\[
	\frac{1}{n}\norm{\frac{1}{t_n}M\odot A - A } \aslim 0 \,.
	\]
\end{lemma}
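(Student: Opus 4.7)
The plan is to reduce the bound to a matrix Bernstein--type inequality, applied conditionally on $A$. Let $\tilde M$ be the underlying $n\times n$ symmetric Bernoulli matrix with $\tilde M_{ij}\sim \mathrm{Bernoulli}(t_n)$ independently for $i<j$ and ones on the diagonal, so that the $(i,j)$-block of $M$ equals $\tilde M_{ij}\,J_d$ (where $J_d$ is the $d\times d$ all-ones matrix). Define the centered variables $Z_{ij}=\tilde M_{ij}/t_n-1$; they satisfy $|Z_{ij}|\le 1/t_n$ and $\E Z_{ij}^2 \le 1/t_n$. Writing $B:=\tfrac{1}{t_n}M\odot A - A$, decompose $B=B^{\mathrm{diag}}+B^{\mathrm{off}}$ into block-diagonal and off-block-diagonal pieces. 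Since $\tilde M_{ii}=1$ almost surely, $B^{\mathrm{diag}}=(1/t_n-1)\cdot\mathrm{bdiag}(A_{11},\ldots,A_{nn})$, whose operator norm is $O(d\|A\|_{\max}/t_n)=O(1/t_n)$; dividing by $n$ contributes $O(1/(nt_n))=o(1)$ under the standing assumption.

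For the off-diagonal piece, expand $B^{\mathrm{off}}=\sum_{i<j}Z_{ij}X_{ij}$, where $X_{ij}$ is the symmetric $nd\times nd$ matrix whose only non-zero blocks are $A_{ij}$ at position $(i,j)$ and $A_{ji}$ at position $(j,i)$. Conditionally on $A$, these are independent, centered, self-adjoint random matrices, with the uniform bound $\|Z_{ij}X_{ij}\|_{op}\le 2d\|A\|_{\max}/t_n = O(1/t_n)$. A direct block-by-block computation shows that $X_{ij}^2$ is block-diagonal with $(i,i)$-block $A_{ij}A_{ji}$ and $(j,j)$-block $A_{ji}A_{ij}$, so $\sum_{i<j}\E[Z_{ij}^2]X_{ij}^2$ is block-diagonal with $(k,k)$-block $\frac{1-t_n}{t_n}\sum_{\ell\ne k} A_{k\ell}A_{\ell k}$, whose operator norm is $O(n/t_n)$ (using that $\|A\|_{\max}=O(1)$ a.s.\ and $d$ is a constant). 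Matrix Bernstein then yields, for every $t>0$,
\[
\Pr\!\bigl(\|B^{\mathrm{off}}\|_{op}\ge t\,\bigm|\,A\bigr)\le 2nd\cdot\exp\!\left(-\frac{t^2/2}{C_1\, n/t_n + C_2\, t/t_n}\right),
\]
and the bound carries over unconditionally since $\|A\|_{\max}$ is uniformly bounded almost surely.

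Finally, choose $t = n/\log(n)$: the exponent becomes $-\Omega(nt_n/\log^2 n)$, which under $nt_n = \omega(\log^{c_0} n)$ with $c_0>3$ exceeds $2\log n$ for all large $n$, making the tail summable in $n$. Borel--Cantelli then gives $\|B^{\mathrm{off}}\|_{op}/n \aslim 0$, and combined with the block-diagonal estimate this proves the lemma. The main conceptual work is the preliminary structural step of writing $M\odot A$ as an explicit sum of (conditionally) independent rank-two pieces $X_{ij}$ with a clean operator-norm bound on the matrix variance; once this representation is in hand, the probabilistic content is entirely standard, and the only delicate point is choosing the polylog exponent $c_0$ large enough that the sub-Gaussian regime of Bernstein dominates the sub-exponential one.
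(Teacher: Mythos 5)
Your proof is correct, but it takes a genuinely different route from the paper. The paper proves this lemma by the trace--moment method: it bounds $\E[\tr(Z^{2k})]$ for $Z=\frac{1}{n}(M/t_n-\one\one^T)\odot A$ with $k$ slightly superlogarithmic, reusing the Bai--Yin combinatorial count $N_{n,2k,b}$ of closed walks with $b$ distinct edges (the same machinery it needs anyway for Theorem \ref{thm:edge}), and then applies Markov plus Borel--Cantelli; the constant $c_0$ there comes from forcing the resulting geometric sum to have large quotient. You instead write the off-block-diagonal part as $\sum_{i<j}Z_{ij}X_{ij}$ with scalar Bernoulli weights and rank-(at most $2d$) deterministic pieces, and invoke matrix Bernstein conditionally on $A$; your computation of the matrix variance ($O(n/t_n)$) and the uniform bound ($O(1/t_n)$) are both right, and the tail is summable. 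This works here precisely because, unlike in Theorem \ref{thm:edge}, the only randomness per block is a single scalar, so no combinatorics is needed; your argument is shorter and in fact gives a stronger conclusion --- taking $t=\epsilon n$ rather than $t=n/\log n$ in your final step, the exponent is $-\Omega(\epsilon^2 nt_n)=-\omega(\log^{c_0}n)$, so \emph{any} $c_0>1$ suffices, whereas your choice of $t$ needlessly forces $c_0>3$. Two small points to tighten: (i) $X_{ij}$ is self-adjoint only when $A_{ji}=A_{ij}^*$, i.e.\ when $A$ is Hermitian; for general $A$ you should either split $A$ into Hermitian and anti-Hermitian parts (as the paper does before its moment computation) or use the rectangular/dilated form of matrix Bernstein; (ii) the hypothesis ``$\norm{A}_{\max}=O(1)$ with probability $1$'' should be read as a deterministic uniform bound (as it is in the application, $A=XX^*$ with unitary blocks), so that the conditional Bernstein bound integrates to an unconditional one --- you say this, and it is fine, but it is the one place where the conditioning matters.
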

\begin{proof}	
	It suffices to assume that the block-diagonal of $A$ is always zero. This is because the block diagonal of $\frac{1}{t_n}M\odot A - A$ always consists of elements bounded by $(1/t_n+1)\norm{A}_{\max}$, and therefor its operator norm is bounded by (say) $d(1/t_n+1)\norm{A}_{\max}$ (the Frobenius norm of any such block), and this is $o(n)$ by assumption. 
	It also suffices to only consider the case $d=1$, otherwise we can decompose the matrices $M$ and $A$ into a sum of $d^2$ $nd\times nd$ matrices, such that in every block there is at most one non-zero entry (and use the triangle inequality for the norm). For simplicity, we will also assume without loss of generality that $\norm{A}_{\max} \le 1$, and that $A$ is Hermitian (otherwise we can decompose $A=\frac{1}{2}\left(A+A^*\right) + \frac{1}{2i}\left(iA-iA^*\right)$, and again pay constant factors). 
	
	Consider the $n\times n$ matrix $Z=\frac{1}{n}(M/t_n-\one\one^T) \odot A$, where we will now think of $A$ as fixed. It is a random Hermitian block matrix, such that all the entries above the diagonal are independent of one another. The entries have mean $0$, and absolute second moment
	\begin{align*}
		\E \abs{ Z_{ij}}^2 
		&\le \E \abs{(M_{ij}/t_n)-1}^2/n^2 \\
		&= \left( t_n(1/t_n-1)^2 + (1-t_n)  \right)/n^2 \\
		&= (1-t_n) \cdot 1/(t_n n^2) \\
		&\le 1/(t_n n^2) \,.
	\end{align*}
	Since the second moment is strictly (asymptotically) smaller than $1/n$ (it is $o\left((n\log^{c_0}(n))^{-1}\right)$), one would expect that also $\E\norm{Z} = o(1)$. This is not immediate, however, since we have to account for the fact that $Z$ is also possibly sparse. In the calculation that follows, we will also need to use the crude bound $\abs{Z_{ij}} \le 1/(nt_n)$.
	
	We proceed as in the proof of Theorem \ref{thm:edge}, by bounding higher order moments of the form $\E\left[\tr(Z^{2k})\right]$, where $k$ is slightly large than $\log(n)$ \footnote{Note that if we hadn't reduced the problem to the case where $A$ is Hermitian, we would have needed to use $\E \left[\tr((ZZ^*)^k)\right]$ instead. This doesn't really change the calculation, at all.}. We follow the notations we had there.
	A term $\E \left[ Z_{i_1 i_2} \cdots Z_{i_{2k} i_1}\right]$
	is possibly non-vanishing only if in the cycle $i_1\to \ldots i_{2k} \to i_1$, every (undirected) edge appears at least twice (or doesn't appear at all). Denoting by $b\le k$ the number of unique edges, we have 
	\[
	\abs{ \E \left[ Z_{i_1 i_2} \cdots Z_{i_{2k} i_1}\right] } \le (n^2t_n)^{-b} (nt_n)^{-(2k-2b)} \,,
	\]
	hence
	\begin{align*}
		\E\left[\tr(Z^{2k})\right] 
		&\le \sum_{b=1}^{k} (n^2t_n)^{-b} (nt_n)^{-(2k-2b)} N_{n,2k,b}  \\
		&\le \sum_{b=1}^{k} (n^2t_n)^{-b} (nt_n)^{-(2k-2b)} 2^{2k}(2k)^{O(2k-2b)}(2k)^{O(1)}n^{b+1} \\
		&= n \sum_{b=1}^{k} (n t_n)^{-b} (nt_n)^{-(2k-2b)} 2^{2k}(2k)^{O(2k-2b)}(2k)^{O(1)} \,,
	\end{align*}
	(here $N_{n,2k,b}$ is the number of cycles of length $2k$ on $n$ vertices having exactly $b$ unique edges, and the estimate we plugged is the same one we had in the proof of Theorem~\ref{thm:edge}). 
	Now, this is a geometric sum, corresponding to a quotient bounded by
	\[
	Q = O(1)\cdot nt_nk^{-c_1} \,,
	\]
	where $c_1>1$ is some numerical constant that was hidden in the big-Oh notation. Requiring that $nt_n \to \infty$, and then choosing $k \sim (nt_n)^{0.99/c_1}$, now ensures us that $Q \gg 1$ for large $n$. In that case, the sum is comparable to the last term ($b=k$), so we can estimate
	\[
	n \sum_{b=1}^{k} (n t_n)^{-b} (nt_n)^{-(2k-2b)} 2^{2k}(2k)^{O(2k-2b)}(2k)^{O(1)} = O \left( n2^{2k}(2k)^{O(1)}(nt_n)^{-k} \right) \,.
	\]
	Of course,$	\norm{Z}^{2k} \le \tr(Z^{2k})$, hence by Markov's inequality
	\begin{align*}
		\Pr(\norm{Z} \ge \epsilon)
		&\le \epsilon^{-2k} \E\left[\tr(Z^{2k})\right] \\
		&\le nk^{O(1)} \left(O\left( \frac{1}{\epsilon^2 nt_n}\right)\right)^{k} \,,
	\end{align*}
	where recall that $k=(nt_n)^{0.99/c_1}$. Now, by the Borel-Cantelli lemma, all that remains to do is to ensure that this expression is summable for any fixed $\epsilon>0$. Indeed,
	\[
	(nt_n)^{k} = e^{k\cdot \log(nt_n)} = e^{(nt_n)^{0.99/c_1}\log(nt_n)} \,,
	\]
	so that if $nt_n = \omega\left( \log^{c_1/(0.99)}n \right)$, we get that the tail decays faster than any $1/poly(n)$. 
\end{proof}

	Equipped with the Lemma, Proposition~\ref{prop:signal_noise_decouple} now follows by taking $M=E\odot \Delta$ and $A=XX^{*}$, where $t_n=p_nq_n$ and we note that $\norm{XX^{*}}_{\max} \le 1$, since the blocks of $XX^*$ are all unitary matrices. 

\subsection{Proof of Proposition~\ref{prop:small-beta}}
\label{subsection:small-beta}

Basically, we need to retrace our steps through the proof of Theorem~\ref{thm:edge}, and figure out where things went wrong. The entire argument up until the estimate
\begin{align*}
	 \E \left[\tr(\overline{\Wc}^{2k})\right] = 2^{2k} (2k)^{O(1)} dn^{-k+1}\cdot   \sum_{b=1}^{k} (2k)^{O(2k-2b)}  n^{b} \left(  \frac{d^5+Bd^{4.5}\sigma_n \sqrt{\log(nd)}}{\tau_n }\right)^{2k-2b}
\end{align*}
goes through without any problem, where, as before, we take $k \sim \log^{1.01}(n)$. Noting that this is a geometric sum, when we had $n\tau_n^2 = \omega(\log^c(n))$ we could deduce that its quotient $Q \gg 1$, and then the rest of the argument of the previous proof goes through without any issues. The problems arise, then, when $Q \le 1$.
In that case, we can no longer use the last summand to bound the sum - we need to take the first one ($b=1$). That is, we have 
\begin{align*}
	\E \left[\tr(\overline{\Wc}^{2k})\right] 
	&= O\left( 2^{2k} (2k)^{O(1)} dn^{-k+2} \left(k^{O(1)}\cdot \frac{d^5+Bd^{4.5}\sigma_n \sqrt{\log(nd)}}{\tau_n }\right)^{2k-2} \right) \\
	&= O\left( k^{O(1)} dn \left(2k^{O(1)}\cdot \frac{d^5+Bd^{4.5}\sigma_n \sqrt{\log(nd)}}{\tau_n \sqrt{n}}\right)^{2k-2} \right) \,.
\end{align*}
Now, since we're dealing with $\beta_n \Wc$ instead of simply $\Wc$, we actually need to multiply this expression by $\beta_n^{2k}$, so that
\begin{align*}
	\E \left[\tr(\beta_n\overline{ \Wc})^{2k}\right]
	&= O\left( k^{O(1)} d\beta_n^2 n \left(\beta_n \cdot 2k^{O(1)}\cdot \frac{d^5+Bd^{4.5}\sigma_n \sqrt{\log(nd)}}{\tau_n \sqrt{n}}\right)^{2k-2} \right) \,.
\end{align*}
We now study the quantity that is being raise to the $k$-th power. First, note that $\sigma_n = o(1)$. Indeed, 
\[
\beta_n = \frac{\sqrt{n}\tau_n}{np_nq_n} \ge \frac{\sqrt{n} \sqrt{q_n}\sigma_n}{np_nq_n} \ge \frac{\sqrt{n} \sqrt{q_np_n}\sigma_n}{np_nq_n}
\]
so that $\sigma_n \le \sqrt{np_nq_n}\beta_n = o(1)$, by assumption. Also using $\beta/(\sqrt{n}\tau_n) = 1/(np_nq_n) = o(\log^{-c}(n))$, we see that
\[
\beta_n \cdot 2k^{O(1)}\cdot \frac{d^5+Bd^{4.5}\sigma_n \sqrt{\log(nd)}}{\tau_n \sqrt{n}} = o\left( \log^{-c}(n) k^{O(1)}(d^5+Bd^{4.5}\sigma_n \sqrt{\log(nd)}) \right) \,.
\]
Clearly, since we started with $k\sim \log^{1.01}(n)$, we could have initially picked $c$ as to ensure that this expression is (say) $o(1/\log(n))$ (and this choice is universal, in that it certainly doesn't need to depend on $d$). In that case, for any fixed $t>0$,
\begin{align*}
	\Pr \left( \norm{\beta_n \overline{Wc}} > t \right) 
	&\le t^{-2k} \E \left[\tr(\overline{\beta_n \Wc})^{2k}\right]  \\
	&= o_{d, B}\left( n \log^{O(1)}(n) \left(\frac{\log(n)}{\Omega(t^2)}\right)^{-\Omega(\log^{1.01}(n))} \right) \,,
\end{align*}
which clearly decays faster than any $1/poly(n)$. Hence, by Borel-Cantelli, $\norm{\beta_n \overline{\Wc}}\aslim 0$ which clearly implies that also without the truncation, $\norm{\beta_n{\Wc}}\aslim 0$.

\subsection{Verifying the incoherence condition of \cite{benaych2011eigenvalues}}
\label{subsection:BG-conditions}

As we mentioned before, the result of \cite{benaych2011eigenvalues} is stated for the case where either the low-rank signal or the noise matrix has a rotationaly invariant distribution. Upon a closer examination of their proof, it actually suffices to verify the following:

\begin{lemma}[Isotropy condition in the sense of \cite{benaych2011eigenvalues}]
	\label{lemma:isotropy_BGN}
	Let $x,y \in \C^{nd}$ be two different columns of $\frac{1}{\sqrt{n}}X$ (and therefor eigenvectors of $\frac{1}{n}XX^{*}$). Let $U\in U(nd)$ be the matrix whose columns are the eigenvectors of $\Wc$ (which is independent of $x$, $y$), and denote $u=Ux$, $v=Uy$. Let $\left( a_{k,nd} \right)_{k\le nd} $ for $n=1,2,\ldots$, be a sequence of uniformly bounded real numbers,
	\[
	\sup_{k,n} \abs{a_{k,nd}} \le B \,.
	\]
	Then, almost surely as $n\to\infty$,
	\begin{enumerate}
		\item 
		$\sum_{k=1}^{nd} a_{k,nd} u_k v^*_k \to  0$.
		\item If moreover 
		$\frac{1}{nd} \sum_{k=1}^{nd} a_{k,nd} \to l$,
		then  
		$\sum_{k=1}^{nd} \abs{u_k}^2 a_{k,nd} \to l$.
	\end{enumerate}
\end{lemma}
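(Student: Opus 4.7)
The plan is to exploit a distributional invariance of $\Wc$ under block-diagonal conjugation to reduce both claims to mean-value identities via Schur averaging, and then to obtain the concentration via a moment-method calculation modeled on the proof of Theorem~\ref{thm:edge}.

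\textit{Invariance and coupling.} For any $h_1,\ldots,h_n\in\Gr$ and $V := \mathrm{diag}(\pi(h_1),\ldots,\pi(h_n))$, each off-diagonal $(i,j)$-block of $V\Wc V^*$ has the form $\pi(h_i g_{ij} h_j^{-1})$ (invariant in law by bi-invariance of the Haar measure on $\Gr$) or $\pi(h_i)\epsilon_{ij}\pi(h_j)^*$ (invariant by unitary invariance of the i.i.d. Gaussian law). Hence $\Wc \stackrel{d}{=} V\Wc V^*$, and choosing the $h_i$ i.i.d. Haar and independent of $\Wc$, we couple so that the eigenvector matrix satisfies $U \stackrel{d}{=} VU$; this yields $u \stackrel{d}{=} Vu'$ and $v \stackrel{d}{=} Vv'$ with $(u',v')$ an independent copy of $(u,v)$ and $V$ independent of $(u',v')$. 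Conditional on $(u',v')$, each block satisfies $u^{(i)} = \pi(h_i) u'^{(i)}$ and $v^{(i)} = \pi(h_i) v'^{(i)}$, and Schur's lemma (using the irreducibility of $\pi$) yields $\E_h[\pi(h) M\pi(h)^*] = (\tr M/d) I_d$ for every $d\times d$ matrix $M$. Applied with $M = u'^{(i)}v'^{(i)*}$ or $M = u'^{(i)}u'^{(i)*}$, and writing $\bar{a}^{(i)} := \frac{1}{d}\sum_{j=1}^{d} a_{d(i-1)+j,nd}$, this gives
\begin{equation*}
\E_V\!\Bigl[\sum_{k} a_{k,nd} u_k \overline{v_k} \,\Big|\, u', v'\Bigr] = \sum_{i=1}^{n} \bar{a}^{(i)} \langle v'^{(i)}, u'^{(i)}\rangle, \quad \E_V\!\Bigl[\sum_{k} a_{k,nd} |u_k|^2 \,\Big|\, u'\Bigr] = \sum_{i=1}^{n} \bar{a}^{(i)} \|u'^{(i)}\|^2.
\end{equation*}

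\textit{Exchangeability and variance.} The law of $\Wc$ is also invariant under simultaneous block permutations (the matrices $E,\Delta,\Pi,\Ec$ all are), and an analogous coupling using block-permutation matrices shows that $(\|u^{(i)}\|^2)_{i=1}^{n}$ and $(\langle v^{(i)}, u^{(i)}\rangle)_{i=1}^{n}$ are exchangeable with marginal means $1/n$ and $\langle y,x\rangle/n = 0$, respectively. Thus the expectations of the two quantities in the lemma equal $\frac{1}{nd}\sum_k a_{k,nd}\to l$ and $0$. Since the $h_i$ are independent and $|a_{k,nd}|\le B$, the conditional $V$-variance decomposes into per-block contributions bounded by $B^2 \|u'^{(i)}\|^4$, giving
\begin{equation*}
\mathrm{Var}_V\!\Bigl(\sum_k a_{k,nd}|u_k|^2 \,\Big|\, u'\Bigr) \le B^2 \sum_i \|u'^{(i)}\|^4 \le B^2 \max_i \|u'^{(i)}\|^2,
\end{equation*}
with an analogous bound for the off-diagonal quantity via $|\langle v'^{(i)}, u'^{(i)}\rangle|^2 \le \|u'^{(i)}\|^2\|v'^{(i)}\|^2$. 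Combining this with the conditional-mean identities, both a.s. claims of the lemma reduce to showing that (i) $\max_i \|u^{(i)}\|^2 \to 0$ and (ii) $\sum_i \bar{a}^{(i)}(\|u^{(i)}\|^2 - 1/n)\to 0$ almost surely.

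\textit{Main obstacle: delocalization via moment method.} Both (i) and (ii) follow once one obtains quantitative control on high moments of the block norms $\|u^{(i)}\|^2 = x^* U^* P_i U x$, where $P_i$ is the $i$-th block projector. The plan is to adapt the moment-method of Theorem~\ref{thm:edge} to the marked quadratic forms $x^* \Wc^{2m} x$ and $x^* \Wc^{m} P_i \Wc^{m} x$ for $m \sim \log^{1.01}(n)$, using block-wise Haar averaging again --- now \emph{inside} the expectation --- to collapse the block-valued second moments of the $\pi(g_{ij})$ into scalar Wigner-type moments. The resulting combinatorial sums are dominated by tree-like non-crossing closed walks, now with the additional constraint of visiting a marked vertex in the $P_i$ case, and the Bai--Yin cycle counts of Eq.~\eqref{eq:bai-yin-estimate} deliver moment bounds of the form $\E[(x^* \Wc^{2m} x)^p] = (\int \lambda^{2m}f_{sc}(\lambda)d\lambda)^p (1+o(1))$ and $\E[\|u^{(i)}\|^{2p}] = O(n^{-p})$. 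Markov's inequality together with the Borel--Cantelli lemma then yields the required a.s. delocalization and equidistribution. The principal new technical step is the combinatorial bookkeeping for closed walks with a marked vertex, but this is only a minor variation of the calculation carried out in Section~\ref{sec:rmt}.
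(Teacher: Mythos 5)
Your high-level idea --- randomize via the distributional invariance $\Wc\stackrel{d}{=}V\Wc V^*$ for $V=\mathrm{diag}(\pi(h_1),\ldots,\pi(h_n))$ and then average using Schur orthogonality --- is the right kind of idea (it is, in fact, the correct justification for treating the signal blocks as i.i.d.\ Haar-rotated), but you apply the averaging on the wrong side, and this creates a genuine gap. Because you rotate the blocks of $u=Ux$, which involve the eigenvector matrix $U$ of $\Wc$, your conditional mean is $\sum_i \bar a^{(i)}\|u'^{(i)}\|^2$ with $\|u'^{(i)}\|^2=\|P_iUx\|^2$ still random, and you are forced into the equidistribution claims (i) and (ii). Claim (ii), since it must hold for every bounded sequence $(a_{k,nd})$, is equivalent to $\sum_i\bigl|\,\|P_iUx\|^2-1/n\,\bigr|\to0$, i.e.\ an isotropic delocalization statement for the eigenvectors of the sparse block matrix $\Wc$. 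This is not a ``minor variation'' of the moment computation of Section~\ref{sec:rmt}: the quantities $x^*\Wc^mP_i\Wc^mx$ you propose equal $\|P_iU\Lambda^mU^*x\|^2$, i.e.\ they carry the spectral weight $\Lambda^{m}$, and no polynomial moment of $\Wc$ gives access to the unweighted $x^*U^*P_iUx$ (the only choice removing the weight is $m=0$, which returns the trivial $\|P_ix\|^2$). Proving (i)--(ii) would require local-law/resolvent delocalization machinery that the paper deliberately avoids. Two smaller points: a conditional variance of order $\max_i\|u'^{(i)}\|^2\gtrsim 1/n$ gives only convergence in probability, not the claimed almost-sure convergence (you would need, say, the conditional Hoeffding bound plus Borel--Cantelli, itself contingent on (i)); and the invariance $\epsilon_{ij}\mapsto\pi(h_i)\epsilon_{ij}\pi(h_j)^*$ fails when $\epsilon_{ij}$ is real Gaussian but $\pi$ is complex.

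The detour is avoidable, and the paper's proof shows how: keep all the randomness on the signal side. Writing the relevant quantity as a bilinear form $x^*A_ny$ with $A_n=U^*\mathrm{diag}(a_{1,nd},\ldots,a_{nd,nd})U$ Hermitian, $\norm{A_n}\le B$, and independent of $(x,y)$, Schur orthogonality gives $\E[xx^*]=\frac{1}{nd}I$ and $\E[xy^*]=0$, hence $\E[x^*A_nx]=\frac{1}{nd}\tr A_n$ and $\E[x^*A_ny]=0$ \emph{exactly}, with no residual dependence on the eigenvectors of $\Wc$ --- the point being that each block of $x$ has norm exactly $1/\sqrt{n}$. Concentration then follows from Talagrand's inequality for convex Lipschitz functions of the $n$ independent blocks of $x$ (each bounded by $\sqrt{d/n}$), yielding summable Gaussian tails and the almost-sure statement. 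If you re-route your Schur averaging so that the Haar rotation acts on the blocks of $x$ rather than on the blocks of $Ux$, your conditional mean becomes deterministic and the need for (i)--(ii) disappears entirely.
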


\begin{proof}
	This follows from the next lemma (Lemma \ref{lemma:trace}) with the choice of 
	\[
	A_n = U^{*} \textrm{diag}(a_{1,nd},\ldots,a_{nd,nd})U \,.
	\]
\end{proof}

\begin{lemma}[Isotropy in terms of traces]
	\label{lemma:trace}
	Let $A_n \in \C^{nd \times nd}$ be a sequence of Hermitian matrices with uniformly bounded operator norm, $\norm{A_n} \le B$. Then almost surely as $n\to\infty$, we have
	\begin{enumerate}
		\item $ u^* A_n u - \frac{1}{nd} \tr A_n \to 0$.
		\item $ u^* A_n v \to 0$.
	\end{enumerate}
\end{lemma}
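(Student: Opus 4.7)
The argument exploits two distributional symmetries of $\Wc$, both inherited by its eigenvector matrix $U$. First, for any fixed block-diagonal unitary $V = \mathrm{diag}(\pi(h_1),\ldots,\pi(h_n))$ with $h_i \in \Gr$, we have $V\Wc V^* \stackrel{d}{=} \Wc$, because the Haar distribution of each $g_{ij}$ is invariant under $g_{ij} \mapsto h_i g_{ij} h_j^{-1}$, the Gaussian blocks of $\Ec$ are unitarily invariant, and the Bernoulli masks $E,\Delta$ commute with conjugation by $V$. Hence $VU \stackrel{d}{=} U$. Block-exchangeability of $\Wc$ additionally gives $P_\sigma U \stackrel{d}{=} U$ for any $\sigma \in S_n$, so that $Ux \stackrel{d}{=} P_\sigma(Ux)$.

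\emph{First moment.} With $u = Ux$ and $v = Uy$ as in the proof of Lemma~\ref{lemma:isotropy_BGN}, draw an independent $V$ from Haar measure on $\pi(\Gr)^n$ and use $U \stackrel{d}{=} VU$ to write
\[
\E[u^* A_n u] = \E_U \E_V\bigl[(Vu)^* A_n (Vu)\bigr].
\]
Expanding in blocks and invoking Schur orthogonality for the non-trivial irreducible $\pi$ (so that $\E_V[V_i] = 0$ and $\E_V[V_i^* M V_i] = (\tr M / d)\,I_d$), the off-diagonal $(i\ne j)$ contributions vanish and the diagonal ones collapse to $\tfrac{1}{d}\sum_i \tr(A_{n,ii})\,\norm{u_i}^2$. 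Permutation invariance forces $\E\norm{u_i}^2$ to be constant in $i$; combined with $\sum_i \norm{u_i}^2 = \norm{u}^2 = 1$, this gives $\E\norm{u_i}^2 = 1/n$, so $\E[u^* A_n u] = \tfrac{1}{nd}\tr A_n$. The analogous computation applied to $u^* A_n v$, together with $x^* y = 0$ (which after permutation-averaging turns the diagonal contribution $\tfrac{1}{d}\sum_i \tr(A_{n,ii})\,u_i^* v_i$ into a multiple of $x^*y$), yields $\E[u^* A_n v] = 0$.

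\emph{Concentration.} Applying the same $V$-symmetrization to $(u^* A_n u)^2$ and to $|u^* A_n v|^2$ reduces each to a fourfold integral over $\pi(\Gr)^n$. Two iterations of Schur orthogonality isolate the main terms $(\tfrac{1}{nd}\tr A_n)^2$ and $0$, respectively, and bound the residual by $O(B^2/n)$; this gives $\mathrm{Var}(u^* A_n u) = O(B^2/n)$ and $\mathrm{Var}(u^* A_n v) = O(B^2/n)$. Iterating the symmetrization once more yields a fourth central moment of order $O(B^4/n^2)$, and Markov's inequality combined with the Borel--Cantelli lemma upgrades convergence in probability to almost-sure convergence.

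\emph{Main obstacle.} The first-moment identity is an essentially algebraic consequence of Schur orthogonality, using crucially the irreducibility of $\pi$ (Assumption~\ref{assum:irreducible}) and the fact that each $\pi(g_i)$ is unitary. The technical core is the fourth-moment bound: controlling the fourth-order $V$-integrals requires careful combinatorial bookkeeping of block-index pairings under Haar integration on $\Gr^n$, in the same spirit as (though simpler than) the Wigner-type moment argument of Section~\ref{sec:rmt} for $\norm{\Wc}$. Irreducibility is essential, since it is what guarantees that Schur orthogonality correctly isolates the leading contribution with the right constant.
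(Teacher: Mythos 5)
Your first-moment computation is essentially the paper's, repackaged: both reduce $\E[u^*A_nu]=\tfrac{1}{nd}\tr A_n$ to Schur orthogonality for the nontrivial irreducible $\pi$. The difference is \emph{where you place the randomness}: you keep $x$ fixed and exploit the gauge symmetry $V\Wc V^*\stackrel{d}{=}\Wc$ (plus block exchangeability) to average over the eigenvector matrix $U$, whereas the paper conditions on everything coming from $\Wc$ (hence on $A_n$) and uses the randomness of the signal vector itself, whose $n$ blocks are independent and deterministically bounded by $\sqrt{d/n}$. For the first moment either choice works.

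The gap is in your concentration step. After the $V$-symmetrization, the conditional mean is $\tfrac{1}{d}\sum_i \tr(A_{n,ii})\,\norm{u_i}^2$ with $u=Ux$, so by the law of total variance you must control both $\mathrm{Var}_U\bigl(\sum_i \tr(A_{n,ii})\norm{u_i}^2\bigr)$ and the conditional variance, whose diagonal and paired-edge terms are of order $\sum_i\norm{u_i}^4\,B^2$ and $\sum_{i\ne j}\norm{u_i}^2\norm{u_j}^2\norm{(A_n)_{ij}}_F^2$. The symmetries you invoke pin down only $\E\norm{u_i}^2=1/n$; they say nothing about the fluctuations of $\sum_i c_i\norm{u_i}^2$ or about $\sum_i\norm{u_i}^4$. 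A vector $Ux$ localized on one exchangeably-chosen block is consistent with everything you have used, and for it your variance is $\Theta(1)$, not $O(1/n)$. Closing this would require a delocalization statement for the eigenvectors of $\Wc$ tested against $x$, which is a substantial random-matrix fact not established here (and, as a side point, the fourth-moment terms in a single $V_i$ are not given by Schur orthogonality — they depend on the decomposition of $\pi\otimes\pi$ — though crude bounds would do there). The paper sidesteps all of this: with the randomness in the independent, uniformly bounded blocks of the signal vector and $A_n$ fixed, the quadratic form is a convex $O(B)$-Lipschitz function of $n$ independent bounded vectors, so Talagrand's inequality gives tails $\exp(-\Omega(nt^2/(dB^2)))$ directly, and polarization handles the bilinear form; you should recast your argument in that frame.
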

\begin{proof}
	Observe that 
	\[
	\E \left[   u^* A_n v  \right] = 0, \quad 
	\E \left[   u^* A_n u \right] = \frac{1}{nd}\tr A_n 
	\]
	so all that remains now is to show that our bilinear forms concentrate around their expectations. We may assume that $A_n$ is positive semi-definite; otherwise we can treat separately its positive and negative parts: $A_n=A_n^+-A_n^-$. Let $u_i \in \C^{d} \simeq \R^{2d}$ be the part of $u$ that belongs to the $i$-th block. Note that $\norm{u_i} \le \sqrt{\frac{d}{n}}$. 
	Then 
	\[
	f(u_1,\ldots,u_n) =  u^* A_n u 
	\]
	is a convex function, with $\norm{\nabla f(u)} \le 2 \norm{A_n} \norm{u} \le 2B$, and therefor $2B$-Lipschitz. By Talagrand's concentration inequality for convex Lipschitz  functions, see for example, Exercise 6.5 in \cite{boucheron2013concentration},
	\[
	\Pr \left[ 
	\abs{ u^* A_n u - \E \left[   u^* A_n u \right]} > t
	\right] \le
	2e^{-O\left( \frac{t^2}{dB/n} \right)} \,.
	\]
	By the Borel-Cantelli lemma, ${ u^* A_n u - \E \left[   u^* A_n u \right]} \aslim 0 $, and so (1) is proved. As for (2), using
	\begin{equation*}
	\begin{split}
	&(u+v)^{*}A_n(u+v)=u^{*}A_n u + v^{*}A_n v + u^{*}A_n v + v^{*}A_n u \\
	&(u+iv)^{*}A_n(u+iv)=u^{*}A_n u + v^{*}A_n v + iu^{*}A_n v - iv^{*}A_n u 
	\end{split}
	\end{equation*}
	and repeating the argument above for every quadratic form separately, we get that
	\[
	u^{*}A_n v - \E \left[ u^{*}A_n v \right] = u^{*}A_n v \aslim 0\,, 
	\]
	as required.
\end{proof}


\bibliographystyle{alpha}
\bibliography{ref}

\newpage
\appendix
	\section*{Appendix: Harmonic analysis on compact groups}
	\label{sec:harmonic}

	In this appendix we provide background on 
	compact groups and their representations, which is assumed in the main text.

	\begin{definition}[Group representations]
		A unitary \emph{representation} of a group $\Gr$ is a homomorphism $\pi : \Gr \to U(\Hc)$, where $\Hc$ is some Hilbert space and $U(\Hc)$ is the group of unitary mappings on $\Hc$. 
		\begin{enumerate}
			\item A representation is said to be \emph{irreducible} if it has no proper sub-representation. That is, there is no proper subspace $\Vc \subsetneq \Hc$ such that
			\[
			\pi(\Gr)\Vc := \left\{ \pi(g)v \,:\, g\in\Gr,v\in\Vc,  \right\} \subset \Vc \,.
			\]
			Equivalently, an irreducible representation is one in which \emph{every} non-zero vector $v$ is cyclic, meaning that the orbit $\pi(\Gr)v:=\left\{\pi(g)v\,:\,g\in \Gr \right\}$ spans  the entirety of $\Hc$ (in the case where $\Hc$ is infinite-dimensional, it suffices that the span of the orbit is merely dense in $\Hc$, that is, its closure is $\Hc$). 
			\item The \emph{dimension} of a representation is the dimension of the underlying Hilbert space, $d = dim(\Hc)$. Whenever we use the term ``group representation'' in this article, we implicitly mean that it is finite-dimensional.

		\end{enumerate}
	\end{definition}

	The class of compact groups is particularly suitable to do probability on, because there one has a natural notion of a ``uniform distribution'' to work with, 
	\begin{definition}
		The \emph{(normalized) Haar measure}, $\mu$, on a compact group $\Gr$ is the unique left-invariant probability measure on $\Gr$. By left-invariance, we mean that for all $f\in L^1(\Gr)$,
		\[
		\int_{\Gr} f(hg) d\mu(g) = \int_{\Gr} f(g) d\mu(g)
		\]
		for all $h\in\Gr$. 
	\end{definition}

	Fix some orthonormal basis $e_1,\ldots,e_d$ of $\Hc$. With some abuse of notation, we use $\pi(g)$ to refer both to the image of $g$ as a unitary mapping, and to the unitary matrix corresponding to this mapping with respect to the basis $e_1,\ldots,e_d$. 

	Random block matrices coming from irreducible representations are very easily amenable to analysis. This is because the first two moments of their entries are all the same (the entries of course have infinitely many moments, since they are bounded), and different entries are always uncorrelated. Indeed, such matrices are centered (have mean $0$) and have variance $1/d$, as follows by the following:

	\begin{proposition}
		Suppose that $\pi$ is a nontrivial irreducible representation (in particular, $G\ne \left\{ e \right\}$). Then 
		\begin{equation}
			\int_{\Gr} \pi(g) d\mu(g) = 0
		\end{equation}
		(where $\pi(g) \in U(d)$ is a $d\times d$ unitary matrix, and the integral is entry-wise). 
	\end{proposition}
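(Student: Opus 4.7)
Let $M = \int_{\Gr} \pi(g)\, d\mu(g) \in \C^{d\times d}$. The plan is to show that $M$ is an intertwiner of $\pi$ with itself and then apply Schur's lemma, which for an irreducible representation forces $M$ to be a scalar multiple of the identity; the scalar is then pinned down to zero by the fact that $\pi$ is nontrivial.

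\textbf{Step 1: Invariance.} Fix any $h \in \Gr$. Using $\pi(h)\pi(g) = \pi(hg)$ and the left-invariance of $\mu$,
\[
\pi(h) M \;=\; \int_{\Gr} \pi(hg)\, d\mu(g) \;=\; \int_{\Gr} \pi(g)\, d\mu(g) \;=\; M.
\]
(Right-invariance of the Haar measure on a compact group gives the analogous $M\pi(h) = M$, but the left identity alone will suffice below.)

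\textbf{Step 2: Schur's lemma.} The identity $\pi(h) M = M$ for every $h$ means that the image of $M$ is a $\pi$-invariant subspace of $\C^d$. More precisely, writing $M\pi(h)=M$ (by right invariance) together with $\pi(h)M=M$ shows that $M$ commutes with $\pi(h)$ for all $h$; since $\pi$ is irreducible, Schur's lemma yields $M = \lambda I$ for some $\lambda \in \C$. Alternatively, and more directly from $\pi(h) M = M$ alone, the range of $M$ is pointwise fixed by every $\pi(h)$, and the kernel of $M$ is $\pi$-invariant; irreducibility forces each to be either $\{0\}$ or $\C^d$, so either $M = 0$ (in which case we are done) or $M$ is invertible and every $\pi(h)$ fixes all of $\C^d$ pointwise. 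The latter contradicts the assumption that $\pi$ is nontrivial.

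\textbf{Step 3: Conclude.} In the Schur's lemma route, from $M = \lambda I$ and $\pi(h) M = M$ we obtain $\lambda\, \pi(h) = \lambda I$ for every $h$; nontriviality of $\pi$ provides some $h_0$ with $\pi(h_0) \neq I$, which forces $\lambda = 0$. Hence $M = 0$, as claimed. No step here is a real obstacle: the only subtlety is invoking right-invariance of the Haar measure on a compact group (used to get the two-sided commutation, though the one-sided argument in Step 2 is self-contained), and the application of Schur's lemma to the finite-dimensional irreducible $\pi$.
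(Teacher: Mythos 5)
Your proof is correct and rests on the same key observation as the paper's: left-invariance of the Haar measure makes the averaged matrix $M$ satisfy $\pi(h)M=M$, so its range consists of $\Gr$-fixed vectors, which irreducibility plus nontriviality force to be zero. The Schur's-lemma packaging (and the bi-invariance remark) is a cosmetic variant of the paper's vector-by-vector invariant-subspace argument, not a genuinely different route.
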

	\begin{proof}
		Let $0 \ne v \in \C^{d}$ be an arbitrary vector. Then for all $h \in \Gr$, 
		\[ \left( \int_{\Gr} \pi(g) d\mu(g) \right) v = \int_{\Gr} \pi(g)v d\mu(g) = \int_{\Gr} \pi(hg)v d\mu(g) = \pi(h) \int_{\Gr} \pi(g)v d\mu(g)\]
		which implies that $u = \int_{\Gr} \pi(g)v d\mu(g)$ is preserved under $\Gr$. If $d=1$, this mean that $u=0$ because for some $h\ne e$ we have $\pi(h) \ne 1$ (since the representation is non-trivial). Otherwise, either $u=0$ or $\textrm{span}(\pi(\Gr)u)$ is a $1$-dimensional invariant subspace. The latter option cannot be possible, since the representation is irreducible. 
	\end{proof}

	\begin{proposition}[Schur's orthogonality relations]\label{eq:schur}
		The matrix elements (with respect to any orthonormal basis) of an irreducible representation satisfy the following orthogonality relation:
		\begin{equation}
			\int_{\Gr} \pi(g)_{ij} \pi(g)_{lk}^{*} d\mu(g) = \frac{1}{d} \delta_{i=l} \delta_{j=k} \,.
		\end{equation}
		In other words, $\left\{ \sqrt{d} \pi(g)_{ij} \right\}_{i,j=1,\ldots,d}$ is an orthonormal set in $L^2(\Gr)$. 
	\end{proposition}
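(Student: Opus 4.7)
The approach is the standard derivation of Schur orthogonality via Schur's lemma applied to the averaging operator. The idea is to package the integrals $\int \pi(g)_{ij}\pi(g)^*_{lk}\,d\mu(g)$ as the entries of a certain $\Gr$-equivariant operator, and then pin that operator down using irreducibility of $\pi$.

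Concretely, for any linear operator $A \in \C^{d\times d}$, define
\[
T_A \;=\; \int_{\Gr} \pi(g)\, A\, \pi(g)^{*} \, d\mu(g) \,.
\]
The first step is to show that $T_A$ commutes with every $\pi(h)$, $h \in \Gr$. Indeed, using that $\pi$ is a homomorphism and that the Haar measure $\mu$ is left-invariant,
\[
\pi(h)\,T_A \;=\; \int_{\Gr} \pi(h g)\,A\,\pi(g)^{*}\,d\mu(g) \;=\; \int_{\Gr} \pi(g')\,A\,\pi(h^{-1}g')^{*}\,d\mu(g') \;=\; T_A\,\pi(h) \,,
\]
where we substituted $g'=hg$ and used $\pi(h^{-1}g')^{*} = \pi(g')^{*}\pi(h)$.

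The second step invokes Schur's lemma. Since $\pi$ is irreducible (and $T_A$ acts on a complex vector space), any operator commuting with all $\pi(h)$ must be a scalar multiple of the identity; pick an eigenvector $v$ of $T_A$ with eigenvalue $\lambda$, and observe that the $\lambda$-eigenspace is $\pi(\Gr)$-invariant, hence equals the whole of $\C^{d}$ by irreducibility. Therefore $T_A = c(A)\,I$ for a scalar $c(A)$. Taking traces on both sides, and using that $\tr(\pi(g)A\pi(g)^{*}) = \tr(A)$ (cyclicity of trace plus unitarity of $\pi(g)$), we get
\[
d\cdot c(A) \;=\; \tr(T_A) \;=\; \int_{\Gr} \tr(A)\,d\mu(g) \;=\; \tr(A) \,,
\]
so $c(A) = \tr(A)/d$.

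The final step is to plug in $A = e_j e_k^{*}$, the rank-one matrix with a single $1$ in position $(j,k)$. For this choice, $\tr(A) = \delta_{j=k}$, and a direct calculation of the $(i,l)$ entry of $T_A$ gives
\[
(T_A)_{il} \;=\; \int_{\Gr} \pi(g)_{ij}\,\overline{\pi(g)_{lk}} \, d\mu(g) \;=\; \int_{\Gr} \pi(g)_{ij}\,\pi(g)^{*}_{lk}\,d\mu(g) \,.
\]
Combined with $T_A = (\delta_{j=k}/d)\,I$, this yields exactly $\frac{1}{d}\,\delta_{i=l}\,\delta_{j=k}$, as claimed. There is no real obstacle in this argument; the only subtle point is correctly matching indices when passing between the coordinate-free statement $T_A = c(A) I$ and the matrix-element formulation, and correctly identifying the conjugation coming from $\pi(g)^{*}$.
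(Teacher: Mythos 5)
Your proof is correct and complete: the averaging operator $T_A=\int_{\Gr}\pi(g)A\pi(g)^{*}\,d\mu(g)$ is shown to be $\Gr$-equivariant via left-invariance of Haar measure, Schur's lemma forces $T_A=\frac{\tr A}{d}I$, and the choice $A=e_je_k^{*}$ extracts exactly the claimed identity. The paper does not prove this proposition itself but defers to Folland (Theorem 5.8); your argument is precisely the standard textbook derivation being cited, so there is nothing to compare and nothing to fix.
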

	\begin{proof}
		This is a fundamental result in representation theory. See, for example, the book \cite{folland2016course} (Theorem 5.8 in page 139). 
	\end{proof}

	To convince the reader that the class of irreducible representations indeed captures many of the instances of group synchronization encountered in practice, we provide some examples next.

	\begin{example}
		\begin{enumerate}
			\item Every $1$-dimensional group representation is irreducible. In particular, the obvious representation of $U(1) = \left\{ e^{i\theta} \,:\,\theta\in[0,2\pi] \right\}$ is irreducible. This was the example originally considered in the analysis of $\cite{singer2011angular}$. 
			\item The representations of $U(d)$ for $d \ge 1$, $SU(d)$ for $d \ge 2$ as rotation matrices on $\C^{d}$ are  irreducible (this is obvious). The same is true for $O(d)$ with $d \ge 1$ and $SO(d)$ with $d \ge 3$. 
			\begin{remark}
				Note that $SU(1)=\left\{ Id \right\}$ is trivial, and $O(1)=\left\{ \pm Id \right\} \simeq \Z_2$.
			\end{remark}
			Let's prove the claim for $\Gr=O(d)$ and $\Gr=SO(d)$. Let $v \ne 0 \in \C^{d}$ be some vector, which we can decompose as $v = a + ib$ for $a,b\in \R^{d}$. If $a\ne 0$ and there is some $\alpha \in \R$ with $b = \alpha a$, then 
			\[(1-\alpha i)v = (1+\alpha^2)a \ne 0 \] 
			and so $\textrm{span}(\Gr v)$ must contain $\R^{d}$ and therefor $\C^{d}$. The interesting case is $\dim \textrm{span}(a,b)=2$. But then we can find some rotation $A \in \Gr$ (here we use that $d \ge 2$ for $O(d)$ and $d \ge 3$ for $SO(d)$!) with $Ab=-b$ and $Aa \ne -a$, so that $v + Av \ne 0 \in \R^{d}$ which again implies that $\textrm{span}(\Gr v)$ contains $\R^{d}$. 
			\begin{remark}
				The representation of $SO(2)$ as a rotation matrix is not irreducible. In fact, since $SO(2)$ is abelian, one can show that all of its irreducible representations must have dimension $1$.  
			\end{remark}
			\item Identify $\Z_{L}$ with cyclic finite subgroups of $U(1)$, i.e, $\Z_{L} \simeq \left\{ 1,e^{i 2\pi /L},\ldots,e^{i 2\pi (L-1)/L} \right\}$. Then the action of $\Z_{L}$ on $\C$ by multiplication is a $1$-dimensional irreducible representation. 
		\end{enumerate}
	\end{example}

	We emphasize that the assumption that the group representation we chose is irreducible is only ever used in the analysis of the spectral method through the calculation of the first and second moments of the corresponding block matrices. In particular, we need that for our representation and choice of orthonormal basis $e_1,\ldots,e_d\in \Hc$, it holds that 
	\begin{enumerate}
	 	\item All the matrix elements have zero mean.
	 	\item All the elements have the same variance (which must in fact be $1/d$, since the matrix $\pi(g)$ is unitary).
	 	\item Different matrix elements are uncorrelated (orthogonal in $L^2(\Gr,\mu)$).
	 \end{enumerate} 
	It turns out, however, that conditions (2), (3) already imply that the representation in question must be irreducible. This is a basic argument in the representation theory of finite and compact groups, which we shall now present for completeness. By the Peter-Weyl theorem, every (unitary) representation of a compact group $\Gr$ decomposes into a direct sum of (finite-dimensional, which is obvious in our case but not at all in general) irreducible representations,
	\[
	\pi = \pi_1 \oplus \ldots \oplus \pi_k : G \to U(\Hc_1\oplus\ldots\oplus\Hc_{k}) \,.
	\]
	Consider the trace,
	\[
	\chi_{\pi}(g) = \tr \pi(g) = \sum_{i=1}^{k} \tr \pi_{k}(g) = \sum_{i=1}^{k} \chi_{\pi_{k}}(g)
	\]
	Its is known that for two irreducible representations $\pi$ and $\rho$, their traces satisfy the following orthogonality relation,
	\[
	\int_{\Gr} \chi_{\pi}(g)\chi_{\rho}(g) d\mu(g) = \begin{cases}
		1 \quad \text{$\pi$ and $\rho$ are unitarily equivalent} \\
		0 \quad \text{otherwise}
	\end{cases}
	\]
	(this follows from a more complete statement of Schur's orthogonality relations than what we gave before). But notice that properties (2) and (3) above imply that $\norm{\chi_{\pi}}^2_{L^2(\Gr)}=1$, so now
	\[
	1 = \norm{\chi_{\pi}}^2_{L^2(\Gr)} = \norm{\sum_{i=1}^{k} \chi_{\pi_{k}}}^2_{L^2(\Gr)} \ge \sum_{i=1}^{k}  \norm{\chi_{\pi_{k}}}^2_{L^2(\Gr)} = k
	\]
	and it must be that $k=1$, or equivalently, that $\pi$ is irreducible. 

	The above discussion doesn't imply, of course, that the moment conditions above are necessary for our main results on the spectral method to hold. These are simply sufficient conditions for Theorems \ref{thm:bulk} and \ref{thm:edge} to hold - and these two encapsulate (almost) all of the information we need to derive our asymptotic results. It is instructive to consider the following example of a group representation that is not irreducible.

	\begin{example}[The case of $SO(2)$]
		Consider the representation of $SO(2)$ as rotation matrices on $2$-dimensional space $\C^{2}$. We have a homomorphism $S^{1} \to SO(2)$,
		\begin{equation}
			e^{i\theta} \mapsto A(\theta) = \begin{bmatrix}
				\cos(\theta) & \sin(\theta) \\
				-\sin(\theta) & \cos(\theta) 
			\end{bmatrix} \,.
		\end{equation}
		Under this homomorphism, the pull-back of Haar measure of $SO(2)$ on $S^{1}$ is its corresponding Haar (uniform) measure, $\frac{1}{2\pi} d\theta$. While the matrices $A(\theta)$ are centered and indeed have the right variance $1/2$, the correlations between the diagonal and cross-diagonal elements do not vanish - they are $1/2$ and $-1/2$ respectively. While Theorem \ref{thm:bulk} does hold for block matrices of this form, our proof of Theorem \ref{thm:edge} doesn't readily adapt to this case; numerical evidence suggests, however, that this indeed holds for such block matrices.
	\end{example}

\end{document}